\documentclass[jmp,%
 amsmath,amssymb,
preprint,11pt]{revtex4-1}

\usepackage{graphicx}
\usepackage{dcolumn}
\usepackage{bm}
\usepackage[hyperindex]{hyperref}
\usepackage{amsmath, amssymb, amsfonts, amsthm, amscd}
\usepackage{mathrsfs}

\newtheorem{theo}{Theorem}[section]
\newtheorem{prop}[theo]{Proposition}
\newtheorem{lemma}[theo]{Lemma}
\newtheorem{cor}[theo]{Corollary}
\newtheorem{rmk}[theo]{Remark}

\newcommand{\R} {\mathbb{R}}
\newcommand{\C} {\mathbb{C}}


\setlength{\topmargin}{-0.5in} \setlength{\textheight}{8.8in}
\setlength{\oddsidemargin}{-.1in} \setlength{\textwidth}{6.6in}

\newcommand{\be}{\begin{equation}}

\newcommand{\ee}{\end{equation}}
\hyphenation{ei-gen-va-lues pro-blem pro-ducts}

\begin{document}



\title[Asymptotic stability for standing waves of a NLS equation with concentrated nonlinearity in dimension three. II]{Asymptotic stability for standing waves of a NLS equation with concentrated nonlinearity in dimension three. II}

\author{Riccardo Adami}\email{riccardo.adami@polito.it}\affiliation{Dipartimento di Scienze Matematiche, Politecnico di Torino, C.so Duca degli Abruzzi 24 10129 Torino, Italy}
\author{Diego Noja}\email{diego.noja@unimib.it}\affiliation{Dipartimento di Matematica e Applicazioni, Universit\`a di Milano Bicocca, Via R. Cozzi 55 20125 Milano, Italy}%
\author{Cecilia Ortoleva}\email{cecilia.ortoleva@gmail.com}\affiliation{PriceWaterhouseCoopers Italia, via Monte Rosa 91, 21049, Milano }

\date{\today}
\begin{abstract}In this paper the study of asymptotic stability of
standing waves for a model of Schr\"odinger equation with spatially concentrated
nonlinearity in dimension three, begun in \cite{ADO}, is continued.
The nonlinearity studied is a power nonlinearity concentrated at the point $x=0$ obtained considering a contact (or $\delta$) interaction with strength
$\alpha$, which consists of a singular perturbation of the Laplacian
described by a selfadjoint operator $H_{\alpha}$, and letting the strength $\alpha$ depend on the wavefunction in a prescribed way: $i\dot u= H_\alpha
u$, $\alpha=\alpha(u)$. For power nonlinearities in the range
$(\frac{1}{\sqrt 2},1)$ there exist orbitally stable standing waves
$\Phi_\omega$, and the linearization around them admits two imaginary
eigenvalues (absent in the range $(0,\frac{1}{\sqrt 2})$ previously
treated) which in principle could correspond to non decaying states,
so preventing asymptotic relaxation towards an equilibrium orbit. This
situation is usually treated requiring the validity of a {\it
  nonlinear Fermi golden rule}, which assures the presence of a
dissipative term in the modulation equation ruling the complex
amplitudes $z(t), \bar z(t)$ associated to the discrete part of the
linearized spectrum. Here without the use of FGR it is proven that, in
the range $(\frac{1}{\sqrt 2},\sigma^*)$ for a certain $\sigma^* \in
(\frac{1}{\sqrt{2}}, \frac{\sqrt{3} +1}{2 \sqrt{2}}]$,  the dynamics
  near the orbit of a standing wave asymptotically relaxes in the
  following sense: consider an initial datum $u(0)$ near the standing
  wave $\Phi_{\omega_0}, $ written in the form  
$u(0) = u_0 = e^{i\omega_0 +\gamma_0} \Phi_{\omega_0} +e^{i\omega_0
+\gamma_0} [(z_0 +\overline{z_0}) \Psi_1 +i (z_0 -\overline{z_0})
\Psi_2] +f_0 $
with $z_0$ small and $f_0$ small in energy and in a certain weighted space $L^1_{w}$; then the solution $u(t)$ can be asymptotically decomposed as
$$u(t) = e^{i\omega_{\infty} t +i b_1 \log (1 +\epsilon k_{\infty} t)}
\Phi_{\omega_{\infty}} +U_t*\psi_{\infty} +r_{\infty}, \quad
\textrm{as} \;\; t \rightarrow +\infty,$$

\noindent where $\omega_{\infty}$, $k_{\infty} > 0$, $b_1
\in \R$, and $\psi_{\infty}$ and $r_{\infty} \in L^2(\R^3)$ , $U(t)$ is the free Schr\"odinger group and
$$\| r_{\infty} \|_{L^2} = O(t^{-1/4}) \quad \textrm{as} \;\; t \rightarrow
+\infty\ .$$
We stress the fact that in the present case and contrarily to the main
results in the field, the admitted nonlinearity is $L^2$-subcritical.  
%
\end{abstract}

\maketitle

\section{Introduction}
We continue here the analysis of a model of nonlinear Schr\"odinger equation with a concentrated nonlinearity in dimension three begun in \cite{ADO}.
We recall that such a model is defined by the equation
\begin{equation}    \label{eq1}
i\frac{du}{dt} = H_{\alpha(u)} u.
\end{equation}
where the nonlinear operator $H_{\alpha(u)}$ is a point interaction
(or ``delta potential'') in dimension three with a strength
$\alpha=\alpha(u)$ depending on the wavefunction $u$ in a prescribed
way. Here the nonlinearity is of power type and focusing. More
precisely, the domain of $H_{\alpha(u)}$ is given by
\be\label{domain}
D(H_{\alpha(u)}) = \left\{ u \in L^2(\R^3): \; u(x) = \phi(x) +q G_0(x) \,\ 
\textrm{with}\ \; \phi \in H^2_{loc}(\R^3), \, \Delta \phi \in
L^2(\R^3), \right.
\ee
$$\left. q \in \C,\ \ \lim_{x \rightarrow 0} (u(x) - q
G_0(x))=\phi(0) = - \nu |q|^{2 \sigma} q  \},\ \	
G_0(x) = \frac{1}{4\pi |x|}, \ \ \nu>0\ . \right\}
$$
\noindent and the action is given by \be\label{action}
H_{\alpha} u(x) = -\Delta \phi(x)\ ,\ \  x\in \R^3\setminus\{0\}\ . 
\ee

The nonlinearity, displayed in the boundary condition, means that the
value at zero of the ``regular part'' $\phi$ of the element domain is
related in a nonlinear way to the so called  ``charge'' $q$ of the
same element domain, which is the coefficent of the ``singular part''
$G_0$. According to \eqref{domain}, our choice for the function
$\alpha (u)$ is
$$
\alpha (u) \ = \ - \nu |q|^{2 \sigma}, \qquad \nu,\sigma > 0
$$ When $\sigma=0$ one obtains the well known contact interaction (of
strength $\alpha=-\nu$), which, due to the sign, is a so called
attractive $\delta$ interaction (see \cite{Albeverio}). When
$\sigma\neq 0$ the nonlinearity is in some sense acting at the single
point zero, coinciding with the location of the singularity of the
contact interaction. This is the origin of the denomination of
concentrated nonlinearity (see \cite{ADFT,ADFT2,NP}). The above model
can be derived from a standard nonlinear Schr\"odinger equationquation with a
inhomogeneous nonlinearity, i.e. with a nonlinearity space dependent,
and shrinking at a point in a suitable scaling limit. This derivation is
rigorously treated in \cite{CFNT1} for the one dimensional case and in
the forthcoming paper \cite{CFNT2} for the present three dimensional case. 
The problem \eqref{eq1} describes a Hamiltonian system for
which global well posedness holds in the nonlinearity range $\sigma\in
(0,1)$.
\noindent More precisely we endow the space $L^2(\R^3, \C)\simeq
L^2(\R^3, \R)\oplus L^2(\R^3, \R)$ (assuming the usual identification
$z\simeq (\Re z, \Im z)$) with the symplectic form 
\begin{equation}	\label{formasimplettica}
\Omega(u,v)=\Im \int_{\R^3} u{\overline v} \ dx = \int_{\R^3} (\Re v \Im u -\Im v \Re u) dx =\int_{\R^3} (u_2 v_1 -u_1 v_2) dx
\end{equation}
Of course $H^1(\R^3,\R)\oplus H^1(\R^3,\R)$ is a symplectic submanifold, and
we associate to \eqref{eq1} the Hamiltonian functional coinciding with
the total conserved energy 
associated to the evolution equation \eqref{eq1}, that is given by
\begin{equation}	\label{energy}
 E(u(t)) =\frac{1}{2}\|\nabla \phi\|_{L^2}^2 -\frac{\nu}{2\sigma +2}
|q|^{2\sigma +2}, \ \ u=\phi+qG_0 \in V.
\end{equation}
where $V$ is the domain of {\it finite energy states}
\begin{equation}    \label{def-V}
V = \{ u \in L^2(\R^3): \; u(x) = \phi(x) +q G_0(x), \, \textrm{with}
\; \phi \in L^2_{loc}(\R^3), \; \nabla \phi \in L^2(\R^3), \, q \in
\C \},
\end{equation}

\noindent which is a Hilbert space endowed with the norm
\begin{equation}	\label{norm-V}
\|u\|^2_V = \| \nabla \phi \|_{L^2}^2 + |q|^2.
\end{equation}

\noindent Note that for a generic element $u$ of the form domain the
charge $q$ and its regular part $\phi$ are independent of each other. 

\par\noindent
Correspondingly, the NLSE \eqref{eq1} can be rephrased in the hamiltonian form
\begin{equation}\label{eqham}
\frac{du}{dt}=J \ E^{'} (u(t))\ .
\end{equation} 
where $J = \left( \begin{array}{cc} 0 & 1 \\ -1 & 0 \end{array} \right)$ is the standard symplectic matrix.
In \cite{ADO} the case of power $\sigma\in (0,\frac{1}{\sqrt 2})$ was
studied,  showing existence of nonlinear bound states, orbital
stability and relaxation to (relative) equilibrium asymptotically in
time. For such values of the power nonlinearity $\sigma$ the
asymptotic analysis is simplified by the fact that linearization
around standing waves has no eigenvalues apart from the zero
eigenvalue, always existing due to the gauge $U(1)$ symmetry. For
$\sigma\in(\frac{1}{\sqrt 2},1)$ the linearization around a standing
wave admits pure imaginary eigenvalues $\pm i \xi$, which correspond
to existence of neutral oscillation in the linearized dynamics. If
these neutral oscillations persist as invariant tori in the phase
space of the complete nonlinear system, relaxation to an asymptotic
equilibrium standing wave is precluded. The analysis of so called
asymptotic stability of solitons started at the beginning of the nineties
with the studies of Soffer and Weinstein (\cite{SW1,SW2}) on the NLS
equation with an external potential in dimension three and of Buslaev
and Perelman on the translation invariant NLS in dimension one
(\cite{BP1,BP2}).  

\noindent
In our model the analysis goes as follows.
\noindent We consider the nonlinear evolution problem
\begin{equation}    \label{equation}
i\frac{du}{dt} = H_{\alpha} u, \qquad u(0)=u_0 \in D(H_{\alpha}),
\end{equation}

\par\noindent
In \cite{ADO} it is shown the existence of a solitary wave manifold for \eqref{equation}
\be\label{solitary}
\mathcal{M} = \left\{\ \Phi_{\omega}(x) = \left(
\frac{\sqrt{\omega}}{4\pi \nu} \right)^{\frac{1}{2\sigma}}
\frac{e^{-\sqrt{\omega} |x|}}{4\pi |x|} : \; \omega > 0 \,
\right\}\ ,
\ee
and it is shown that these solitary waves, which belong to
$D(H_{\alpha})$, are orbitally stable in the same range ($\sigma\in
(0,1)$) where global well posedness of equation \eqref{equation} is
guaranteed.\par\noindent 
One  aims at proving that an initial datum near this solitary manifold
relaxes for $t \to +\infty$ to the solitary manifold
itself.\par\noindent 
The solitary manifold turns out to be a symplectic two dimensional submanifold.\\
Now writing
$u=e^{i\omega t}(\Phi_\omega + R)$, we
obtain that $R$ satisfies the first order the linearized canonical system 
\be\label{linearizzato} 
\frac{dR}{dt} =  -J\left[
  \begin{array}{cc}
    H_{\alpha_1} + \omega & 0 \\
    0 & H_{\alpha_2} + \omega \\
  \end{array}
\right]R\  =\ \left[
  \begin{array}{cc}
    0 & L_2 \\
    -L_1 & 0 \\
  \end{array}
\right]R \ \equiv LR\ ,
\ee
and $L_j = H_{\alpha_j} +\omega$ for $j = 1,2$, 
$\alpha_1 = -(2\sigma +1) \frac{\sqrt{\omega}}{4 \pi}$ and $\alpha_2
= -\frac{\sqrt{\omega}}{4 \pi}$.

\noindent As recalled before (see \cite{ADO}, Section IV A, and
Appendix VI A of this paper) in the case $\sigma \in (1/\sqrt{2}, 
1)$ the discrete spectrum of $L$ consists of the eigenvalue $0$ with
algebraic multiplicity $2$ and of two purely imaginary eigenvalues $\pm
i \xi$ with
\begin{equation}\label{mu}
\xi = 2\sigma \sqrt{1 -\sigma^2} \omega,
\end{equation}
and corresponding eigenvectors $\Psi$ and $\Psi^*$ (see Appendix VI C of this paper).
\noindent As a consequence, the domain of the operator $L$ can be
decomposed in three symplectic subspaces, more precisely
$$D(L) = X^0 \oplus X^1 \oplus X^c,$$

\noindent where $X^0$, $X^1$, and $X^c$ are respectively the generalized kernel of $L$,
the eigenspace corresponding to the $L$ eigenfunctions $\Psi$ and
$\Psi^*$, and the subspace associated to the absolutely continuos spectrum of $L$.
In particular $X^0$ is generated by the tangent vectors to the symplectic solitary submanifold. 
The corresponding symplectic projection operators  from $L^2(\R^3)$ onto $X^0$,
$X^1$ and $X^c$ are denoted by $P^0$, $P^1$, $P^c$ respectively (see appendix C for explicit representation\ .)
\noindent A further step in the analysis consists in decomposing the solution of equation \eqref{equation} in the sum of a soliton-like part $e^{i\Theta(t)} \Phi_{\omega(t)}(x)$ and a fluctuating part $\chi(t,x)$, introducing the Ansatz 
\begin{equation}    \label{mod-ansatz}
u(t,x) = e^{i\Theta(t)} \left( \Phi_{\omega(t)}(x) +\chi(t,x)
\right),
\end{equation}
 with
\be\label{phase}\Theta(t) = \int_0^t \omega(s) ds +\gamma(t),\ee
and
\begin{equation} \label{zete}
\chi(t,x) = z(t) \Psi(t,x) +\overline{z(t)} \Psi^*(t,x) +f(t,x) \equiv
\psi(t,x) +f(t,x), \ \ \psi \in X^1,\ \ f \in X^c
\end{equation}
\noindent with $\omega(t)$, $\gamma(t)$, $z(t)$ and $f(t,x)$ up to now
unprejudiced. Notice that the parameters are now time dependent, as
well as the functions $\Psi$ and $\overline{\Psi}$: this is due to the
fact that $\Psi$ depends on $t$ through $L$ and then through $\omega$.
  

\noindent
The goal now is to show that the fluctuating part is decaying
($\Psi$-$\Psi^*$ component) or dispersing ($f$-component), to provide
convergence of parameters $\omega(t), \gamma(t)$ to possibly unknown
asymptotic values; all of this finally gives relaxation to the solitary manifold. 
To this end one has to fix the above underdetermined representation,
and then obtain equations for $\omega(t)$, $\gamma(t)$ and
$\chi(t,x)$.  
This is achieved by requiring the fluctuation $\chi$ to be
symplectically orthogonal to the solitary manifold $\mathcal{M}$ (or,
equivalently, orthogonal to the generalized kernel $N_g(L)$) for every
time $t\geq 0$. This procedure yields the so called ``modulation equations'' for
$\omega(t)$, $\gamma(t)$ and $\chi(t,x)$. Moreover, exploiting further
orthogonality relations between $\Psi,\Psi^*,\Phi_{\omega}$, one also
gets equations for the coefficient $z$ and the dispersive term $f$ in
$\chi$ (See Theorem \ref{lemma-mod} in Section \ref{modeq} for more
details). 
At this point two problems occur. The first is that the modulation
equation for the fluctuating component $\chi$, due to the introduction
of the time dependent Ansatz,  contains a non autonomous linear
part. So, the use of dispersive estimates to show the wanted decay of
this component requires to preliminarily ``freeze'' the dynamics at a
certain fixed time $T$; the subsequent step is to show the uniform character of the
estimates in $T$. The frozen equations are written in
Section \ref{modeq}. In the same section the leading terms in the
modulation equations are identified, and the estimates on the
remainder terms are displayed. This is however not yet sufficient to
get rid of the complex oscillation described by $z$ and $z^*$. So the
modulation equations are rewritten in a canonical way by means of a
Poincar\'e normal form pushed to the third order which preserves
estimates on the remainders. In particular, the transformed oscillating
component $z_1=z_1(z,\bar z)$ satisfies the equation 
\begin{equation}\label{zuno}
\dot{z_1} = i\xi z_1 +i K |z_1|^2 z_1 +\widehat{Z}_R\ .
\end{equation}
The asymptotic behavior of the solution of the $z_1$ component depends
on the coefficient $iK$ and more precisely on the sign of its real
part.\\ 
This is the point where nonlinear Fermi golden rule (FGR) enters the
game. It turns out in relevant examples that if a resonance condition
between a higher harmonic (a multiple) of $i\xi$ and the continuous
spectrum of linearization is satisfied (the FGR), then $\Re (i K)$ is
strictly negative; this gives decay of the oscillating modes of the
linearization (see the seminal paper cited above and moreover
\cite{TY}, \cite{TY2} and \cite{BS}). The decay is ultimately due to
coupling of oscillating modes with the continuous spectrum assured by
FGR, and consequent drift of energy from the discrete component to the
continuous one; so the mechanism is dissipation by dispersion. Furthermore,
exploiting the hamiltonian structure of the system it can be shown that
the above situation is in some sense generic (\cite{CM, Cu1},\cite{B}
and reference therein).  

In the present model things go as follows. In the first place the second harmonic $2i\xi$ of the discrete eigenvalue $i\xi$ lies in the interior of the continuous spectrum if the nonlinearity power $\sigma\in (\frac{1}{\sqrt{2}}, \frac{\sqrt{3} +1}{2 \sqrt{2}})$.
Let us then denote by
$N_2(q, q)$ the quadratic terms coming from the Taylor expansion of
the nonlinearity (see formula \eqref{quadratic-remainder}), and by $\Psi(\omega_0) = \left(
\begin{array}{cc}
\Psi_1(\omega_0)\\
\Psi_2(\omega_0)
\end{array} \right)$ the eigenfunction of the linearized operator associated to
$i\xi_0$ (given explicitely in Appendix VI C, Proposition VI 4). The nonlinear Fermi golden rule (FGR) adapted to our case is the following non-degeneracy condition:
\begin{equation}    \label{FGR}
JN_2(q_{\Psi(\omega_0)}, q_{\Psi(\omega_0)}) \overline{q_{\Psi_+(2i
\xi_0)}} \neq 0,
\end{equation}

\noindent where $\Psi_+(2i \xi_0)$ is the generalized eigenfunction
associated to $+2i \xi_0$ (for the explicit representation see Appendix D).\\
This nondegeneracy condition should imply $\Re (i K)<0$ in
\eqref{zuno}.\\ As a matter of fact, thanks to the explicit character
of our model, we are able to 
directly verify, without use of the nondegeneracy condition, that $\Re
(i K)<0$ (that gives dissipation in \eqref{zuno}) holds for any $\sigma$ 
in the range $\left( \frac{1}{\sqrt{2}}, \sigma^* \right)$, for a
certain $\sigma^* \in \left( \frac{1}{\sqrt{2}}, \frac{\sqrt{3}
+1}{2\sqrt{2}} \right]$ (see Section \ref{canonical-z}). Moreover, we
  have
numerical evidence that this is true on the whole interval
$\left( \frac{1}{\sqrt{2}}, \frac{\sqrt{3} +1}{2\sqrt{2}} \right)$.
With these premises, we eventually prove the following result.

\noindent {\bf Theorem} {\textbf{(Asymptotic stability in the case
of purely imaginary eigenvalues)}} Assume that $u \in C(\R^+, V)$
is a solution to equation \eqref{equation} with a power nonlinearity
(see \eqref{domain}) given by $\sigma \in (\frac{1}{\sqrt{2}}, 
\sigma^*)$, for a certain $\sigma^* \in
(\frac{1}{\sqrt{2}}, \frac{\sqrt{3} +1}{2 \sqrt{2}}]$. Moreover,
suppose that the initial datum is close to a standing wave of
\eqref{equation}, in the sense that  
$$u(0) = u_0 = e^{i\omega_0 +\gamma_0} \Phi_{\omega_0} +e^{i\omega_0
+\gamma_0} [(z_0 +\overline{z_0}) \Psi_1 +i (z_0 -\overline{z_0})
\Psi_2] +f_0 \in V \cap L^1_w(\R^3),$$

\noindent with $\omega_0 > 0$, $\gamma_0$, $z_0 \in \R$, and $f_0
\in L^2(\R^3) \cap L^1_w(\R^3)$ and
$$|z_0| \leq \epsilon^{1/2} \qquad \textrm{and} \qquad \|f_0\|_{L^1_w} \leq
c \epsilon^{3/2},$$

\noindent where $c$, $\epsilon > 0$.

\noindent Then, provided $\epsilon$ is sufficiently small, the
solution $u(t)$ can be asymptotically decomposed as
$$u(t) = e^{i\omega_{\infty} t +i b_1 \log (1 +\epsilon k_{\infty} t)}
\Phi_{\omega_{\infty}} +U_t*\psi_{\infty} +r_{\infty}, \quad
\textrm{as} \;\; t \rightarrow +\infty,$$

\noindent where $\omega_{\infty}$, $\epsilon k_{\infty} > 0$, $b_1
\in \R$, and $\psi_{\infty}$, $r_{\infty} \in L^2(\R^3)$ such that
$$\| r_{\infty} \|_{L^2} = O(t^{-1/4}) \quad \textrm{as} \;\; t \rightarrow
+\infty,$$

\noindent in $L^2(\R^3)$.

\vskip5pt
Some remarks are in order.\\
\noindent We stress again that the range of the admitted nonlinearities
$\sigma$ implies that $\pm 2i \xi$ is in the essential spectrum of
the linearized operator. Notice that $\frac{\sqrt{3}
  +1}{2\sqrt{2}}\approx 0.96$, and that when $\sigma\to 1$ the
discrete eigenvalues of linearization, $\pm i\xi$, collide at zero.
So, in order  to cover a larger range of values of $\sigma$ one has to consider harmonics higher than the second and consider normal form pushed to order higher than the third. As shown in \cite{ADO} the standing waves in the case $\sigma=1$ are unstable by blow-up.\\
We recall that $\sigma = \frac{1}{{\sqrt 2}}$ is a threshold resonance for the linearization. Its presence does not allow enough decay of the linearized dynamics to obtain relaxation to solitary manifold.\\
The asymptotic stability result is achieved following the outline of
\cite{BS} and \cite{KKS} and using the machinery already developed in \cite{ADO}. In particular, in \cite{KKS} the same
problem for the analogous one-dimensional model is studied.

\noindent Nevertheless, the three-dimensional case presents some
differences. The first one is that the concentrated nonlinearity imposes to develop the
analysis at the form level. This means that the estimates on the
evolution of the initial data are more delicate.

\noindent The second main difference is the faster decay of the
propagator of the free Laplacian. This allows to develop the the
analysis using just the structural weight $w = 1 +\frac{1}{|x|}$
which arises from the dispersive estimate instead of introducing new
weighted spaces as done 
in the one-dimensional case treated in \cite{BKKS, KKS}.

\noindent Finally, the eigenfunctions associated to the purely
imaginary eigenvalues do not have oscillating terms occurring in the
one-dimensional case but they exponentially decrease as $|x|
\rightarrow +\infty$. This fact will be useful in order to get
the decay in time of the radiation term.

\noindent A last comment of general nature is in order. As in the
one dimensional case studied by Buslaev, Komech, Kopylova, and
Stuart in \cite{BKKS} and Komech, Kopylova, and Stuart in
\cite{KKS},
and the three dimensional model analyzed in \cite{ADO}, the analysis
of a specific model allows to obtain 
asymptotic stability of standing waves without a priori assumptions.
In particular the nonlinearity is fixed, of power type and
subcritical, in the sense that it falls in the range of global well
posedness of the equation (see \cite{KM} for a different example where
asymptotic stability is proven in subcritical regime); no spectral
assumptions are needed; no smallness of initial data is required, in
the sense 
that we give results for every standing wave of the model and
initial data near the family of standing waves.\\ As a final remark, while
Komech, Kopylova, and Stuart in \cite{KKS} find a link between the Fermi Golden
Rule and the decay of normal modes of linearization, here such decay is directly
verified. This fact seems to indicate that some of these assumptions
or hypotheses are in fact unnecessary when enough information about
the model is known.
\par\noindent For the sake of completeness, in the course of the paper we will
repeat proofs requiring some modifications because of the facts
mentioned above; on the contrary, where the arguments hold
unchanged, just a reference will be given. Moreover in the appendices we give information about linearization operator recalling useful material from \cite{ADO} and giving detailed properties of discrete and continuous eigenfunctions.
\vskip10pt
\par\noindent
{\bf Acknowledgments} The authors are grateful to Scipio Cuccagna, Gianfausto
Dell'Antonio, Alexander Komech and Galina Perelman for several discussions and correspondence. 
R.A. and D.N. are partially supported by 
 the FIRB 2012 grant {\em Dispersive Dynamics: Fourier Analysis and
   Variational Methods}, reference number RBFR12MXPO.
R.A. is member of the INdAM Gruppo Nazionale per l'Analisi
Matematica, la Probabilit\`a e le loro Applicazioni; D. N.  is member
of the INdAM Gruppo Nazionale per la Fisica Matematica.






\section{Modulation equations}\label{modeq}
We recall that
the operators we are dealing with have different domains
while the forms associated share the common domain $V$, introduced
in \eqref{def-V}.
It proves convenient to describe $V$ in an alternative way: fixed an
arbitrary $\lambda > 0$ and denoted $G_\lambda (x) : = e^{-\lambda
  |x|}/4\pi$, one finds
$$V = \left\{ u = \phi_{\lambda}+ q G_{\lambda}, \,
\textrm{with} \, \phi_{\lambda} \in H^1(\R^3), \, q \in \C
\right\}.$$
The arbitrary positive parameter $\lambda$ is customary in the description of a point interaction and moreover allows the regular part of the element's domain to be in  the Sobolev space $H^1$ instead of the corresponding homogeneous counterpart, which is often useful. Everything is in fact independent on the choice of $\lambda\ .$ \\
Correspondingly, we will perform computations at the
form level. In order to do that let
us recall that the variational formulation of equation \eqref{eq1}
is
\begin{equation}    \label{eq-variazionale}
\left( i\frac{du}{dt}(t), v \right)= Q_{\alpha}(u(t),v) \quad
\forall v \in V.
\end{equation}
where, given $u,v \in V$ with $u =  \phi_{u, \lambda} + q_u
G_\lambda$,  $v =  \phi_{v, \lambda} + q_v
G_\lambda$,
$$Q_\alpha (u,v) = 
( \nabla \phi_{u, \lambda},  \nabla \phi_{v, \lambda})_{L^2} + \bar q_u q_v
\left( \alpha + \frac {\sqrt \lambda}{8 \pi} \right) - \lambda \bar q_u (
G_\lambda, \phi_{v, \lambda})_{L^2} - \lambda q_v (
\phi_{u, \lambda}, G_\lambda)_{L^2}
$$

\noindent Note that the equation \eqref{eq-variazionale}
makes sense because $V$ is
independent of the positive parameter $\lambda$ and it is a Hilbert
space with the norm
$$\|u\|_V^2 = \| \nabla \phi_{\lambda} \|_{L^2}^2 +|q|^2, \quad \forall u \in
V.$$

\noindent In order to inspect the asymptotic stability of equation
\eqref{eq1} it is useful to represent the solution $u$
by the aid of
the Ansatz \eqref{mod-ansatz} together with defintions
\eqref{phase} and \eqref{zete}. From now on we always refer to such formulas.




\noindent Hence, we want to construct a solution of
equation \eqref{eq1} close at each time to a solitary wave. Let us
notice that the solitary wave does not need to be the same at every
time, which means that the parameters $\omega(t)$ and $\Theta(t)$
are free to vary in time.

\noindent Exactly as in the case $\sigma \in \left( 0,
\frac{1}{\sqrt{2}} \right)$ (see \cite{ADO}, Section V) the function
$\chi$ solves
\begin{equation}    \label{eqchi}
\left( i\frac{d\chi}{dt}(t), v \right)_{L^2} = Q_{\alpha,
Lin}(\chi(t),v) +\dot{\gamma}(t) (\Phi_{\omega(t)} +\chi(t),
v)_{L^2} +
\end{equation}

$$+\dot{\omega}(t) \left(-i\frac{d\Phi_{\omega(t)}}{d\omega}
,v\right)_{L^2} +N(q_{\chi}(t), q_v),$$

\noindent for all $v \in V$.
I

Here $Q_{\alpha,Lin}$ is the quadratic
form of  the linearization operator acting as
$$Q_{\alpha,Lin}(\chi,v) = (\nabla \phi_{\chi}, \nabla \phi_v)_{L^2}
-\frac{\sqrt{\omega}}{4\pi} \Re (q_{\chi} \overline{q_v})
-\sigma \frac{\sqrt{\omega}}{2\pi} \Re q_{\chi} \Re q_v +\omega (\chi,v)_{L^2}\ , $$

\noindent and the nonlinear remainder $N(q_{\chi},q_v)$ is given by
\begin{equation}\label{quadratic-remainder}
N(q_{\chi}, q_v) = -\nu |q_{\chi}
+q_{\omega}|^{2\sigma} \Re ((q_{\chi} +q_{\omega}) \overline{q_v})
+\nu (2\sigma +1)|q_{\omega}|^{2\sigma} \Re q_{\chi} \Re q_v +\nu
|q_{\omega}|^{2\sigma} \Im q_{\chi} \Im q_v +\nu
|q_{\omega}|^{2\sigma} \Re(q_{\omega} \overline{q_v}),
\end{equation}
where (see section II B in \cite{ADO}), $q_\omega =
\left( \frac {\sqrt \omega} {4 \pi \nu} \right)^{\frac 1 {2 \sigma}}.$

\noindent Since $\omega(t)$, $\gamma(t)$, and $\chi(x,t)$ are
unknown and the propagator grows in time along the directions of the
generalized kernel of the operator $L$, the idea is to get a
determined system requiring the function $\chi(t)$ to be orthogonal
to the generalized kernel of $L$ at any time $t \geq 0$. Hence, one
obtains that $\omega$, $\gamma$, $z$, and $f$ must solve the
following system of equations.
\begin{theo}{\emph{\textbf{(Modulation equations)}}}  \label{lemma-mod}
If $\chi(t)$ is a solution of equation \eqref{eqchi} such that $P_0
\chi(t) = 0$ for all $t \geq 0$ and $\omega(t)$ and $\gamma(t)$ are
continuously differentiable in time, then $\omega$ and $\gamma$ are
solutions of
\begin{equation}    \label{mod-eq1}
\dot{\omega} = \frac{\Re\ (JN(q_{\chi})
\overline{q_{P_0^*(\Phi_\omega+\chi)}})}{\left( \varphi_{\omega} -
\frac{d P_0}{d\omega}\chi , \Phi_\omega + \chi \right)_{L^2}},
\end{equation}

\begin{equation}    \label{mod-eq2}
\dot{\gamma} = \frac{\Re\ (JN(q_{\chi})
\overline{q_{J({\varphi_\omega}-\frac{d P_0}{d\omega}\chi)}})}
{\left( \varphi_{\omega} - \frac{d P_0}{d\omega}\chi , \Phi_\omega +
\chi \right)_{L^2}}.
\end{equation}

\noindent Furthermore, $z$ and $f$ satisfy

\begin{equation}    \label{mod-eq3}
(\Psi, J\Psi)_{L^2} (\dot{z} -i\xi z) = \Re (JN(q_{\chi})
\overline{q_{J\Psi}}) +\dot{\omega} \left[ \left( f,
J\frac{d\Psi}{d\omega} \right)_{L^2} -\left( \frac{d\psi}{d\omega},
J\Psi \right)_{L^2} \right] +\dot{\gamma} (\chi, \Psi)_{L^2},
\end{equation}

\begin{equation}    \label{mod-eq4}
\left( \frac{df}{dt}, v \right)_{L^2} = Q_L(f,v) +\left(
-\dot{\omega} \left( z P^c \frac{d\Psi}{d\omega} +\overline{z} P^c
\frac{d\Psi^*}{d\omega} \right) +\dot{\gamma} P^c J \chi, v
\right)_{L^2} +
\end{equation}
$$+(8\pi \sqrt{\lambda} P^c JN(q_{\chi}) G_{\lambda}, q_v
G_{\lambda})_{L^2},$$

\noindent for all $v \in V$.
\end{theo}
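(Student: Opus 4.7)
The plan is to derive all four equations from the same underlying object, the weak equation \eqref{eqchi} for $\chi$, by projecting it in four different directions and then exploiting the constraint $P_0\chi(t)=0$ together with the $X^1$/$X^c$ splitting $\chi = z\Psi + \bar z\Psi^* + f$. The key book-keeping ingredient will be that the projectors $P_0, P^1, P^c$ and the eigenfunctions $\Psi(\omega), \Psi^*(\omega)$ all depend on $t$ through $\omega(t)$, so every time-differentiation of a constraint of the form $P_\bullet \chi = \text{const}$ produces a corrective term $\dot\omega\, \frac{dP_\bullet}{d\omega}\chi$ that must be routed into the correct final equation.

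For \eqref{mod-eq1}--\eqref{mod-eq2}, the plan is to differentiate $P_0\chi(t)=0$ in time, use \eqref{eqchi} to substitute for $\dot\chi$, and test against the two basis vectors of the generalized kernel $N_g(L_\omega)$, namely $J\Phi_\omega$ (coming from gauge invariance, $L_\omega J\Phi_\omega = 0$) and $\varphi_\omega := d\Phi_\omega/d\omega$ (satisfying $L_\omega \varphi_\omega = J\Phi_\omega$). Since the quadratic form $Q_{\alpha,Lin}(\chi,v)$ vanishes whenever $v\in N_g(L_\omega)$ by self-adjointness of $L_j$, the linear part drops out of both pairings and only the $\dot\omega$, $\dot\gamma$ and nonlinear terms survive. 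Solving the resulting $2\times 2$ linear system for $(\dot\omega,\dot\gamma)$ yields \eqref{mod-eq1}--\eqref{mod-eq2}; the denominator $(\varphi_\omega - \frac{dP_0}{d\omega}\chi,\Phi_\omega+\chi)_{L^2}$ is the determinant of the coefficient matrix and is non-degenerate for small $\chi$ because $(\varphi_\omega,\Phi_\omega)_{L^2} = \tfrac12 \frac{d}{d\omega}\|\Phi_\omega\|_{L^2}^2 \neq 0$ in the stability range.

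For \eqref{mod-eq3} the plan is to test \eqref{eqchi} against $v = J\Psi$, writing out $\dot\chi = \dot z\,\Psi + z\dot\omega\,\frac{d\Psi}{d\omega} + \overline{\dot z}\,\Psi^* + \bar z\dot\omega\,\frac{d\Psi^*}{d\omega} + \dot f$. The eigenvalue identity $L\Psi = i\xi\Psi$ together with the symplectic biorthogonality $\Omega(\Psi,\Psi^*)=0$ collapses the linear contribution to $i\xi z\,(\Psi,J\Psi)_{L^2}$, while the $\dot\omega$- and $\dot\gamma$-terms regroup into the bracketed correction on the right-hand side. For \eqref{mod-eq4} the plan is to apply the $\omega$-dependent projection $P^c$ to the whole equation \eqref{eqchi}: the identity $P^c\chi = f$ combined with $\dot P^c = \dot\omega\, dP^c/d\omega$ produces the $\dot\omega$-terms in $d\Psi/d\omega$ and $d\Psi^*/d\omega$, the $\dot\gamma(\Phi_\omega+\chi)$ contribution survives only as $\dot\gamma\,P^c J\chi$ since $P^c\Phi_\omega = 0$ and $P^cJ\chi$ picks up the $X^1$ part, and the nonlinear term is recast at the form level by writing $N(q_\chi,q_v)$ as the pairing $(JN(q_\chi)G_\lambda, q_v G_\lambda)_{L^2}$ with the prefactor $8\pi\sqrt{\lambda}$ coming from the $\delta$-form normalization.

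The main obstacle will be the systematic tracking of the $\omega$-derivatives of the spectral objects: every projection of \eqref{eqchi} generates contributions from $\dot{P}_0,\dot{P}^c,\dot{\Psi},\dot{\Psi}^*$, and one must verify that these rearrange, via the symplectic orthogonality of the three spaces $X^0,X^1,X^c$ and the identity $L^*J = -JL$, into exactly the correction terms displayed in \eqref{mod-eq1}--\eqref{mod-eq4} with no residual cross-terms. A secondary technical point is that because $\chi$ generally does not lie in $D(H_\alpha)$, every pairing must be read at the form level in $V\times V$; in particular the charge $q_\chi$ must be extracted from the weak decomposition $\chi = \phi_{\chi,\lambda} + q_\chi G_\lambda$, which is the reason the nonlinear remainder $N$ enters the modulation equations through its charge-variables rather than through any pointwise identity.
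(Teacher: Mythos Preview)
Your proposal is correct and follows essentially the same route as the paper: the authors also obtain \eqref{mod-eq1}--\eqref{mod-eq2} by differentiating the constraint $P_0\chi=0$ and testing against the generalized-kernel directions (they cite \cite{ADO}, Theorems V.3 and V.6 for the details), derive \eqref{mod-eq3} by taking $v=J\Psi$ in \eqref{eqchi} and using exactly the symplectic orthogonality relations you list, and get \eqref{mod-eq4} by applying $P^c$ to \eqref{eqchi}. One small caution: the statement that $Q_{\alpha,Lin}(\chi,v)$ vanishes for $v\in N_g(L_\omega)$ ``by self-adjointness of $L_j$'' is not literally how the cancellation works---it comes rather from the symplectic orthogonality $P_0\chi=0$ combined with the relation between $Q_{\alpha,Lin}$ and the real form $Q_L$ of $L$---but this is a wording issue in the sketch and does not affect the validity of your plan.
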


\begin{proof}
Equations \eqref{mod-eq1} and \eqref{mod-eq2} can be proved with the
same argument exploited in the case $\sigma \in \left( 0,
\frac{1}{\sqrt{2}} \right)$ (see \cite{ADO}, Theorems V.3 and V.6).

\noindent Equation \eqref{mod-eq3} can be obtained taking $v =
J\Psi$ as test function and noting that
\begin{itemize}
  \item $\frac{d\chi}{dt} = \dot{z} \Psi +\dot{\overline{z}} \Psi^*
+\dot{\omega} \left( z \frac{d\Psi}{d\omega} +\overline{z}
\frac{d\Psi^*}{d\omega} \right) +\frac{df}{dt}$,
  \item $(\Psi^*, J\Psi)_{L^2} = 0$,
  \item $\left( \frac{d\Phi_{\omega}}{d\omega}, J\Psi \right)_{L^2} = 0$,
  \item $\left( \frac{df}{dt}, J\Psi \right)_{L^2} = -\dot{\omega} \left( f,
J\frac{d\Psi}{dt} \right)_{L^2}$, and
  \item $\dot{\omega} \left( \frac{d\Psi^*}{d\omega}, J\Psi \right)_{L^2} =
-\left( \Psi^*, J\frac{d\Psi}{dt} \right)_{L^2}$.
\end{itemize}

\noindent Finally, equation \eqref{mod-eq4} follows taking the
projection onto the continuous spectrum $P^c$ of both sides of
equation \eqref{eqchi} and recalling that $f \in X^c$.
\end{proof}

\subsection{Frozen spectral decomposition}
The goal of this subsection is to get an autonomous linearized
equation for the component $f$. 

According to \eqref{linearizzato}, having fixed $T>0$ we define
$$L_T = L(\omega_T) : =  -J\left[
  \begin{array}{cc}
    H_{\alpha_1(T)} + \omega_T & 0 \\
    0 & H_{\alpha_2(T)} + \omega_T \\
  \end{array}
\right],$$
where $\alpha_1 (T) = - (2 \sigma + 1) \frac{\sqrt{\omega_T}}{4 \pi}$, $\alpha_2 = -  \frac{\sqrt{\omega_T}}{4 \pi}$. 
and $\omega_T = \omega(T)$.
\noindent Then for any $t \in [0, T]$ one
can decompose $f(t) \in X^c = X^c(t)$ as
$$f = g +h \qquad \textrm{with} \quad g \in X^d_T = X^0_T \oplus X^1_T, \; h
\in X^c_T,$$

\noindent where the subscript $T$ means that the time is fixed at $t
= T$.
We recall that $P^0$ and $P^1$ are the symplectic projections on the kernel and nonzero eigenvalues of the linearization (see Appendix C). Correspondingly we denote by
$P^0_T$ and $P^1_T$ the projections on the analogous subspaces frozen at time $T$. Finally denote $P^d_T = P^0_T +P^1_T$. 


\noindent Then for the quadratic form
$$Q_L(u,v) -Q_{L_T}(u,v) =
\frac{\sqrt{\omega} -\sqrt{\omega_T}}{4\pi} \Re (\mathbb{T} q_u
\overline{q_v}) -(\omega_T -\omega) (Ju, v)_{L^2},$$

\noindent for all $u$, $v \in V$, where
$$\mathbb{T} =\left[
  \begin{array}{cc}
    0 & -1 \\
    2\sigma +1 & 0 \\
  \end{array}
\right].$$

\noindent Hence, observing that $P^c \Psi = 0$, the equation
\eqref{mod-eq4} for $f$ is equivalent to
\begin{equation}\left( \frac{df}{dt}, v \right)_{L^2} = Q_{L_T}(f,v) +\left( (\omega
-\omega_T) Jf +\dot{\omega} \frac{dP^c}{d\omega} \psi +\dot{\gamma}
P^c J \chi, v \right)_{L^2} +
\label{label}
\end{equation}
$$+\left( 8\pi \sqrt{\lambda} \left( \frac{\sqrt{\omega} -\sqrt{\omega_T}}{4\pi}
\mathbb{T} q_f +P^c JN(q_{\chi}) \right) G_{\lambda}, q_v
G_{\lambda} \right)_{L^2},$$

\noindent for all $v \in V$.

\noindent Since our dispersive estimate holds only on the continuous
spectral subspace, we need to prove that it is enough to estimate
the symplectic projection of $\chi(t)$ onto that subspace. This is
stated in the following lemma where we denote
by $\mathcal{R}(a)$ a  bounded continuous real
valued functions vanishing as $a \to 0$, and
$$\mathcal{R}_1(\omega) = \mathcal{R}(\|\omega
-\omega_0\|_{C^0([0,T])}).$$

\noindent
Along the paper, with a
slight abuse, we shall use
the symbol
 $\mathcal{R}(a,b)$ to denote bounded continuous real
valued functions vanishing as $a$, $b \rightarrow 0$.

\begin{lemma}   \label{lemma-cont}
If $|\omega -\omega_T|$ is small enough, then the function $g$ can
be estimated in terms of $h$ as follows:
$$\|g\|_{L^{\infty}_{w^{-1}}} \leq \mathcal{R}_1(\omega)
|\omega -\omega_T| \|h\|_{L^{\infty}_{w^{-1}}}.$$
\end{lemma}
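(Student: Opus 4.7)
The strategy is to exploit that $f$ lies in the \emph{time-dependent} continuous subspace $X^c(t) = X^c(\omega(t))$, so the component $g = P^d_T f$ is really measuring the discrepancy between the spectral decompositions associated with $L(\omega(t))$ and $L_T = L(\omega_T)$. Since $f \in X^c(t)$ we have $P^d_{\omega(t)} f = 0$, and therefore, writing $f = g + h$,
$$
g \;=\; P^d_T f \;=\; \bigl( P^d_T - P^d_{\omega(t)} \bigr) f \;=\; \bigl( P^d_T - P^d_{\omega(t)} \bigr)(g + h).
$$
Rearranging gives the fixed-point identity $\bigl( I - (P^d_T - P^d_{\omega(t)}) \bigr) g = (P^d_T - P^d_{\omega(t)}) h$.

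The key analytic input I would establish is the Lipschitz-type bound
$$
\bigl\| P^d_T - P^d_{\omega(t)} \bigr\|_{L^\infty_{w^{-1}} \to L^\infty_{w^{-1}}} \;\leq\; C(\omega_0)\, |\omega_T - \omega(t)|,
$$
uniformly for $\omega(t)$ in a neighbourhood of $\omega_0$. Once this is in hand, for $|\omega - \omega_T|$ sufficiently small the operator on the left of the fixed-point identity is invertible by a Neumann series with uniformly bounded inverse, and one reads off
$$
\|g\|_{L^\infty_{w^{-1}}} \;\leq\; \frac{C(\omega_0)\, |\omega - \omega_T|}{1 - C(\omega_0)\, |\omega - \omega_T|}\, \|h\|_{L^\infty_{w^{-1}}},
$$
which is exactly the claimed estimate, with the right-hand factor absorbed into $\mathcal{R}_1(\omega) |\omega - \omega_T|$.

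To prove the Lipschitz bound on the projections I would use the explicit representation of $P^0_\omega$ and $P^1_\omega$ recalled in Appendix C: each is a finite sum of rank-one operators built from the basis of the generalized kernel $\{i\Phi_\omega,\, d\Phi_\omega/d\omega\}$ and from the eigenvectors $\Psi(\omega), \Psi^*(\omega)$ associated with $\pm i\xi(\omega)$, paired against their symplectic duals. From the explicit formulas given in Appendix C (Proposition VI.4), every such basis vector is of the form (smooth coefficient in $\omega$) $\times\, e^{-\sqrt{\omega}|x|}/|x|$ plus a bounded exponentially decaying remainder, so it is $C^1$ in $\omega$ jointly in $L^1_w$ (where the pairing against the test element lives) and $L^\infty_{w^{-1}}$ (which is the target space). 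Differentiating each rank-one term in $\omega$ and applying the mean value theorem yields the desired Lipschitz bound, with a constant depending continuously on $\omega_0$.

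The main obstacle is checking that the $\omega$-differentiation of the projection kernels is compatible with the weighted norm $L^\infty_{w^{-1}}$ with $w = 1 + 1/|x|$: the basis vectors carry a $G_0(x) = 1/(4\pi|x|)$ singularity at the origin whose coefficient does depend on $\omega$, and one must verify that $\partial_\omega$ does not worsen the $|x|^{-1}$ behaviour. The structure of the point interaction is exactly what saves us, because the $1/|x|$ factor comes with an $\omega$-independent pre-factor while the regular parts depend smoothly on $\omega$ through $\sqrt{\omega}$; this smoothness is uniform on compact subsets of $(0,\infty)$, hence uniform on the small neighbourhood of $\omega_0$ in which $\omega(t)$ and $\omega_T$ are assumed to lie. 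This is precisely what makes $C(\omega_0)$ finite and yields the claimed estimate.
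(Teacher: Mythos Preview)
Your approach is correct and is essentially the argument the paper intends: the paper itself gives no proof but refers to Lemma 3.2 in \cite{KKS}, whose proof follows exactly the fixed-point scheme you outline, namely $g = (P^d_T - P^d_{\omega(t)})(g+h)$ together with a Lipschitz bound on the finite-rank discrete projections in $\omega$, inverted by a Neumann series for $|\omega - \omega_T|$ small. One minor imprecision: the coefficient of the $1/|x|$ singularity in $\Phi_\omega$ (and in the normalization constants entering $P^0,P^1$) \emph{does} depend on $\omega$, contrary to what you write; but this does not matter, since a term of the form $c(\omega)\,e^{-\sqrt{\omega}|x|}/|x|$ differentiates in $\omega$ to another function with at worst a $1/|x|$ singularity, which the weight $w^{-1}=|x|/(1+|x|)$ already absorbs (and which is integrable against $w$ in dimension three).
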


\noindent The last lemma can be proved following the proof of Lemma
3.2 in \cite{KKS}.




\noindent As a consequence, one can apply the operator $P^c_T$ to
both sides of the equation for $f$ and obtain
\begin{equation}    \label{eq-h}
\left( \frac{dh}{dt}, v \right)_{L^2} = Q_{L_T}(h,v) +\left( P^c_T
\left[ (\omega -\omega_T) Jf +\dot{\omega} \frac{dP^c}{d\omega} \psi
+\dot{\gamma} P^c J \chi \right], v \right)_{L^2} +
\end{equation}

$$+\left( 8\pi \sqrt{\lambda} P^c_T \left( \frac{\sqrt{\omega}
-\sqrt{\omega_T}}{4\pi}\mathbb{T} q_f +P^c JN(q_{\chi}) \right)
G_{\lambda}, q_v G_{\lambda} \right)_{L^2},$$

\noindent for any $v \in V$.

\subsection{Asymptotic expansion of dynamics}
In order to prove the asymptotic stability of the ground state we
need to show that for large times $z$ and $h$ are small. For this
purpose, in this section we expand the inhomogeneous
terms in the modulation equations, then rewrite the equations of
$\omega$, $\gamma$, $z$ and $h$ as in \eqref{sv-omega},
\eqref{sv-gamma}, \eqref{sv-z} and \eqref{sv-h}, and for each equation
we estimate the error terms.

\noindent In what follows we denote
$$(q, p) = q_1 p_1 +q_2 p_2, \qquad \forall p,q \in \C^2.$$

\noindent With an abuse of notation we denote by
$q_{\omega} = \left( \begin{array}{cc}
\left( \frac{\sqrt{\omega}}{4\pi \nu} \right)^{1/(2\sigma)}\\
0
\end{array} \right)$ the charge of the function $\left( \begin{array}{cc}
\Phi_{\omega}\\
0
\end{array} \right)$.

\noindent As a preliminary step, we expand the nonlinear part of the
equation \eqref{eqchi} $N(q_{\chi})$ as
\begin{equation}    \label{N}
N(q_{\chi}) = N_2(q_{\chi}) +N_3(q_{\chi}) +N_R(q_{\chi}),
\end{equation}

\noindent where $N_2$ and $N_3$ are the quadratic and cubic terms in
$q_{\chi}$ respectively, while $N_R$ is the remainder. Exploiting
the Taylor expansion of the function $F(t) = t^{\sigma}$ around
$|q_{\omega}|^2$, one gets
$$\Re (N_2(q_{\chi}) \overline{q_v}) = \Re ((\sigma |q_{\omega}|^{2(\sigma -1)}
|q_{\chi}|^2 q_{\omega} +2\sigma |q_{\omega}|^{2(\sigma -1)}
(q_{\omega}, q_{\chi}) q_{\chi} +2(\sigma -1) \sigma
|q_{\omega}|^{2(\sigma -2)} (q_{\omega}, q_{\chi})^2 q_{\omega})
\overline{q_v}),$$

\noindent and
$$\Re (N_3(q_{\chi}) \overline{q_v}) = \Re (( \sigma |q_{\omega}|^{2(\sigma -1)}
|q_{\chi}|^2 q_{\chi} +2(\sigma -1) \sigma |q_{\omega}|^{2(\sigma
-2)} (q_{\omega}, q_{\chi})^2 q_{\chi} +$$
$$+2(\sigma -1) \sigma |q_{\omega}|^{2(\sigma -2)} (q_{\omega},
q_{\chi}) |q_{\chi}|^2 q_{\omega} +\frac{4}{3} (\sigma -2) (\sigma
-1) \sigma |q_{\omega}|^{2(\sigma -3)} (q_{\omega}, q_{\chi})^3
q_{\omega}) \overline{q_v}),$$

\noindent for any $q_v \in \C$. For later convenience, let us define
the following symmetric forms
$$N_2(q_1, q_2) = \sigma |q_{\omega}|^{2(\sigma -1)} (q_1, q_2) q_{\omega}
+\sigma |q_{\omega}|^{2(\sigma -1)} [(q_{\omega}, q_1) q_2
+(q_{\omega}, q_2) q_1] +$$
$$+2(\sigma -1) \sigma |q_{\omega}|^{2(\sigma
-2)} (q_{\omega}, q_1) (q_{\omega}, q_2) q_{\omega},$$

\noindent and
$$N_3(q_1, q_2, q_3) = \frac{1}{6} \sigma |q_{\omega}|^{2(\sigma -1)}
\sum_{i,j,k=1}^3 (q_i, q_j) q_k +\frac{1}{3} (\sigma -1) \sigma
|q_{\omega}|^{2(\sigma -2)} \sum_{i,j,k=1}^3 (q_{\omega}, q_i)
(q_{\omega}, q_j) q_k +$$
$$+\frac{1}{3} (\sigma -1) \sigma |q_{\omega}|^{2(\sigma -2)}
\sum_{i,j,k=1}^3 (q_{\omega}, q_i) (q_j, q_k) q_{\omega}
+\frac{4}{3} (\sigma -2) (\sigma -1) \sigma |q_{\omega}|^{2(\sigma
-3)} (q_{\omega}, q_1) (q_{\omega}, q_2) (q_{\omega}, q_3)
q_{\omega}.$$

\noindent In order to prove the asymptotic stability result, we
shall prove in Section 2.4,  the following asymptotics
\begin{equation}    \label{asintoti}        
\| f(t) \|_{L^{\infty}_{w^{-1}}} \sim t^{-1}, \qquad z(t) \sim
t^{-\frac{1}{2}}, \qquad \| \psi(t) \|_{V} \sim t^{-\frac{1}{2}},
\end{equation}

\noindent as $t \rightarrow +\infty$.
\begin{rmk}
As in \cite{KKS}, the first step in proving these expected
asymptotics is to separate leading terms and remainders in the right
hand sides of the modulation equations \eqref{mod-eq1} -
\eqref{mod-eq3}, \eqref{eq-h}. Basically, in the next subsections,
we will expand the expression for $\dot{\omega}$, $\dot{\gamma}$,
and $\dot{z}$ up to and including the terms of order $t^{-3/2}$, and
for $\dot{h}$ up to and including $t^{-1}$.
\end{rmk}

\begin{rmk}
Note that since the nonlinearity depends only on the charges the
same holds for its Taylor expansion.
\end{rmk}

\subsubsection{Equation for $\omega$}
Substituting the expansion for the nonlinear part $N$ given in
\eqref{N} in equation \eqref{mod-eq1} and considering the
asymptotics \eqref{asintoti} one gets
$$\dot{\omega} = \frac{1}{\Delta} \Re ((JN_2(q_\psi) +2JN_2(q_\psi, q_f)
+JN_3(q_\psi)) q_{\omega}) +\frac{1}{\Delta^2} \left( \psi,
\frac{d\Phi_{\omega}}{d\omega} \right)_{L^2} \Re (JN_2(q_\psi)
q_{\omega}) +\Omega_R,$$

\noindent where $\Delta = \frac{1}{2} \frac{d}{d \omega} \|
\Phi_{\omega} \|_{L^2}^2$ and the remainder $\Omega_R$ is estimated
by
$$|\Omega_R| \leq \mathcal{R}(\omega, |z| +\|f\|_{L^{\infty}_{w^{-1}}})
(|z|^2 +\|f\|_{L^{\infty}_{w^{-1}}})^2.$$      

\noindent Recalling that $\psi = z\Psi +\overline{z} \Psi^*$, one
can rewrite the previous equation for $\dot{\omega}$ as
\begin{equation}    \label{sv-omega}
\dot{\omega} = \Omega_{20}z^2 +\Omega_{11}z\overline{z}
+\Omega_{02}\overline{z}^2 +\Omega_{30}z^3 +\Omega_{21}z^2
\overline{z} +\Omega_{12}z\overline{z}^2 +\Omega_{03}\overline{z}^3
+z(q_f, \Omega'_{10}) +\overline{z}(q_f, \Omega'_{01}) +\Omega_R,
\end{equation}
where the $\Omega_{ij}$'s and the $\Omega_{ij}'$'s are suitable
coefficients, while $\Omega_R$ is a remainder term.

\begin{rmk}
Since the second component of the vector $q_{\omega}$ equals $0$,
one has
$$\Omega_{11} = 2\frac{q_{\omega}}{\Delta} Re (JN_2(q_{\Psi})
\overline{q_{\Psi^*}}) = 0.$$

\noindent This fact will turn out to be useful in writing the
canonical form of the modulation equations.
\end{rmk}

\subsubsection{Equation for $\gamma$}
As in the previous subsection the equation for $\dot{\gamma}$
\eqref{mod-eq2} can expanded as
$$\dot{\gamma} = \frac{1}{\Delta}
\Re ((JN_2(q_\psi) +2JN_2(q_\psi, q_f) +JN_3(q_\psi))
\overline{q_{J\frac{d\Phi_{\omega}}{d\omega}}}) +\frac{1}{\Delta^2}
\left( \psi, J\frac{d^2\Phi_{\omega}}{d^2\omega} \right)_{L^2} \Re
(JN_2(q_\psi) q_{\omega}) +\Gamma_R,$$

\noindent where the remainder $\Gamma_R$ is estimated by
$$|\Gamma_R| \leq \mathcal{R}(\omega, |z| +\|f\|_{L^{\infty}_{w^{-1}}})
(|z|^2 +\|f\|_{L^{\infty}_{w^{-1}}})^2.$$

\noindent As before, the equation for $\dot{\gamma}$ shall be
written in the form
\begin{equation}    \label{sv-gamma}
\dot{\gamma} = \Gamma_{20}z^2 +\Gamma_{11}z\overline{z}
\Gamma_{02}\overline{z}^2 +\Gamma_{30}z^3 +\Gamma_{21}z^2
\overline{z} +\Gamma_{12}z\overline{z}^2 +\Gamma_{03}\overline{z}^3
+z(q_f, \Gamma'_{10}) +\overline{z}(q_f, \Gamma'_{01}) +\Gamma_R.
\end{equation}

\begin{rmk}
In this case $\Gamma_{11}$ does not vanish as in equation
\eqref{sv-omega}.
\end{rmk}

\subsubsection{Equation for $z$}
Exploiting the results of the previous subsections, equation
\eqref{mod-eq3} can be expanded as
$$\begin{array}{ll}
\dot{z} -i\xi z & = \frac{2}{\kappa} \Re (JN_2(q_\psi)
\overline{q_f}) +\frac{1}{\kappa} \Re ((JN_2(q_\psi)
+JN_3(q_\psi)) \overline{q_{J\Psi}}) +\\
& -\frac{1}{\Delta \kappa} \left( \frac{d\psi}{d\omega}, J\Psi
\right)_{L^2} \Re (JN_2(q_\psi) q_{\omega}) +\frac{1}{\Delta \kappa}
(\psi, \Psi)_{L^2} \Re (JN_2(q_\psi)
\overline{q_{J\frac{d\Phi_{\omega}}{d\omega}}}) +Z_R,
\end{array}$$

\noindent where $\kappa = -(\Psi, J\Psi)_{L^2}$ and
$$|Z_R| \leq \mathcal{R}(\omega, |z| +\|f\|_{L^{\infty}_{w^{-1}}})
(|z|^2 +\|f\|_{L^{\infty}_{w^{-1}}})^2.$$

\noindent Mimicking the notation employed in \eqref{sv-omega} and \eqref{sv-gamma}, the previous equation
can be written in the form
\begin{equation}    \label{sv-z}
\dot{z} = i\xi z +Z_{20}z^2 +Z_{11}z\overline{z}
+Z_{02}\overline{z}^2 +Z_{30}z^3 +Z_{21}z^2 \overline{z}
+Z_{12}z\overline{z}^2 +Z_{03}\overline{z}^3 +z \Re (q_f \overline{Z'_{10}})
+\overline{z} \Re (q_f \overline{Z'_{01}}) +Z_R,
\end{equation}

\noindent and it turns out that
\begin{equation}    \label{eqZij}
\begin{array}{ll}
Z_{11} =&\frac{2}{\kappa} \Re(JN_2(q_{\Psi}, q_{\Psi^*}) \overline{q_{\Psi}}), \quad Z_{20} = \frac{1}{\kappa} \Re(JN_2(q_{\Psi}) \overline{q_{\Psi}}), \quad Z_{02} = \frac{1}{\kappa} \Re(JN_2(q_{\Psi^*}) \overline{q_{\Psi}}),\\

Z_{21} =&\frac{3}{\kappa} \Re(JN_3(q_{\Psi^*}, q_{\Psi}, q_{\Psi}) \overline{q_{\Psi}})+ \frac{1}{\Delta \kappa} \left[ \left( \frac{d \Psi^*}{d \omega}, j\Psi \right)_{L^2} \Re(JN_2(q_{\Psi}) \overline{q_{J\Phi_{\omega}}}) +\right.\\

& \left. -(\Psi^*, \Psi)_{L^2} \Re(JN_2(q_{\Psi})
\overline{q_{\frac{d\Phi_{\omega}}{d\omega}}}) -2 \| \Psi \|_{L^2}^2
\Re(JN_2(q_{\Psi^*}, q_{\Psi})
\overline{q_{\frac{d\Phi_{\omega}}{d\omega}}}) \right],\\

Z'_{10} =& 2 \frac{J N_2(q_{\Psi^*}, q_{\Psi})}{\overline{\kappa}},
\quad Z'_{01} = 2 \frac{J N_2(q_{\Psi})}{\overline{\kappa}}.
\end{array}
\end{equation}

\subsubsection{Equation for $h$}
In order to expand asymptotically the equation \eqref{eq-h} for $h$,
the following remark will be useful.
\begin{rmk}
For any $f \in L^2(\R^3)$ the following holds
$$P^c_T P^c f = P^c_T (I -P^d) f = P^c_T (P^c_T +P^d_T -P^d) f = P^c_T f +P^c_T
(P^d_T -P^d) f.$$

\end{rmk}

\noindent Let us denote
\begin{equation} \label{ro}
\rho(t) = \omega(t) -\omega_T +\dot{\gamma}(t),
\end{equation}

\noindent then equation \eqref{eq-h} can be rewritten as
$$\left( \frac{dh}{dt}, v \right)_{L^2} = Q_{L_T}(h,v) +(\rho P^c_T Jh,
v)_{L^2} +(8\pi P^c_T JN_2(q_\psi) G_{\lambda}, q_v
G_{\lambda})_{L^2} +$$
$$+\left( P^c_T \left[ \dot{\omega} \frac{dP^c}{d\omega} \psi +\dot{\gamma} P^c
J\psi +\rho Jg +\dot{\gamma} (P^d_T -P^d) Jf \right], v
\right)_{L^2} +$$
$$+\left( 8\pi \sqrt{\lambda} P^c_T \left( \frac{\sqrt{\omega}
-\sqrt{\omega_T}}{4\pi}\mathbb{T} q_f +P^c JN(q_{\chi})
-JN_2(q_\psi) \right) G_{\lambda}, q_v G_{\lambda} \right)_{L^2},$$

\noindent for any $v \in V$.

\noindent Denote
$$H'_R = P^c_T \left[ \dot{\omega} \frac{dP^c}{d\omega} \psi +\dot{\gamma} P^c
J\psi +\rho Jg +\dot{\gamma} (P^d_T -P^d) Jf \right],$$

\noindent and
$$H''_R = 8\pi \sqrt{\lambda} P^c_T \left( \frac{\sqrt{\omega}
-\sqrt{\omega_T}}{4\pi}\mathbb{T} q_f +P^c JN(q_{\chi})
-JN_2(q_\psi) \right) G_{\lambda}.$$


We recall that by $\Pi^{\pm}$ we denote (see Appendix C) the
projections onto the branches $\mathcal{C}_{\pm}$ of the continuous
spectrum separately. Analogously we denote by $\Pi^{\pm}_T$ the corresponding projections of the linear generator $L_T$ frozen
at time $T$. The following lemma is useful in the following.

\begin{lemma}   \label{lemma-LM}
There exists a constant $C > 0$ such that for each $h \in X^c_T$
holds

$$\left\| [P^c_T J -i(\Pi^+_T -\Pi^-_T)]h  \right\|_{L^1_w} \leq C \| h
\|_{L^{\infty}_{w^{-1}}}.$$

\end{lemma}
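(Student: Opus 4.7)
The plan is to reduce the claim to a kernel estimate via the generalized eigenfunction expansion of $L_T$ on $X^c_T$, which is worked out explicitly in Appendix D. Writing, for $h\in X^c_T$,
$$\Pi^\pm_T h \;=\; \int \tilde h_\pm(k)\,\Psi_\pm(k,\cdot)\,dk, \qquad L_T\Psi_\pm(k)=\pm i\,\zeta(k)\,\Psi_\pm(k),$$
where the coefficients $\tilde h_\pm(k)$ are obtained by pairing $h$ against the corresponding biorthogonal family, the first task is to exhibit the action of $J$ on each $\Psi_\pm(k,\cdot)$. Using the Hamiltonian factorization $L_T=-J\mathcal{H}_T$ with $\mathcal{H}_T$ self-adjoint and block-diagonal, together with the explicit form of the generalized eigenfunctions (scattering state off the point interaction at the origin), one sees that the "free" (plane-wave) part of $\Psi_\pm(k,\cdot)$ is simultaneously diagonalized by $J$ with eigenvalue $\pm i$, while the full eigenfunction differs from this free part only by a term proportional to $G_\lambda$ coming from the point interaction. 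This yields the key identity
$$J\Psi_\pm(k,x) \;=\; \pm i\,\Psi_\pm(k,x) \;+\; c_\pm(k)\,G_\lambda(x)\,e_\pm,$$
for smooth spectral densities $c_\pm(k)$ and fixed vectors $e_\pm\in\C^2$ readable off the formulas of Appendices C--D.

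Applying $P^c_T$ to this identity, integrating against $\tilde h_\pm(k)$, and taking the difference yields
$$[P^c_T J \,-\, i(\Pi^+_T-\Pi^-_T)]h \;=\; \int \bigl[\tilde h_+(k)\,c_+(k)\,P^c_T(G_\lambda e_+) \,-\, \tilde h_-(k)\,c_-(k)\,P^c_T(G_\lambda e_-)\bigr]\,dk.$$
The factor $G_\lambda(x)=e^{-\lambda|x|}/(4\pi|x|)$ belongs to $L^1_w$ with $w=1+1/|x|$, and $P^c_T$ preserves this weighted integrability (this follows from the explicit representation of $P^c_T$ given in Appendix C, together with the fact that the discrete eigenfunctions are exponentially decaying). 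Meanwhile, the spectral coefficients are $\tilde h_\pm(k)=\langle h,\widetilde\Psi_\pm(k)\rangle$, with the dual eigenfunctions $\widetilde\Psi_\pm(k,\cdot)\in L^1_w$ uniformly in $k$; the $L^1_w$--$L^\infty_{w^{-1}}$ duality therefore gives the pointwise bound $|\tilde h_\pm(k)|\le C(k)\,\|h\|_{L^\infty_{w^{-1}}}$ with a density $C(k)\,|c_\pm(k)|$ that is integrable in $k$. Collecting these ingredients produces the stated bound with a constant independent of $h$.

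The main obstacle is securing this $k$-integrability uniformly. At high frequency the density $|c_\pm(k)|$ grows only polynomially and is killed by the exponential decay of $G_\lambda$; at the spectral threshold $k\to 0$, however, the generalized eigenfunctions of a three-dimensional point interaction can in principle develop singular contributions from the Jost-function-type denominators. Here we use that for $\sigma\in(1/\sqrt{2},\sigma^*)$ the operator $L_T$ has no zero-energy resonance (verified in Appendix VI A), so that the relevant densities are integrable at the origin. The whole argument is the three-dimensional counterpart of Lemma 3.2 of \cite{KKS}, the faster dispersive decay in 3D and the concentrated nature of the nonlinearity making several estimates more transparent than in 1D.
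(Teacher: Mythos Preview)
Your underlying observation is correct and is the same cancellation that drives the paper's proof: on the branch $\mathcal C_+$ the generalized eigenfunction $\Psi_+$ is a combination of the oscillating pieces $e^{\pm i\sqrt{\eta-\omega}|x|}/(4\pi|x|)\,\bigl(\begin{smallmatrix}1\\ i\end{smallmatrix}\bigr)$, which are $(+i)$-eigenvectors of $J$, together with the exponentially decaying piece $A(\eta)\,e^{-\sqrt{\omega+\eta}|x|}/(4\pi|x|)\,\bigl(\begin{smallmatrix}1\\ -i\end{smallmatrix}\bigr)$, a $(-i)$-eigenvector. Hence $(J-iI)\Psi_+$ annihilates the oscillating part and leaves only the exponentially decaying one.

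There are, however, two genuine gaps in your execution. First, the remainder in your ``key identity'' is not $c_\pm(k)G_\lambda(x)e_\pm$ with a \emph{fixed} $\lambda$: it is $-2iA(\eta)\,G_{\omega+\eta}(x)\bigl(\begin{smallmatrix}1\\ -i\end{smallmatrix}\bigr)$, whose decay rate $\sqrt{\omega+\eta}$ depends on the spectral parameter. Your high-frequency argument (``polynomial growth of $c_\pm$ killed by the exponential decay of $G_\lambda$'') does not work as written, since a fixed $G_\lambda$ carries no spectral decay.

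Second, and more seriously, the bound $|\tilde h_\pm(k)|\le C(k)\,\|h\|_{L^\infty_{w^{-1}}}$ via $L^1_w$--$L^\infty_{w^{-1}}$ duality requires $\widetilde\Psi_\pm(k,\cdot)\in L^1_w$, which is false: the dual generalized eigenfunctions have the same oscillating $e^{\pm i\sqrt{\eta-\omega}|x|}/|x|$ structure and are not in $L^1_w$ (indeed not even in $L^1$). You therefore have no pointwise control of the spectral coefficients $\tilde h_\pm(k)$, and the subsequent $k$-integral estimate is unjustified.

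The paper sidesteps both issues by working with the resolvent rather than an eigenfunction expansion. One writes $P^c_TJ-i(\Pi^+_T-\Pi^-_T)=\Pi^+_T(J-iI)+\Pi^-_T(J+iI)$ and represents $\Pi^\pm_T$ via Stone's formula as integrals of the resolvent jump $R(\lambda+0)-R(\lambda-0)$ over $\mathcal C_\pm$. The explicit resolvent (Theorem~\ref{ris}) splits into a convolution part and a finite-rank part. On the convolution part one checks $i\lambda\mathcal G_{\lambda^2}+\Gamma_{\lambda^2}=G_{\omega-i\lambda}$, which is continuous across $\mathcal C_+$, so the jump vanishes --- this is your cancellation, phrased at the resolvent level. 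What remains is the finite-rank kernel: computing $(\Lambda_1+\Sigma_2)(x,y)$ explicitly and performing one integration by parts in the spectral variable yields a kernel bounded by $C\,\dfrac{e^{-\sqrt{2\omega}|y|}}{|y|}\,\dfrac{e^{-\sqrt{2\omega}|x|}}{|x|}$, from which the $L^\infty_{w^{-1}}\to L^1_w$ bound is immediate. That integration by parts is precisely the mechanism converting oscillation in $y$ into decay --- the analytic content missing from your attempt to bound $\tilde h_\pm(k)$ pointwise.
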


\noindent The proof is in Appendix \ref{dim-lemma} for any $t > 0$.
Finally, let us define

\begin{equation}    \label{L_M}
L_M(t) = L_T +i\rho(t) (\Pi^+_T -\Pi^-_T),
\end{equation}

\noindent then the previous equation becomes
\begin{equation}    \label{exp-h}
\left( \frac{dh}{dt}, v \right)_{L^2} = Q_{L_M}(h,v) +(8\pi P^c_T
JN_2(q_\psi) G_{\lambda}, q_v G_{\lambda})_{L^2} +(\widetilde{H}_R,
v)_{L^2} +(H''_R, q_v G_{\lambda})_{L^2},
\end{equation}

\noindent for any $v \in V$, where we denoted
$$\widetilde{H}_R = H'_R +\rho [P^c_T J -i(\Pi^+_T -\Pi^-_T)]h.$$

\noindent Finally, let us expand the second summand in the right
hand side of \eqref{exp-h}, and get
\begin{equation} \label{sv-h}
\left( \frac{dh}{dt}, v \right)_{L^2} = Q_{L_M}(h,v) +(z^2 H_{20}
+z\overline{z} H_{11} +\overline{z}^2 H_{02}) \overline{q_v}
+(\widetilde{H}_R, v)_{L^2} +(H''_R, q_v G_{\lambda})_{L^2},
\end{equation}
\noindent for any $v \in V$, where
$$\begin{array}{lll}
  H_{20} = (8\pi \sqrt{\lambda} P^c_T JN_2(q_{\Psi}) G_{\lambda},
G_{\lambda})_{L^2},\\
  H_{11} = 2 (8\pi \sqrt{\lambda} P^c_T JN_2(q_{\Psi}, q_{\Psi^*}) G_{\lambda},
G_{\lambda})_{L^2},\\
  H_{02} = (8\pi \sqrt{\lambda} P^c_T JN_2(q_{\Psi^*}) G_{\lambda},
G_{\lambda})_{L^2}.
\end{array}$$

\noindent Thanks to the estimates done for the other equations and
Lemma \ref{lemma-LM}, one can estimate the remainders in the
following way:
$$\|H'_R\|_{L^1_w} \leq C \left( |z| (|\dot{\omega}| +|\dot{\gamma}|)
+\mathcal{R}_1(\omega) (|\omega -\omega_T| +|\dot{\gamma}|
\|f\|_{L^{\infty}_{w^{-1}}}) \right) \leq$$
$$\leq \mathcal{R}_1(\omega, |z| +\|f\|_{L^{\infty}_{w^{-1}}}) \left( |z|^3 +|z|
\|f\|_{L^{\infty}_{w^{-1}}} +\|f\|_{L^{\infty}_{w^{-1}}}^2 +|\omega
-\omega_T| \|f\|_{L^{\infty}_{w^{-1}}} \right),$$

\noindent hence
\begin{equation}    \label{stime-H1}
\|\mathcal{\widetilde{H}}_R\|_{L^1_w} \leq \mathcal{R}_1(\omega, |z|
+\|f\|_{L^{\infty}_{w^{-1}}}) \left( |z|^3 +|z|
\|f\|_{L^{\infty}_{w^{-1}}} +\|f\|_{L^{\infty}_{w^{-1}}}^2 +|\omega
-\omega_T| \|f\|_{L^{\infty}_{w^{-1}}} \right),
\end{equation}

\noindent and
\begin{equation}    \label{stime-H2}
\|H''_R\|_{L^1_w} \leq \mathcal{R}_1(\omega, |z|
+\|f\|_{L^{\infty}_{w^{-1}}}) \left( |z|^3 +|z|
\|f\|_{L^{\infty}_{w^{-1}}} +\|f\|_{L^{\infty}_{w^{-1}}}^2 +|\omega
-\omega_T| (|z|^2 +\|f\|_{L^{\infty}_{w^{-1}}}) \right).
\end{equation}

\begin{rmk}
In the same way one directly expands the equation for the
function $f$ getting
\begin{equation}    \label{exp-f}
\left( \frac{df}{dt}, v \right)_{L^2} = Q_L(f,v)+(z^2 F_{20}
+z\overline{z} F_{11} +\overline{z}^2 F_{02}) \overline{q_v}
+(\widetilde{F}_R, v)_{L^2} +(F''_R, q_v G_{\lambda})_{L^2},
\end{equation}

\noindent for any $v \in V$, where
$$\begin{array}{lll}
  F_{20} = (8\pi \sqrt{\lambda} JN_2(q_{\Psi}) G_{\lambda},
G_{\lambda})_{L^2},\\
  F_{11} = 2 (8\pi \sqrt{\lambda} JN_2(q_{\Psi}, q_{\Psi^*}) G_{\lambda},
G_{\lambda})_{L^2},\\
  F_{02} = (8\pi \sqrt{\lambda} JN_2(q_{\Psi^*}) G_{\lambda},
G_{\lambda})_{L^2}.
\end{array}$$

\noindent and
$$\widetilde{F}_R = \dot{\omega} \frac{dP^c}{d\omega} \psi
+\dot{\gamma} P^c J\psi +\dot{\gamma} (P^d_T -P^d) Jf,$$
$$F''_R = 8\pi \sqrt{\lambda} \left( \frac{\sqrt{\omega}
-\sqrt{\omega_T}}{4\pi}\mathbb{T} q_f +P^c JN(q_{\chi})
-JN_2(q_\psi) \right) G_{\lambda}.$$

\noindent Furthermore, the $L^1_w$ norms of the remainders
$\widetilde{F}_R$ and $F''_R$ can be estimated by the corresponding
norms of the remainders $\widetilde{H}_R$ and $H''_R$.
\end{rmk}


\section{Canonical form of the equations}
In this section we would like to use the technique of normal
coordinates in order to transform the modulation equations for
$\omega$, $\gamma$, $z$, and $h$ to a simpler canonical form. We
will also try to keep the estimates of the remainders as much close
as possible to the original ones.

\subsection{Canonical form of the equation for $h$}
Our goal is to exploit a change of variable in such a way that the
function $h$ is mapped in a new function decaying in time at least
as $t^{-3/2}$. For this purpose one could expand $h$ as
\begin{equation}    \label{ansatz-h1}
h = h_1 +k +k_1,
\end{equation}

\noindent where
$$k = a_{20} z^2 +a_{11} z\overline{z} +a_{02} \overline{z}^2,$$

\noindent with some coefficients $a_{ij} = a_{ij}(x, \omega)$ such
that $a_{ij} = \overline{a_{ji}}$, and
$$k_1 = -\textrm{exp} \left( \int_0^t L_M(s) ds \right) k(0),$$
where the operator $L_M$ was defined in \eqref{L_M}.

\noindent Note that $h_1(0) = h(0)$, since $k_1(0) = -k(0)$.
\begin{prop}
There exist $a_{ij} \in L^{\infty}_{w^{-1}}(\R^3)$, for $i$, $j =
0$, $1$, $2$, such that the equation for $h_1$ has the form
\begin{equation}    \label{eq-h1}
\left( \frac{dh_1}{dt}, v \right)_{L^2} = Q_{L_M}(h_1, v)
+(\widehat{H}_R, v)_{L^2} +(H''_R, q_v G_{\lambda})_{L^2},
\end{equation}

\noindent for all $v \in V$, where $\widehat{H}_R = \widetilde{H}_R
+\overline{H}_R$ with
\begin{equation}    \label{stime-H3}
\overline{H}_R = -\left[ \dot{\omega} \left( \frac{d
a_{20}}{d\omega} z^2 +\frac{d a_{11}}{d\omega} z\overline{z}
+\frac{d a_{02}}{d\omega} \overline{z}^2 \right) +(2a_{20} z +a_{11}
\overline{z}) (\dot{z} -i\xi_T z) +\right.
\end{equation}
$$\left. +(a_{11} z +2a_{20} \overline{z}) (\dot{\overline{z}} +i\xi_T
\overline{z}) -\rho (\Pi^+_T -\Pi^-_T)k \right].$$

\end{prop}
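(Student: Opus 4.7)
The strategy is a near-identity normal-form change of variables: we want to choose the coefficients $a_{ij}(x,\omega)$ so that the quadratic source $z^2 H_{20}+z\bar z H_{11}+\bar z^2 H_{02}$ in equation \eqref{sv-h} for $h$ is exactly absorbed into $k=a_{20}z^2+a_{11}z\bar z+a_{02}\bar z^2$, leaving only cubic and mixed remainders in the equation for $h_1$. By construction $k_1(t) = -\exp\bigl(\int_0^t L_M(s)\,ds\bigr)\,k(0)$ solves $\dot k_1 = L_M k_1$ with $k_1(0) = -k(0)$, so substituting $h = h_1 + k + k_1$ into \eqref{sv-h} shows that $h_1$ obeys \eqref{eq-h1} exactly when
$$\dot k - L_M k = z^2 H_{20} + z\bar z H_{11} + \bar z^2 H_{02} - \overline H_R.$$

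First I would compute $\dot k$ by the chain rule, writing $\dot z = i\xi_T z + (\dot z - i\xi_T z)$ and its conjugate, and decomposing $L_M k = L_T k + i\rho(\Pi^+_T-\Pi^-_T) k$. The genuinely leading part of $\dot k - L_M k$ in the variables $z,\bar z$ then comes from replacing $\dot z,\dot{\bar z}$ by $\pm i\xi_T z,\mp i\xi_T\bar z$ in the time-derivatives of $k$ and from keeping only $L_T k$; everything else is cubic or smaller in $(|z|+\|f\|_{L^\infty_{w^{-1}}})$ and is dumped into $\overline H_R$. The three blocks of extras produced by this calculation are exactly the ones listed in \eqref{stime-H3}: the $\dot\omega\,\partial_\omega a_{ij}$ contributions, the discrepancies $(\dot z - i\xi_T z)$ and $(\dot{\bar z}+i\xi_T\bar z)$ multiplying $\partial_z k$ and $\partial_{\bar z} k$, and the subtracted $\rho(\Pi^+_T-\Pi^-_T)k$ (legitimate since $\rho$ is $O(|z|^2)$).

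Matching coefficients of $z^2$, $z\bar z$, $\bar z^2$ in the remaining leading equation reduces the construction to three linear operator equations on $X^c_T$:
$$(L_T - 2 i\xi_T) a_{20} = H_{20},\qquad L_T a_{11} = H_{11},\qquad (L_T + 2 i\xi_T) a_{02} = H_{02}.$$
The right-hand sides already lie in $X^c_T$ because they carry the projector $P^c_T$, so the unknowns are sought there. The middle equation is standard since $0$ is an isolated eigenvalue of $L_T$ removed by the projection. The edge cases $\pm 2 i\xi_T$ are the crux: under the hypothesis $\sigma \in (1/\sqrt{2}, (\sqrt{3}+1)/(2\sqrt{2}))$ these points are embedded in the continuous spectrum of $L_T$, and the resolvents must be constructed by limiting absorption with the $+ i0$ prescription, using the explicit form of the resolvent of $H_\alpha$ recalled in the appendices. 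Because the sources $H_{ij}$ are multiples of $G_\lambda$ and hence sufficiently localized, the boundary-value resolvent yields the pointwise bounds $|a_{ij}(x)| \le C\, w(x)^{-1}$, placing $a_{ij}$ in $L^\infty_{w^{-1}}$. The required symmetry $a_{ij} = \overline{a_{ji}}$ follows from $H_{02} = \overline{H_{20}}$ together with the conjugation relation between the two $+ i0$-resolvents.

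The main obstacle is precisely this limiting-absorption construction of $(L_T \mp 2 i\xi_T)^{-1}$ at the embedded spectral points $\pm 2 i\xi_T$, together with the extraction of the pointwise $L^\infty_{w^{-1}}$ control for $a_{20}$ and $a_{02}$; the rest of the argument, including the identification of $\overline H_R$ with \eqref{stime-H3}, is an essentially algebraic unraveling of the computation of $\dot k - L_M k$ that mirrors the normal-form construction of \cite{KKS}.
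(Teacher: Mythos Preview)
Your approach is essentially the same as the paper's: substitute $h=h_1+k+k_1$ into \eqref{sv-h}, match the quadratic coefficients in $z,\bar z$, and solve the resulting three linear equations for $a_{20},a_{11},a_{02}$ via the (limiting-absorption) resolvent of $L_T$; the paper's proof is in fact much terser than yours and simply records the system and its solution $a_{20}=-(L_T-2i\xi_T-0)^{-1}H_{20}$, $a_{11}=-L_T^{-1}H_{11}$, $a_{02}=-(L_T+2i\xi_T-0)^{-1}H_{02}$. Note that tracking the signs in your computation of $\dot k - L_M k$ gives $(L_T-2i\xi_T)a_{20}=-H_{20}$ rather than $+H_{20}$ (and similarly for the other two), matching the paper's minus sign in front of the resolvents; otherwise your outline is correct.
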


\begin{proof}
The thesis is proved substituting \eqref{ansatz-h1} into
\eqref{exp-h} and equating the coefficients of the quadratic powers
of $z$ which leads to the system
\begin{equation}    \label{system-a}
\left\{ \begin{array}{ll}
           Q_{L_T}(a_{20},v) +\Re (H_{20} \overline{q_v}) -(2i\xi_T a_{20},
v)_{L^2}
= 0\\
           Q_{L_T}(a_{11},v) +\Re (H_{11} \overline{q_v}) = 0\\
           Q_{L_T}(a_{02},v) +\Re (H_{02} \overline{q_v}) +(2i\xi_T a_{02},
v)_{L^2} = 0
          \end{array} \right.,
\end{equation}

\noindent for all $v \in V$. The former system admits 
the solution
$$\begin{array}{ll}
a_{11} = -L_T^{-1} H_{11}\\

a_{20} = -(L_T -2i \xi_T -0)^{-1} H_{20}\\

a_{02} = \overline{a_{02}} = -(L_T +2i \xi_T -0)^{-1} H_{02}
\end{array}$$
\end{proof}

\begin{rmk}
From the explicit structure of the remainder $\widehat{H}_R$ it
follows that it still satisfies estimate \eqref{stime-H1}.
\end{rmk}

\noindent We will need to apply the next lemma which can be proved
as Proposition 2.3 in \cite{KKS}.
\begin{lemma}   \label{stime-evoluz}
If $\sigma \in \left( \frac{1}{\sqrt{2}}, \frac{\sqrt{3}
+1}{2\sqrt{2}} \right)$ and $f \in V \cap L^1_w$, then there exists
some constant $C > 0$ such that for any $t \geq 0$
$$\| e^{-L_T t} (L_T + 2i \xi_T -0)^{-1} P_T^c f
\|_{L^{\infty}_{w^{-1}}} \leq  C (1 +t)^{-3/2} \| f \|_{L^1_w}.$$

\end{lemma}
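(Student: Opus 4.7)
The statement is a weighted dispersive estimate for the propagator $e^{-L_T t}$ composed with a boundary-value resolvent at the interior spectral point $-2i\xi_T$ of $L_T$; this is exactly the 3D analogue of Proposition 2.3 in \cite{KKS}. Because $2i\xi_T$ lies in $\sigma_{\textrm{ess}}(L_T)$ under the range $\sigma \in (1/\sqrt{2},(\sqrt{3}+1)/(2\sqrt{2}))$, the resolvent is not bounded on $L^2$ and the limiting absorption principle $-i0$ must be unpacked carefully. The plan is to rewrite the composition as an oscillatory integral over the continuous spectrum and harvest the $t^{-3/2}$ rate from the 3D free-Schr\"odinger dispersion, using the explicit generalized eigenfunctions of $L_T$ built in Appendix VI C--D.

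First, I would use Stone's formula for the propagator applied to $P_T^c f$, which, together with the identity $(\lambda+2i\xi_T-i0)^{-1}= \textrm{PV}(\lambda+2i\xi_T)^{-1} + i\pi\delta(\lambda+2i\xi_T)$, yields the representation
\[
e^{-L_T t}(L_T+2i\xi_T-i0)^{-1}P_T^c f = \frac{1}{2\pi i}\,\textrm{PV}\!\int_{\sigma_{\textrm{ess}}(L_T)}\!\frac{e^{-\lambda t}}{\lambda+2i\xi_T}\bigl[R(\lambda+0)-R(\lambda-0)\bigr]P_T^c f\,d\lambda + \tfrac{1}{2}\,e^{2i\xi_T t}\bigl[R(-2i\xi_T+0)-R(-2i\xi_T-0)\bigr]P_T^c f,
\]
where $R(\lambda\pm0)$ are the boundary values of the resolvent on each branch $\mathcal{C}_\pm$. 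The second, non-oscillatory term is bounded in $L^\infty_{w^{-1}}$ by $C\|f\|_{L^1_w}$ because the rank-one structure of the point interaction (Appendices VI A, C) and the $G_\lambda$-type kernels guarantee that the spectral-density operator $R(\lambda+0)-R(\lambda-0)$ maps $L^1_w$ into $L^\infty_{w^{-1}}$ uniformly in $\lambda$ on compact subsets of the spectrum away from the thresholds $\pm i\omega_T$.

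Second, I would parametrize each branch of $\sigma_{\textrm{ess}}(L_T)$ by $\lambda(k) = \pm i(k^2+\omega_T)$, $k \in \R_+$, so the integral becomes an oscillatory integral with phase $e^{\mp i(k^2+\omega_T)t}$ and a spectral measure containing the factor $k\,dk$ coming from the 3D free resolvent. The explicit formulas for $R(\lambda\pm 0)$ in Appendix VI C express the kernel as a sum of the free Laplacian resolvent kernel $\tfrac{e^{\pm ik|x-y|}}{4\pi|x-y|}$ plus a rank-one correction involving $G_\lambda$; both pieces, once paired against the weight $w(x)w(y)=(1+|x|^{-1})(1+|y|^{-1})$, are smooth and uniformly bounded for $k$ away from $0$, while at $k=0$ the threshold behaviour is regular because $\sigma>1/\sqrt{2}$ is \emph{above} the threshold-resonance value. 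The denominator $(\lambda+2i\xi_T)^{-1}$ becomes a smooth factor away from the (removed) resonance point, and near that point, after the Plemelj extraction, the residual principal-value kernel has an integrable $L^1_{\textrm{loc}}$ singularity in $k$ that can be integrated by parts once against the oscillatory factor.

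Third, the $(1+t)^{-3/2}$ decay is then obtained by van der Corput / stationary-phase in $k$: the phase $\mp i(k^2+\omega_T)t$ has a single non-degenerate critical point at $k=0$, and the combined amplitude is smooth on $\R_+$ with a controlled boundary behaviour at $k=0$ (thanks to the above threshold analysis) and sufficient decay at $k=\infty$ (from the free resolvent kernel). This is exactly the mechanism that produces the $L^1\to L^\infty$ estimate $\|U_t*\cdot\|_{L^\infty}\le Ct^{-3/2}\|\cdot\|_{L^1}$ for the free 3D Schr\"odinger group, transported here into the weighted norms. For $t\in[0,1]$ the bound follows instead from the limiting absorption estimate $\|(L_T+2i\xi_T-i0)^{-1}P_T^c\|_{L^1_w\to L^\infty_{w^{-1}}}\le C$, again from the explicit representation of the resolvent kernel.

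The main obstacle is the handling of the interior spectral singularity at $-2i\xi_T$: direct composition of a $(1+t)^{-3/2}$ dispersive estimate with an unbounded resolvent gives nothing. The key is that the Sokhotski--Plemelj split transfers the pointwise singularity into (i) a non-oscillatory but globally bounded term of generalized-eigenprojection type (for which the $(1+t)^{-3/2}$ decay is absorbed into the constant since it is bounded uniformly in $t$ and globally dominated by the $t\to 0$ bound), and (ii) a principal-value oscillatory integral whose decay is extracted by stationary phase. The only substantial 3D-specific adaptation relative to \cite{KKS} is the replacement of their one-dimensional weighted spaces by the single weight $w=1+1/|x|$, which, as the authors note in the introduction, suffices precisely because of the faster 3D free dispersion.
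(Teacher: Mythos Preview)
Your overall architecture---spectral (Stone) representation plus an oscillatory-integral estimate in the variable $k$ parametrizing $\mathcal{C}_\pm$---is the right one and matches what the paper imports from \cite{KKS}. However, your handling of the interior spectral singularity at $\lambda=-2i\xi_T$ contains a genuine gap.

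After the Sokhotski--Plemelj split you obtain the term
\[
\tfrac12\,e^{2i\xi_T t}\bigl[R(-2i\xi_T+0)-R(-2i\xi_T-0)\bigr]P_T^c f,
\]
and you then assert that it is ``bounded uniformly in $t$'' and that ``the $(1+t)^{-3/2}$ decay is absorbed into the constant since it is bounded uniformly in $t$.'' This is not an argument: a term that is merely $O(1)$ as $t\to\infty$ cannot satisfy an estimate $\leq C(1+t)^{-3/2}$. As written, your proof would establish only $\|e^{-L_Tt}(L_T+2i\xi_T-0)^{-1}P_T^c f\|_{L^\infty_{w^{-1}}}\leq C\|f\|_{L^1_w}$, uniformly in $t$, which is far weaker than the Lemma.

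What actually happens is a cancellation you have not identified. The principal-value integral itself contains a non-decaying leading term: for $t>0$, with smooth spectral density $\rho(\eta)$ near $\eta=2\xi_T$,
\[
\mathrm{PV}\!\int \frac{e^{i\eta t}}{2\xi_T-\eta}\,\rho(\eta)\,d\eta \;=\; \pm i\pi\,e^{2i\xi_T t}\rho(2\xi_T)\;+\;(\text{terms decaying in }t),
\]
the sign being fixed by $\mathrm{sgn}(t)$. For the \emph{correct} limiting-absorption side (the ``$-0$'' in the statement, matched to forward time $t>0$), this leading PV term exactly cancels the Plemelj $\delta$-contribution; for the opposite $\pm 0$ choice the two would add and no decay would hold. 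Equivalently, one can bypass Plemelj entirely by deforming the $k$-contour into the half-plane where the relevant boundary value $R(\lambda\pm 0)$ extends analytically (this is possible here because of the explicit rank-one structure of the resolvent in Theorem~\ref{ris}); the deformed contour misses the pole, the integrand is then regular, and stationary phase yields $(1+t)^{-3/2}$ directly. This contour-deformation route is the one taken in \cite{KKS}, Proposition~2.3, and is what the paper is invoking.

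So: keep your Steps~2--3, but replace your treatment of the singular point by either (a) extracting the leading non-decaying term from the PV integral and showing it cancels the $\delta$-term for the prescribed $-0$ side, or (b) deforming the spectral contour using the analytic continuation of the explicit resolvent kernels $G_{\omega\pm i\lambda}$ before ever invoking Plemelj.
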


\begin{rmk}
Let us note that
$$h = P^c_Th = P^c_Th_1 +P^c_Tk +P^c_Tk_1,$$

\noindent hence, in order to estimate the decay of
$\|h\|_{L^{\infty}_{w^{-1}}}$, it suffices to estimate the decay of
$$\|P^c_Th_1\|_{L^{\infty}_{w^{-1}}}, \quad \|P^c_Tk\|_{L^{\infty}_{w^{-1}}},
\quad \textrm{and} \quad \|P^c_Tk_1\|_{L^{\infty}_{w^{-1}}}.$$

\end{rmk}

\subsection{Canonical form of the equation for $\omega$}
Since $\Omega_{11} = 0$, we can exploit the method by Buslaev and
Sulem in \cite{BS}, Proposition 4.1 and get the following
proposition.
\begin{prop} \label{zetuno}
There exist coefficients $b_{ij} = b_{ij}(\omega)$, with $i$, $j =
0$, $1$, $2$, $3$, and vector functions $b'_{ij} = b'_{ij}(x,
\omega)$, with $i$, $j = 0$, $1$, such that function
$$\omega_1 = \omega +b_{20} z^2 +b_{11} z\overline{z} +b_{02} \overline{z}^2
+b_{30} z^3 +b_{21} z^2\overline{z} +b_{12} z\overline{z}^2 +b_{03}
\overline{z}^3 +$$
$$+z (f, b'_{10})_{L^2} +\overline{z}(f, b'_{01})_{L^2},$$

\noindent solves a differential equation of the form
$$\dot{\omega}_1 = \widehat{\Omega}_R,$$

\noindent for some remainder $\widehat{\Omega}_R$.

\end{prop}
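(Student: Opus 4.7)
The plan is to apply the Poincaré normal form method, following Buslaev and Sulem \cite{BS}, Proposition 4.1. Writing $P(z,\bar z,\omega;f) = b_{20}z^2 + b_{11}z\bar z + b_{02}\bar z^2 + b_{30}z^3 + \ldots + z(f,b'_{10})_{L^2} + \bar z(f,b'_{01})_{L^2}$, so that $\omega_1 = \omega + P$, I differentiate in time to obtain
\begin{equation*}
\dot\omega_1 = \dot\omega + (\partial_\omega P)\dot\omega + (\partial_z P)\dot z + (\partial_{\bar z}P)\dot{\bar z} + z(\dot f,b'_{10})_{L^2} + \bar z(\dot f,b'_{01})_{L^2}.
\end{equation*}
Substituting the modulation equations \eqref{sv-omega}, \eqref{sv-z} and \eqref{exp-f}, I demand that on the right-hand side all monomials in $z,\bar z$ of degree $\leq 3$ and all terms linear in $f$ be cancelled, leaving only $\widehat\Omega_R$. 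At leading order this amounts to the cohomological equation $\mathcal{D}_0 P = -\mathcal{N}$, where $\mathcal{D}_0$ is the linearization $i\xi(z\partial_z - \bar z\partial_{\bar z})$ supplemented, on the $f$-linear sector, by the action of $L$, and $\mathcal{N}$ collects the leading nonlinear part of $\dot\omega$.

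On pure monomials $\mathcal{D}_0(z^i\bar z^j)=i\xi(i-j)z^i\bar z^j$, so the coefficient $b_{ij}$ is determined by a scalar equation $i\xi(i-j)b_{ij}=-\Omega_{ij} - (\text{lower-order corrections coming from }\partial_\omega b_{kl})$, uniquely solvable whenever $i\neq j$. Among the monomials of degree at most three, the only one with $i=j$ is $z\bar z$, and the obstruction is exactly $\Omega_{11}$, which vanishes by the remark following \eqref{sv-omega}; I may therefore set $b_{11}=0$. For the cubic coefficients solvability is automatic since $i-j\in\{-3,-1,1,3\}$ is nonzero, and the right-hand side is known once the quadratic $b_{ij}$ are fixed; I proceed by solving in the order $\{b_{20},b_{02}\}\to\{b_{30},b_{21},b_{12},b_{03}\}$.

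For the $f$-linear terms, the cohomological condition reads
\begin{equation*}
(L^T + i\xi)b'_{10} = -\Omega'_{10}, \qquad (L^T - i\xi)b'_{01} = -\Omega'_{01},
\end{equation*}
to be understood modulo the discrete spectral subspaces of $L^T$: since the continuous spectrum of $L$ is disjoint from $\pm i\xi$ in the admissible range of $\sigma$, the operators $L^T\pm i\xi$ are boundedly invertible on the continuous subspace, yielding $b'_{10},b'_{01}\in L^\infty_{w^{-1}}$; the components along $X^0\oplus X^1$ are absorbed into the remainder by Lemma \ref{lemma-cont} and the symplectic orthogonality of $\chi$. Finally, collecting all monomials of order $\geq 4$ produced by the tails of \eqref{sv-omega}--\eqref{sv-z}, the $f$-bilinear contributions arising from $(\dot f,b'_{ij})$ after substituting the source terms $F_{ij}$ of \eqref{exp-f}, and the $(\partial_\omega P)\dot\omega$ terms, I obtain $\widehat\Omega_R$ satisfying an estimate of the same type as the one on $\Omega_R$. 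The main obstacle is the functional solvability of the equations for $b'_{10},b'_{01}$ in $L^\infty_{w^{-1}}$: this requires explicit control of the resolvent $(L^T\pm i\xi)^{-1}P^c$ in the weighted space, which must be read off the concrete spectral material of Appendix VI C rather than from abstract spectral theory.
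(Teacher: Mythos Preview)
Your approach is correct and essentially the same as the paper's: both substitute \eqref{sv-omega}, \eqref{sv-z}, \eqref{exp-f} into $\dot\omega_1$ and match coefficients, with the crucial point being that the only resonant monomial $z\bar z$ is unobstructed because $\Omega_{11}=0$. Two small corrections: (i) the cubic corrections in the equations for $b_{30},b_{21},b_{12},b_{03}$ do \emph{not} come from $\partial_\omega b_{kl}$ (that term is $(\partial_\omega P)\dot\omega$, which is of order four and goes into $\widehat\Omega_R$); they come from the quadratic part of $\partial_z P$ multiplied by the quadratic part of $\dot z-i\xi z$, i.e.\ terms like $2b_{20}Z_{20}$, and from the $F_{ij}$-contributions of $(\dot f,b'_{10})$; (ii) consequently the cubic equations also involve $\Re(F_{ij}\overline{q_{b'_{10}}})$, so you must solve for $b'_{10},b'_{01}$ \emph{before} (or simultaneously with) the cubic $b_{ij}$, not after. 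The paper records exactly this system and notes that the $b'_{ij}$ equations are solved as in \eqref{system-a}, i.e.\ via the resolvent of $L$ at $\mp i\xi$, which lies in the resolvent set on the continuous subspace.
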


\begin{proof}
Substituting the equations \eqref{sv-omega}, \eqref{sv-z}, and
\eqref{exp-f} into the derivative with respect to time of the
expression for $\omega_1$ and equating the coefficients of $z^2$, $z
\overline{z}$, $\overline{z}^2$, $z$, and $\overline{z}$ one gets
the following system
$$\left\{ \begin{array}{ll}
       \Omega_{20} +2i \xi b_{20} = 0\\
           \Omega_{02} -2i \xi b_{02} = 0\\
       \Omega_{30} +3i \xi b_{30} +2Z_{20} b_{20} +\Re (F_{20}
\overline{q_{b'_{10}}}) = 0\\
       \Omega_{03} -3i \xi b_{03} +2Z_{02} b_{02} +\Re (F_{02}
\overline{q_{b'_{01}}}) = 0\\
       \Omega_{21} +i \xi b_{21} +2Z_{11} b_{20} +2Z_{20} b_{02} +\Re (F_{11}
\overline{q_{b'_{10}}} +F_{20} \overline{q_{b'_{01}}}) = 0\\
       \Omega_{12} -i \xi b_{12} +2Z_{11} b_{02} +2Z_{20} b_{20} +\Re (F_{11}
\overline{q_{b'_{01}}} +F_{20} \overline{q_{b'_{10}}}) = 0\\
       (q_f, \Omega'_{10}) +i\xi (f, b'_{10})_{L^2} +Q_L(f, b'_{10}) =0\\
       (q_f, \Omega'_{01}) +i\xi (f, b'_{01})_{L^2} +Q_L(f, b'_{01}) =0
          \end{array} \right..$$

\noindent The last two equations of this system can be solved in a
way similar to the ones system \eqref{system-a}, and the proof
follows.
\end{proof}

\begin{rmk}
From the proof of the previous proposition it also follows that the
remainder $\widehat{\Omega}_R$ can be estimated as $\Omega_R$,
namely
$$|\widehat{\Omega}_R| \leq \mathcal{R}(\omega, |z|
+\|f\|_{L^{\infty}_{w^{-1}}}) (|z|^2
+\|f\|_{L^{\infty}_{w^{-1}}})^2.$$

\end{rmk}

\noindent In the next lemma we prove a uniform bound for $|\omega_T
-\omega|$ on the interval $[0, T]$. For later convenience let us
denote
$$\mathcal{R}_2 (\omega, |z|+\|f\|_{L^{\infty}_{w^{-1}}}) = \mathcal{R}
\left(\max_{0 \leq t \leq T} |\omega_T -\omega|, \max_{0 \leq t \leq
T} (|z|+\|f\|_{L^{\infty}_{w^{-1}}}) \right).$$

\begin{rmk}
Let us note that $|\omega| \leq |\omega_0| +|\omega_0 -\omega_T|
+|\omega -\omega_T|$, then
$$\max_{0 \leq t \leq T} \mathcal{R} (\omega, |z|+\|f\|_{L^{\infty}_{w^{-1}}}) =
\mathcal{R} \left(\max_{0 \leq t \leq T} |\omega_T -\omega|, \max_{0
\leq t \leq T} (|z|+\|f\|_{L^{\infty}_{w^{-1}}}) \right).$$

\end{rmk}

\noindent The next lemma can be proved as in Section 3.5 of
\cite{KKS}.

\begin{lemma}
For any $t \in [0, T]$ we have
$$|\omega_T -\omega| \leq \mathcal{R}_2 (\omega,
|z|+\|f\|_{L^{\infty}_{w^{-1}}}) \left[ \int_t^T
(|z(\tau)|+\|f(\tau)\|_{L^{\infty}_{w^{-1}}})^2 d\tau +\right.$$
$$\left. +(|z_T|+\|f_T\|_{L^{\infty}_{w^{-1}}})^2
+(|z|+\|f\|_{L^{\infty}_{w^{-1}}})^2 \right].$$

\end{lemma}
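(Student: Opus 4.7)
The plan is to pass to the canonical variable $\omega_1$ built in Proposition \ref{zetuno}, which obeys the vastly simpler equation $\dot\omega_1=\widehat\Omega_R$. Writing $\omega_1 = \omega + P$ with
$$P \,:=\, b_{20}z^2+b_{11}z\bar z+b_{02}\bar z^2+b_{30}z^3+\cdots+z(f,b'_{10})_{L^2}+\bar z(f,b'_{01})_{L^2},$$
I first note that every monomial in $P$ has joint homogeneity at least two in $(z,\bar z,f)$, and that, since $b'_{10},b'_{01}\in L^1_w$ (which comes from how they are constructed as solutions of the last two equations of the linear system in the proof of Proposition \ref{zetuno}), the duality bound $|(f,b'_{ij})_{L^2}|\leq C\|f\|_{L^\infty_{w^{-1}}}$ yields, for $|z|$ and $\|f\|_{L^\infty_{w^{-1}}}$ small,
$$|P(t)|\,\leq\,\mathcal{R}_2\bigl(\omega,|z|+\|f\|_{L^\infty_{w^{-1}}}\bigr)\,\bigl(|z(t)|+\|f(t)\|_{L^\infty_{w^{-1}}}\bigr)^2,$$
and analogously at $t=T$ with $z_T,f_T$ in place of $z(t),f(t)$.

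Next, I integrate $\dot\omega_1=\widehat\Omega_R$ on $[t,T]$; since $\omega_1(T)-\omega_1(t)=(\omega_T-\omega(t))+(P_T-P(t))$, this gives
$$\omega_T-\omega(t)\;=\;P(t)-P_T\;+\;\int_t^T \widehat\Omega_R(\tau)\,d\tau.$$
Taking absolute values and inserting the endpoint bounds on $P(t)$ and $P_T$ produces the last two contributions in the claimed inequality, while the remainder estimate $|\widehat\Omega_R(\tau)|\leq \mathcal{R}(\omega,|z|+\|f\|_{L^\infty_{w^{-1}}})(|z(\tau)|^2+\|f(\tau)\|_{L^\infty_{w^{-1}}})^2$ recorded immediately after Proposition \ref{zetuno} gives the integral term. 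The uniformization to a single prefactor $\mathcal{R}_2$ over $[0,T]$ uses the remark preceding the lemma: the decomposition $|\omega|\leq|\omega_0|+|\omega_0-\omega_T|+|\omega-\omega_T|$ allows the pointwise coefficient $\mathcal{R}(\omega,\cdot)$ to be majorized by $\mathcal{R}_2(\omega,|z|+\|f\|_{L^\infty_{w^{-1}}})$ throughout $[0,T]$.

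The subtle point I expect is less in the estimates themselves than in the bookkeeping: the factor $\mathcal{R}_2$ already depends on $\sup_{[0,T]}|\omega_T-\omega|$, so the resulting inequality is really a self-improving one rather than an absolute bound, and its usefulness rests on a bootstrap in which $\mathcal{R}_2$ stays small by assumption. One must also check that the corrector coefficients $b_{ij}(\omega)$ and $b'_{ij}(\cdot,\omega)$ constructed in Proposition \ref{zetuno} are uniformly bounded in $\omega$ on the admissible range, so that the $\omega$-dependence they introduce can legitimately be absorbed into $\mathcal{R}_2$. Beyond this, the argument is the straight integration already exploited in Section 3.5 of \cite{KKS}, to which I would refer for the remaining routine details.
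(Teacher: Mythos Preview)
Your proposal is correct and follows exactly the route the paper intends: the paper itself gives no proof but simply refers to Section~3.5 of \cite{KKS}, and what you have written is precisely the argument carried out there---pass to the corrected variable $\omega_1=\omega+P$, integrate $\dot\omega_1=\widehat\Omega_R$ on $[t,T]$, and bound the endpoint correctors $P(t),P_T$ and the integrated remainder separately. Your remarks on absorbing the $\omega$-dependence of the coefficients into $\mathcal{R}_2$ and on the bootstrap nature of the resulting inequality are also accurate.
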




\subsection{Canonical form of the equation for $\gamma$}
Equations \eqref{sv-gamma} for $\gamma$ and \eqref{sv-omega} for
$\omega$ differ just because in general $\Gamma_{11} \neq 0$. But we
can perform the same change of variable in the previous subsection,
namely
$$\gamma_1 = \gamma +d_{20} z^2 +d_{02} \overline{z}^2
+d_{30} z^3 +d_{21} z^2\overline{z} +d_{12} z\overline{z}^2 +d_{03}
\overline{z}^3 +z (f, d'_{10})_{L^2} +\overline{z}(f,
d'_{01})_{L^2},$$

\noindent for some suitable coefficients $d_{ij} = d_{ij}(\omega)$,
with $i$, $j = 0$, $1$, $2$, $3$, and vector functions $d'_{ij} =
d'_{ij}(x, \omega)$, with $i$, $j = 0$, $1$. Then the function
$\gamma_1$ solves the differential equation
$$\dot{\gamma}_1 = \Gamma_{11}(\omega) z \overline{z} +\widehat{\Gamma}_R,$$

\noindent for some remainder $\widehat{\Gamma}_R$, which can be
estimated as $\Gamma_R$, i.e.
$$|\widehat{\Gamma}_R| \leq \mathcal{R}(\omega, |z|
+\|f\|_{L^{\infty}_{w^{-1}}}) (|z|^2
+\|f\|_{L^{\infty}_{w^{-1}}})^2.$$

\subsection{Canonical form of the equation for $z$} \label{canonical-z}
Exploiting the change of variable \eqref{ansatz-h1} used to obtain
the canonical form of equation \eqref{exp-h} for $h$, one can prove
the following proposition.
\begin{prop}
There exist coefficients $c_{ij} = c_{ij}(\omega)$, with $i$, $j =
0$, $1$, $2$, $3$, such that function
$$z_1 = z +c_{20} z^2 +c_{11} z\overline{z} +c_{02} \overline{z}^2
+c_{30} z^3 +c_{21} z^2\overline{z} +c_{03} \overline{z}^3,$$

\noindent solves a differential equation of the form
\begin{equation}\label{zeta1}
\dot{z_1} = i\xi z_1 +i K |z_1|^2 z_1 +\widehat{Z}_R,
\end{equation}
\noindent where
$$iK = Z_{21} +Z'_{21} +\frac{i}{\xi} Z_{20} Z_{11} -\frac{i}{\xi}
Z_{11}^2 -\frac{2i}{3\xi} Z_{02}^2,$$

\noindent with the coefficient $Z_{ij}$, $i$, $j = 0, 1, 3$, defined
in \eqref{eqZij}, and
$$\widetilde{Z}_R = (g +P^c_Th_1 +P^c_Tk_1, Z'_{10}) z +(g +P^c_Th_1 +P^c_Tk_1,
Z'_{01}) \overline{z} +Z_R.$$

\end{prop}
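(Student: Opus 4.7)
The plan is to perform a standard Poincar\'e normal form reduction, following Buslaev and Sulem in \cite{BS} (Proposition 4.1) and the adaptation to NLS in \cite{KKS}. I would look for a polynomial near-identity transformation
$$z_1 \;=\; z \;+\; \sum_{(i,j)} c_{ij}(\omega)\, z^{i}\bar z^{\,j},\qquad 2\le i+j\le 3,$$
and demand that the induced equation for $z_1$ retain only the resonant cubic monomial $z^2\bar z$, modulo remainders controllable by the machinery already set up in Section \ref{modeq}. Differentiating and substituting \eqref{sv-z} (together with its complex conjugate) for $\dot z,\dot{\bar z}$, the quantity $\dot z_1 - i\xi z_1$ turns into a polynomial in $(z,\bar z)$ of degree $\le 3$ plus a genuine remainder. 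For each monomial $z^i\bar z^{\,j}$ its coefficient on the right-hand side is the sum of three pieces: the original $Z_{ij}$ from \eqref{sv-z}; the linear contribution $i\xi(i-j-1)\,c_{ij}$ coming from $i\xi z$ and $-i\xi\bar z$ acting on $c_{ij}z^i\bar z^{\,j}$; and the cascades obtained when derivatives of the lower-order monomials $c_{kl}z^k\bar z^{\,l}$ hit the nonlinear part of $\dot z$ or $\dot{\bar z}$. The $\omega$-derivative $\dot\omega\,\partial_\omega c_{ij}\,z^i\bar z^{\,j}$ is already quartic in $(z,\bar z,f)$ by \eqref{sv-omega}, so it is absorbed into $\widehat Z_R$.

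For each \emph{non-resonant} index $(i,j)$ with $i+j\le 3$ and $i-j\neq 1$, the matching condition is a scalar equation of the form
$$(i-j-1)\,i\xi\,c_{ij} \;=\; -\,\bigl[\,Z_{ij} + (\text{cascade from } c_{kl},\,k+l<i+j)\,\bigr],$$
uniquely solvable because $i-j-1\neq 0$. One first fixes $c_{20},c_{11},c_{02}$ at quadratic order, namely
$$c_{20}=\frac{i}{\xi}Z_{20},\qquad c_{11}=-\frac{i}{\xi}Z_{11},\qquad c_{02}=-\frac{i}{3\xi}Z_{02},$$
and then $c_{30},c_{03}$ and the coefficient of $z\bar z^{\,2}$ at cubic order using the quadratic data. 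The resonant index $(2,1)$ is genuinely obstructed: no choice of $c_{21}$ removes $z^2\bar z$, and the surviving coefficient defines $iK$.

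Summing all contributions to the $z^2\bar z$ monomial --- the direct $Z_{21}$, and the cascades $2c_{20}Z_{11}$, $c_{11}(Z_{20}+\overline{Z_{11}})$, $2c_{02}\overline{Z_{02}}$ produced by differentiating $c_{20}z^2+c_{11}z\bar z+c_{02}\bar z^2$ against the $N_2,\overline{N_2}$ parts of $\dot z,\dot{\bar z}$ --- and substituting the values above, one recovers, using that $Z_{11},Z_{02}\in\R$ from \eqref{eqZij} (since $\kappa\in\R$),
$$iK \;=\; Z_{21} + Z'_{21} + \frac{i}{\xi}Z_{20}Z_{11} - \frac{i}{\xi}Z_{11}^{2} - \frac{2i}{3\xi}Z_{02}^{2}.$$
The extra summand $Z'_{21}$ accounts for the $f$-coupling terms $z\,\Re(q_f\overline{Z'_{10}})+\bar z\,\Re(q_f\overline{Z'_{01}})$ in \eqref{sv-z}: expanding $f$ through the canonical form \eqref{ansatz-h1} as $f=g+h_1+k+k_1$, the polynomial piece $k=a_{20}z^2+a_{11}z\bar z+a_{02}\bar z^2$ contributes cubic monomials of the form $z\cdot k$ and $\bar z\cdot k$, whose $z^2\bar z$ components assemble into $Z'_{21}$. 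Everything non-polynomial in $(z,\bar z)$ --- the couplings against $g+h_1+k_1$, the $\dot\omega\,\partial_\omega c_{ij}$ cascade, the higher-order residues of the quadratic substitutions, and the original $Z_R$ --- goes into $\widehat Z_R$.

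The main technical obstacle is not the algebraic matching but the bookkeeping of the remainder: one must verify that $\widehat Z_R$ retains the quadratic bound
$$|\widehat Z_R|\;\le\;\mathcal R\bigl(\omega,\,|z|+\|f\|_{L^\infty_{w^{-1}}}\bigr)\,\bigl(|z|^2+\|f\|_{L^\infty_{w^{-1}}}\bigr)^2$$
already established for $Z_R$. This amounts to checking that every cascade generated by the change of variable has at least this order of smallness, which in turn follows from the bounds for $\dot\omega,\dot\gamma$ derived from \eqref{sv-omega}--\eqref{sv-gamma}, the $L^\infty_{w^{-1}}$-boundedness of the coefficients $a_{ij}$, and the decay of $k_1$ provided by Lemma \ref{stime-evoluz}.
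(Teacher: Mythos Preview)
Your approach is essentially the paper's: first feed the decomposition \eqref{ansatz-h1} into \eqref{sv-z} so that the polynomial piece $k=a_{20}z^2+a_{11}z\bar z+a_{02}\bar z^2$ generates the extra cubic coefficients $Z'_{ij}$, then perform the Poincar\'e normal-form substitution and match coefficients of the non-resonant monomials, leaving the resonant $z^2\bar z$ term as $iK$. One factual slip: $\kappa=-(\Psi,J\Psi)_{L^2}$ is purely \emph{imaginary}, not real (see \eqref{kappa}), so the $Z_{ij}$ in \eqref{eqZij} are purely imaginary rather than real; this is in fact what the paper uses later to conclude $\Re(iK)=\Re(Z'_{21})$, and it also means the cascade terms involve $\overline{Z_{ij}}=-Z_{ij}$ rather than $Z_{ij}$ when you conjugate $\dot z$.
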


\noindent The proof is a matter of calculation, but we give it explicitly to
stress the role of the functions $a_{ij}$, $i$, $j = 0$, $1$, $2$.

\begin{proof}
Substituting \eqref{ansatz-h1} in the equation \eqref{sv-z} the
differential equation for $z$ becomes
\begin{equation}    \label{eq-z-h1}
\dot{z} = i\xi z +Z_{20}z^2 +Z_{11}z\overline{z}+
Z_{02}\overline{z}^2 +Z_{30}z^3 +Z_{21}z^2 \overline{z}
+Z_{12}z\overline{z}^2 +Z_{03}\overline{z}^3 +
\end{equation}
$$+Z'_{30}z^3 +Z'_{21}z^2 \overline{z} +Z'_{12}z\overline{z}^2
+Z'_{03}\overline{z}^3 +\widetilde{Z}_R,$$

\noindent where
$$Z'_{30} = \Re (q_{a_{20}} \overline{Z'_{10}}),$$
$$Z'_{03} = \Re (q_{a_{02}} \overline{Z'_{01}}),$$
$$Z'_{21} = \Re (q_{a_{11}} \overline{Z'_{10}}) +\Re (q_{a_{20}} \overline{Z'_{01}}),$$
$$Z'_{12} = \Re (q_{a_{11}} \overline{Z'_{01}}) +\Re (q_{a_{02}} \overline{Z'_{10}}),$$

\noindent and the remainder $\widetilde{Z}_R$ is as in the statement
of the proposition.

\noindent Inserting equation \eqref{eq-z-h1} into the time
derivative of the expression for $z_1$ and equating the coefficients
of $z^2$, $z \overline{z}$, $\overline{z}^2$, $z^3$, $z
\overline{z}^2$, and $\overline{z}^3$ one obtains the system
$$\left\{ \begin{array}{ll}
  i\xi c_{20} +Z_{20} = 0\\
  -i\xi c_{11} +Z_{11} = 0\\
  -3i\xi c_{02} +Z_{02} = 0\\
  2i\xi c_{30} +Z_{30} +Z'_{30} +2c_{20} Z_{20} +c_{11} Z_{20} = 0\\
  Z_{12} +Z'_{12} +2c_{20} Z_{20} +c_{11} (Z_{11} +Z_{02}) +2c_{02}
  Z_{11} -2i\xi c_{12} = 0\\
  -4i\xi c_{03} +Z_{03} +Z'_{03} +c_{11} Z_{02} = 0
\end{array} \right.$$

\noindent The theorem follows from the fact the the above system is
solvable and in particular
$$c_{20} = \frac{i}{\xi} Z_{20}, \quad c_{11} = -\frac{i}{\xi}
Z_{11}, \quad \textrm{and} \quad c_{02} = \frac{i}{3\xi} Z_{02}.$$

\end{proof}

\begin{rmk}
For later convenience let us note that, since $Z_{21}$, $Z_{20}$,
$Z_{11}$, and $Z_{02}$ are purely imaginary, one has
$$\Re(iK) = \Re(Z'_{21}) 
.$$

\noindent Moreover, 
we need the following lemma.
\begin{lemma}   \label{reale}
There exists $\sigma^* \in \left( \frac{1}{\sqrt{2}}, \frac{\sqrt{3} +1}{2
\sqrt{2}}
\right]$ such that if $\sigma \in \left( \frac{1}{\sqrt{2}}, \sigma^* \right)$,
then
$$\Re(Z'_{21}) < 0,$$

\noindent $\forall \omega $ belonging to an open neighbourhood of $\omega_0$.
\end{lemma}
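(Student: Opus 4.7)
The plan is to reduce $\Re(Z'_{21})$ to an explicit expression in which the dissipative contribution comes from the spectral density of $L_T$ at the second harmonic $2i\xi_T$, and then verify the sign by an explicit calculation near the endpoint $\sigma=1/\sqrt{2}$, extending the result to a neighborhood by continuity.

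First, I would exploit the formulas in the previous subsection to isolate the only term that can contribute a nonzero real part. From the derivation of $z_1$ we have
\begin{equation*}
Z'_{21} \;=\; \Re\bigl(q_{a_{11}}\,\overline{Z'_{10}}\bigr) + \Re\bigl(q_{a_{20}}\,\overline{Z'_{01}}\bigr),
\end{equation*}
with $a_{11}=-L_T^{-1}P^c_T H_{11}$ and $a_{20}=-(L_T-2i\xi_T-0)^{-1}P^c_T H_{20}$. The point is that $0$ is not in the continuous spectrum of $L_T\!\restriction_{X^c_T}$, so $a_{11}$ is obtained by inverting a boundedly invertible operator; using the explicit structure of $H_{11}$ and of $Z'_{10}=2JN_2(q_{\Psi^*},q_\Psi)/\overline\kappa$ (all purely imaginary, by the same symmetry argument that makes $Z_{20},Z_{11},Z_{02},Z_{21}$ purely imaginary in the earlier Remark), I would show that the first summand contributes no real part. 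Thus only the $a_{20}$-term survives, and the whole question reduces to computing $\Re\bigl(q_{a_{20}}\,\overline{Z'_{01}}\bigr)$.

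Next I would invoke the spectral representation of $L_T$ on $X^c_T$, developed in Appendix C--D via the generalized eigenfunctions $\Psi_\pm(\eta)$ with $\eta$ in the branches $\mathcal{C}_\pm$ of the continuous spectrum. At the spectral parameter $2i\xi_T$ the Sokhotski--Plemelj formula gives a formal identity of the type
\begin{equation*}
(L_T-2i\xi_T-i0)^{-1} \;=\; \mathrm{P.V.}\,(L_T-2i\xi_T)^{-1} + i\pi\,\delta(L_T-2i\xi_T),
\end{equation*}
so that the charge $q_{a_{20}}$ splits into a principal value (real) piece and a point contribution supported at the generalized eigenfunction $\Psi_+(2i\xi_T)$. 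The imaginary part of $q_{a_{20}}$ is thus, up to a positive constant, the projection of $H_{20}$ (which is itself built from $JN_2(q_\Psi)$) onto $\Psi_+(2i\xi_T)$. Pairing with $\overline{Z'_{01}}\propto \overline{JN_2(q_\Psi)}$ produces a sesquilinear form whose real part is, up to positive constants, exactly $\bigl|\langle JN_2(q_\Psi),\Psi_+(2i\xi_T)\rangle\bigr|^2$, i.e.\ precisely the quantity appearing in the FGR nondegeneracy condition \eqref{FGR}. Up to sign, this is unavoidably negative because of the $(L_T-2i\xi_T-i0)^{-1}$ prescription, and the FGR quantity is manifestly nonnegative; the product therefore has a definite sign.

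The main obstacle is the explicit verification that this definite sign is strictly negative and that the FGR quantity does not vanish. For this I would evaluate the squared pairing at $\sigma=1/\sqrt{2}$ using the closed forms for $\Psi$, $\Psi^*$, $\Psi_+(2i\xi)$ and the charges $q_\omega,q_\Psi,q_{\Psi^*}$ given in Appendices C and D, exploiting the fact that all eigenfunctions are purely radial, exponentially decaying combinations of $G_\lambda$'s with explicit $\omega$-dependent parameters, so the integrals reduce to elementary ones. A direct but lengthy computation yields a strictly negative limit value. Combining this with the joint continuity of the coefficients in $(\sigma,\omega)$ in the region where $2i\xi$ stays in the interior of the continuous spectrum (i.e.\ $\sigma<\tfrac{\sqrt 3+1}{2\sqrt 2}$) and in a neighborhood of $\omega_0$, one obtains a range $(\tfrac{1}{\sqrt 2},\sigma^*)$ on which $\Re(Z'_{21})<0$ uniformly; one then sets $\sigma^*$ to be the supremum of such $\sigma$, which by construction lies in $(\tfrac{1}{\sqrt 2},\tfrac{\sqrt 3+1}{2\sqrt 2}]$. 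The sharpness (or not) of this $\sigma^*$ versus the natural upper endpoint is exactly the place where the paper reports numerical evidence but not a rigorous proof.
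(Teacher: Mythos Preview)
Your reduction to the $a_{20}$-term is correct and matches the paper: the contribution of $a_{11}$ vanishes (the paper's reason is that $\kappa$ is purely imaginary while $L_T^{-1}P^c_TJ$ is self-adjoint, so the relevant quotient is purely imaginary), leaving $\Re(Z'_{21})=-2\Re\bigl(q_{a_{20}}\,\overline{JN_2(q_\Psi)}/\kappa\bigr)$. From there, however, the paper does \emph{not} pass through Sokhotski--Plemelj or an abstract FGR computation. It computes $a_{20}=-(L_T-2i\xi_T-0)^{-1}H_{20}$ explicitly from the resolvent formula (Theorem~\ref{ris}), extracts the charge $q_{a_{20}}$ in closed form, and after several pages of algebra arrives at a factorization
\[
\Re(Z'_{21})=\frac{128\pi\,\omega_T^{3/2}\,|q_{\omega_T}|^{4\sigma-2}}{i\kappa\,|d|^2}\;\sigma^2\Bigl(1-\tfrac{\sqrt{1-\sigma^2}-1}{\sigma}\Bigr)^2 f(\sigma),
\]
with $d=d(\sigma,\omega_T)$ explicit and $f$ an explicit real function of $\sigma$ alone. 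Since $\kappa$ is purely imaginary with positive imaginary part, $i\kappa<0$ is real, and all the other prefactors are positive; thus the sign of $\Re(Z'_{21})$ equals the sign of $-f(\sigma)$.

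Here is where your endpoint argument breaks. You claim the computation at $\sigma=1/\sqrt{2}$ gives a \emph{strictly negative limit value} for $\Re(Z'_{21})$. It does not: at $\sigma=1/\sqrt{2}$ the eigenvalue $i\xi$ hits the threshold $i\omega$, the eigenfunction $\Psi$ degenerates to a resonance, and $\kappa\to i\infty$ (the denominator $\sigma^2(2\sigma^2-1)$ in the explicit formula for $\kappa$ vanishes). Consequently $\Re(Z'_{21})\to 0^-$, not to a strictly negative number. The paper's argument is instead that $f$ is continuous with $f(1/\sqrt{2})=\widetilde f>0$, and $d\to\widetilde d\neq 0$, so the \emph{sign} persists in a right-neighborhood of $1/\sqrt{2}$ even though the magnitude vanishes at the endpoint. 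Your Sokhotski--Plemelj route can be made to yield the same conclusion, but you would have to track the $\kappa^{-1}$ prefactor carefully and argue that the FGR pairing (which involves $q_\Psi$, not $\Psi$ itself) stays nonzero as the eigenfunction delocalizes; that is exactly the content of $f(1/\sqrt{2})>0$ in the paper's computation, and it is not a free consequence of the abstract $-i0$ prescription.
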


\begin{proof}
First of all recall that $\xi_T = 2 \sigma \sqrt{1 -\sigma^2} \omega_T$, then
one can compute
\begin{equation}    \label{kappa}
\kappa = -(\Psi, J \Psi)_{L^2} = \frac{i}{4\pi \sqrt{\omega_T}}
\left( \frac{1}{\sqrt{1 -2 \sigma \sqrt{1 -\sigma^2}}}
-\frac{(\sqrt{1 -\sigma^2} -1)^2}{\sigma^2} \frac{1}{\sqrt{1 +2
\sigma \sqrt{1 -\sigma^2}}} \right) =
\end{equation}
$$= \frac{i}{4\pi \sqrt{\omega_T}} \frac{\sigma^2 \sqrt{1 +2\sigma
\sqrt{1 -\sigma^2}} -(\sqrt{1-\sigma^2} -1)^2
\sqrt{1-2\sigma\sqrt{1-\sigma^2}}}{\sigma^2 (2\sigma^2 -1)}.$$

\noindent Since $\kappa$ is purely imaginary with
positive imaginary part and $L_T^{-1} 2 P_T^c J$ is self-adjoint, for the first
summand in the expression for $\Re(Z'_{21})$ one gets
$$\Re (q_{a_{11}} \overline{Z'_{10}}) = -2 \Re \left( \frac{q_{L_T^{-1} 2 P_T^c
J N_2(q_{\Psi}, q_{\Psi^*})} \overline{J N_2(q_{\Psi}, q_{\Psi^*})}}{\kappa}
\right) = 0.$$

\noindent Hence,
$$\Re(Z'_{21}) = -2 \Re \left( \frac{q_{a_{20}}
\overline{J N_2(q_{\Psi})}}{\kappa} \right).$$

\noindent By direct computations one has
$$a_{20}(x) = (L_T -(2i \xi_T +0))^{-1} H_{20} =  A \frac{e^{-\sqrt{\omega_T
+2\xi_T} |x|}}{4\pi |x|} \left( \begin{array}{cc}
1\\
-i
\end{array} \right) +C \frac{e^{-i\sqrt{-\omega_T +2\xi_T} |x|}}{4\pi |x|}
\left( \begin{array}{cc}
1\\
i
\end{array} \right),$$

\noindent with
$$\begin{array}{ll}
A = -\frac{4\pi}{d} [((2\sigma +1) \sqrt{\omega_T} -i \sqrt{-\omega_T +2\xi_T})
(H_{20})_1 + (i \sqrt{\omega_T} +\sqrt{-\omega_T +2\xi_T}) (H_{20})_2]\\

C = \frac{4\pi}{d} [((2\sigma +1) \sqrt{\omega_T} -\sqrt{\omega_T +2\xi_T})
(H_{20})_1 -(i \sqrt{\omega_T} -i \sqrt{\omega_T +2\xi_T}) (H_{20})_2]
\end{array},$$

\noindent where $d = 2i (2\sigma +1) \omega_T +2(\sigma +1)
\sqrt{\omega_T} \sqrt{-\omega_T +2\xi_T} -2i (\sigma +1)
\sqrt{\omega_T} \sqrt{\omega_T +2\xi_T} -2 \sqrt{\omega_T +2\xi_T}
\sqrt{-\omega_T +2\xi_T}$. From which follows
$$q_{a_{20}} = \frac{4\pi}{d} \left[ \left( \begin{array}{cc}
(i \sqrt{-\omega_T +2\xi_T} -\sqrt{\omega_T +2\xi_T}) (H_{20})_1\\

((2\sigma +1) \sqrt{\omega_T} -\sqrt{\omega_T +2\xi_T} -\sqrt{-\omega_T
+2\xi_T}) (H_{20})_1
\end{array} \right)  +\right.$$
$$\left. +\left( \begin{array}{cc}
-i (2\sqrt{\omega_T} +\sqrt{\omega_T +2\xi_T} +i \sqrt{-\omega_T
+2\xi_T}) (H_{20})_2\\
(- \sqrt{\omega_T +2\xi_T} +i \sqrt{-\omega_T +2\xi_T}) (H_{20})_2
\end{array} \right) \right].$$

\noindent Hence
\begin{equation}    \label{a20ReIm}
\begin{array}{ll}
\Re((q_{a_{20}})_1) = & \frac{16\pi}{|d|^2} [i(H_{20})_1 (-(\sigma
+1) \sqrt{\omega_T} \sqrt{\omega_T +2\xi_T} \sqrt{-\omega_T +2\xi_T}
+((\sigma +1) \omega_T +\xi_T) \sqrt{-\omega_T +2\xi_T})
+\\
& (H_{20})_2 (-(2(\sigma +1)\xi_T +(2\sigma +1) \omega_T) \sqrt{\omega_T} +(\xi_T +(2\sigma +1) \omega_T) \sqrt{\omega_T +2\xi_T})]\\
\Im((q_{a_{20}})_2) = & \frac{16\pi}{|d|^2} [i(H_{20})_1 ((2(\sigma +1)^2 \omega_T +\xi_T) \sqrt{-\omega_T +2\xi_T}
-(3\sigma +2) \sqrt{\omega_T} \sqrt{\omega_T +2\xi_T} \sqrt{-\omega_T +2\xi_T}) +\\
& +(H_{20})_2 ((\sigma +1) \omega_T^{3/2} +((\sigma +1) \omega_T
-\xi_T) \sqrt{\omega_T +2\xi_T})].
\end{array}
\end{equation}

\noindent Moreover, by \eqref{N} one gets
\begin{equation}    \label{N2}
 J N_2(q_{\Psi}) = \left( \begin{array}{cc}
-2\sigma |q_{\omega_T}|^{2\sigma -1} (q_{\Psi})_1 (q_{\Psi})_2\\

\sigma |q_{\omega_T}|^{2\sigma -1} (3 (q_{\Psi})_1^2 +(q_{\Psi})_2^2)+2 \sigma
(\sigma -1) |q_{\omega_T}|^{2\sigma-1} (q_{\Psi})_1^2
\end{array} \right) =
\end{equation}
$$= \left( \begin{array}{cc}
-2 i \sigma |q_{\omega_T}|^{2\sigma -1} \left( 1 -\frac{\sqrt{1 -\sigma^2}
-1}{\sigma} \right) \left( 1 +\frac{\sqrt{1 -\sigma^2} -1}{\sigma} \right)\\

2 \sigma |q_{\omega_T}|^{2\sigma -1} \left( 1 -\frac{\sqrt{1 -\sigma^2}
-1}{\sigma} \right)
\end{array} \right),$$

\noindent which implies
\begin{equation}    \label{H20}
H_{20} = (8 \pi \sqrt{\omega_T} P^c_T JN_2(q_{\Psi}) G_{\omega_T},
G_{\omega_T})_{L^2} =
\end{equation}
$$= JN_2(q_{\Psi}) -\frac{(JN_2(q_{\Psi}))_1 |q_{\omega_T}|}{ 16\pi \Delta
\omega^{3/2}} \left( \frac{1}{\sigma} -1 \right) \left( \begin{array}{cc}
1\\
0
\end{array} \right)+$$
$$+ \frac{\sqrt{\omega_T}}{\kappa} \left( \begin{array}{cc}
-(JN_2(q_{\Psi}))_2 \left( \frac{1}{\sqrt{\omega_T -\xi_T} +\sqrt{\omega_T}}
-\frac{\sqrt{1 -\sigma^2} -1}{\sigma} \frac{1}{\sqrt{\omega_T +\xi_T}
+\sqrt{\omega_T}} \right)^2\\

(JN_2(q_{\Psi}))_1 \left( \frac{1}{\sqrt{\omega_T -\xi_T} +\sqrt{\omega_T}}
+\frac{\sqrt{1 -\sigma^2} -1}{\sigma} \frac{1}{\sqrt{\omega_T +\xi_T}
+\sqrt{\omega_T}} \right)^2
\end{array} \right).$$

\noindent Let us notice that \eqref{N2} and \eqref{H20} imply
$$\begin{array}{ll}
i(H_{20})_1 i(JN_2(q_{\Psi}))_1 = & -\frac{1}{2\sigma -1} (JN_2(q_{\Psi}))_1^2
+\\
& +\frac{\sqrt{\omega_T}}{4\pi i\kappa} \left( \frac{1}{\sqrt{\omega_T -\xi_T}
+\sqrt{\omega_T}} -\frac{\sqrt{1 -\sigma^2} -1}{\sigma} \frac{1}{\sqrt{\omega_T
+\xi_T} +\sqrt{\omega_T}} \right)^2 i(JN_2(q_{\Psi}))_1 (JN_2(q_{\Psi}))_2\\
(H_{20})_2 i(JN_2(q_{\Psi}))_1 = & (JN_2(q_{\Psi}))_2 i(JN_2(q_{\Psi}))_1 +\\
& -\frac{\sqrt{\omega_T}}{4\pi i\kappa} \left( \frac{1}{\sqrt{\omega_T -\xi_T}
+\sqrt{\omega_T}} +\frac{\sqrt{1 -\sigma^2} -1}{\sigma} \frac{1}{\sqrt{\omega_T
+\xi_T} +\sqrt{\omega_T}} \right)^2 (JN_2(q_{\Psi}))_1^2\\
i(H_{20})_1 (JN_2(q_{\Psi}))_2 = & \frac{1}{2\sigma -1} i(JN_2(q_{\Psi}))_1
(JN_2(q_{\Psi}))_2 +\\
& +\frac{\sqrt{\omega_T}}{4\pi i\kappa} \left( \frac{1}{\sqrt{\omega_T -\xi_T}
+\sqrt{\omega_T}} -\frac{\sqrt{1 -\sigma^2} -1}{\sigma} \frac{1}{\sqrt{\omega_T
+\xi_T} +\sqrt{\omega_T}} \right)^2 (JN_2(q_{\Psi}))_2^2\\
(H_{20})_2 (JN_2(q_{\Psi}))_2 = & (JN_2(q_{\Psi}))_2^2
+\\
& +\frac{\sqrt{\omega_T}}{4\pi i\kappa} \left( \frac{1}{\sqrt{\omega_T -\xi_T}
+\sqrt{\omega_T}} +\frac{\sqrt{1 -\sigma^2} -1}{\sigma} \frac{1}{\sqrt{\omega_T
+\xi_T} +\sqrt{\omega_T}} \right)^2 i(JN_2(q_{\Psi}))_1 (JN_2(q_{\Psi}))_2,
\end{array}$$

\noindent then by \eqref{kappa} and \eqref{a20ReIm} it follows
$$\Re(Z'_{21}) = -2 \Re \left( \frac{q_{a_{20}}
\overline{J N_2(q_{\Psi})}}{\kappa} \right) = \frac{2}{i\kappa}
(\Re((q_{a_{20}})_1) i(JN_2(q_{\Psi}))_1 +\Im((q_{a_{20}})_2)
(JN_2(q_{\Psi}))_2) =$$
$$= \frac{128 \pi \omega_T^{3/2} |q_{\omega_T}|^{4\sigma -2}}{i\kappa
|d|^2} \sigma^2 \left( 1 -\frac{\sqrt{1 -\sigma^2} -1}{\sigma}
\right)^2 f(\sigma),$$

\noindent with

$$ f(\sigma) =
\left( \left( 2 (1+\sigma)^2+2 \sigma \sqrt{1-\sigma^2}\right)
\sqrt{-1+4 \sigma \sqrt{1-\sigma^2}}\right. -(2+3 \sigma) \sqrt{-1+4
\sigma \sqrt{1-\sigma^2}} \sqrt{1+4 \sigma \sqrt{1-\sigma^2}}+$$
$$+ \left(1+\frac{-1+\sqrt{1-\sigma^2}}{\sigma}\right) \left.\left((-1-\sigma) \sqrt{-1+16 \sigma^2-16
\sigma^4}+\left(1+\sigma+2 \sigma \sqrt{1-\sigma^2}\right)
\sqrt{-1+4 \sigma \sqrt{1-\sigma^2}}\right)\right) \cdot$$
$$\cdot \left(\frac{1}{-1+2
\sigma}-\frac{\left(\frac{1}{1+\sqrt{1-2 \sigma
\sqrt{1-\sigma^2}}}-\frac{-1+\sqrt{1-\sigma^2}}{\sigma+\sigma
\sqrt{1+2 \sigma \sqrt{1-\sigma^2}}}\right)^2}{\frac{1}{\sqrt{1-2
\sigma
\sqrt{1-\sigma^2}}}-\frac{\left(-1+\sqrt{1-\sigma^2}\right)^2}{\sigma^2
\sqrt{1+2 \sigma \sqrt{1-\sigma^2}}}}\right)+$$
$$+\left(1+\sigma+\left(1+\sigma-2 \sigma \sqrt{1-\sigma^2}\right)
\sqrt{1+4 \sigma \sqrt{1-\sigma^2}}+\right.$$
$$+ \left(1+\frac{-1+\sqrt{1-\sigma^2}}{\sigma}\right)
\left.\left(-1-2 \sigma+\left(-4 \sigma-4 \sigma^2\right)
\sqrt{1-\sigma^2}+\left(1+2 \sigma+2 \sigma \sqrt{1-\sigma^2}\right)
\sqrt{1+4 \sigma \sqrt{1-\sigma^2}}\right)\right) \cdot$$
$$\cdot \left(1-\frac{\left(\frac{1}{1+\sqrt{1-2 \sigma
\sqrt{1-\sigma^2}}}+\frac{-1+\sqrt{1-\sigma^2}}{\sigma+\sigma
\sqrt{1+2 \sigma \sqrt{1-\sigma^2}}}\right)^2}{\frac{1}{\sqrt{1-2
\sigma
\sqrt{1-\sigma^2}}}-\frac{\left(-1+\sqrt{1-\sigma^2}\right)^2}{\sigma^2
\sqrt{1+2 \sigma \sqrt{1-\sigma^2}}}}\right).$$

\noindent Notice that one has $f(\sigma) \rightarrow \widetilde{f} > 0$, $d
\rightarrow \widetilde{d}\neq 0$, and $i\kappa \rightarrow -\infty$ as
$\sigma \rightarrow 1/\sqrt{2}$; this implies
$$\lim_{\sigma \rightarrow 1/\sqrt{2}} \Re(Z_{21}') = \frac{128
\sqrt{2} \omega_T^{3/2} |q_{\omega_T}|^{2\sqrt{2}-2}}{\pi
|\widetilde{d}|^2} \widetilde{f} \lim_{\sigma \rightarrow
1/\sqrt{2}} \frac{1}{i \kappa} = 0^-.$$

\noindent Hence there is a neighborhood of $\frac{1}{\sqrt{2}}$
where $\Re(Z_{21}')$ is strictly negative. A {\it Mathematica} plot
of the function $f(\sigma)$ in the range $\left( \frac{1}{\sqrt{2}}
, \frac{\sqrt{3} +1}{2 \sqrt{2}} \right)$ is given in figure
\ref{graph}.

\begin{figure}[htbp]    \label{graph}
\begin{center}
\includegraphics[width=9cm]{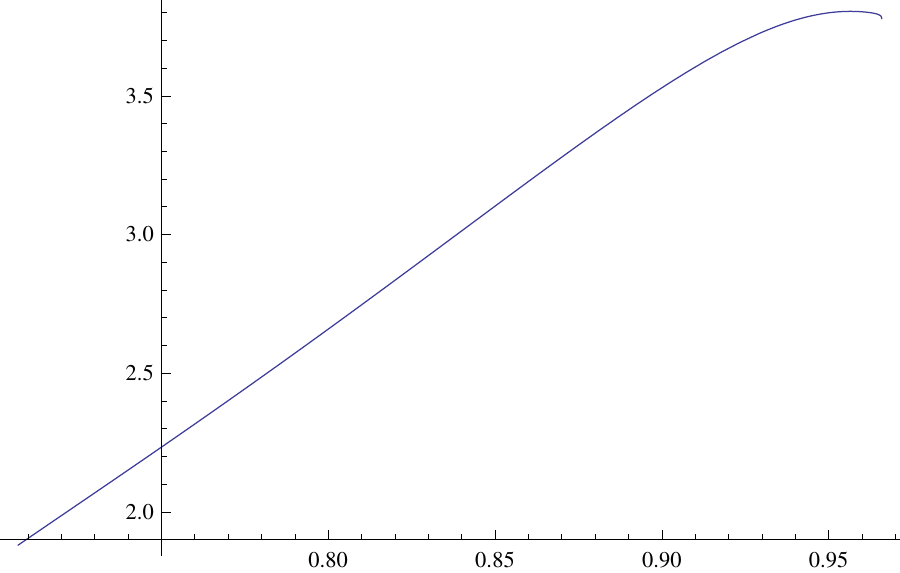}
\caption{$f(\sigma)$}
\end{center}
\end{figure}

\noindent Summing up, one can conclude that there exists $\sigma^* \in \left(
\frac{1}{\sqrt{2}}, \frac{\sqrt{3} +1}{2 \sqrt{2}} \right]$ such that
$\Re(Z_{21}') < 0$ for $\sigma \in \left( \frac{1}{\sqrt{2}}, \sigma^* \right)$.
\end{proof}

\end{rmk}

\begin{rmk}
\noindent The following reformulation of the equation for $z_1$ will
be useful. First of all, if we denote $K_T =
K(\omega_T)$, then the ordinary differential equation  for $z_1$
becomes
$$\dot{z_1} = i\xi z_1 +i K_T |z_1|^2 z_1 +\widehat{\widehat{Z}}_R,$$

\noindent for some remainder $\widehat{\widehat{Z}}_R$.

\noindent Secondly, let us notice that $z_1$ is oscillating while $y
= |z_1|^2$ decreases at infinity. Hence, it is easier to deal with
the variable $y$, which satisfies the equation
\begin{equation}    \label{eq-y}
\dot{y} = 2\Re(iK_T) y^2 +Y_R,
\end{equation}

\noindent where $Y_R$ is some suitable remainder.
\end{rmk}
\begin{rmk} \label{resto-z}
From Lemma \ref{lemma-cont} we have
$$|(g +P^c_Th_1 +P^c_Tk_1, Z'_{10})| \leq \mathcal{R}(\omega)
(\|g\|_{L^{\infty}_{w^{-1}}} +\|P^c_Th_1\|_{L^{\infty}_{w^{-1}}}
+\|P^c_Tk_1\|_{L^{\infty}_{w^{-1}}}) \leq$$
$$\leq \mathcal{R}_1(\omega) (|\omega_T -\omega| \|h\|_{L^{\infty}_{w^{-1}}}
+\|P^c_Th_1\|_{L^{\infty}_{w^{-1}}}
+\|P^c_Tk_1\|_{L^{\infty}_{w^{-1}}}),$$

\noindent hence
$$|Y_R| = |\widehat{\widehat{Z}}_R| |z| = |\widetilde{Z}_R +i(K -K_T) |z_1|^2
z_1| |z| \leq$$
$$\leq \mathcal{R}_1(\omega, |z| +\|f\|_{L^{\infty}_{w^{-1}}}) |z| [(|z|^2
+\|f\|_{L^{\infty}_{w^{-1}}})^2 +|z| |\omega_T -\omega| (|z|^2
+\|h\|_{L^{\infty}_{w^{-1}}}) +$$
$$+|z|\|P^c_Tk_1\|_{L^{\infty}_{w^{-1}}} +|z|
\|P^c_Th_1\|_{L^{\infty}_{w^{-1}}}].$$
\end{rmk}

\section{Majorants}\label{Majo}
In this section we exploit the so-called majorant method to prove
large time asymptotic for the solutions of the modulation equations.
Preliminarily, we need some assumptions on the initial conditions.

\subsection{Initial conditions}
Let us fix some $\epsilon > 0$ to be chosen later in order to obtain a
uniform control in
the estimates. Then we assume that

\begin{equation}    \label{initial-data}
\begin{array}{ll}
|z(0)| \leq \epsilon^{1/2}\\
\|f(0)\|_{L^1_w} \leq c \epsilon^{1/2},
\end{array}
\end{equation}

\noindent where $c > 0$ is some positive constant.

\noindent From the definition of $z_1$ (see Proposition \ref{zetuno}) 
one has
$$z_1 -z = \mathcal{R}(\omega) |z|^2.$$

\noindent Then the following estimate holds
$$y(0) = |z_1(0)|^2 \leq |z(0)|^2 +\mathcal{R}(\omega, |z(0)|) |z(0)|^3 \leq
\epsilon +\mathcal{R}(\omega, |z(0)|) \epsilon^{3/2}.$$

\noindent We also need an estimate for the initial datum of the
function $h(t)$. For this purpose recall that, from the definition of $f$ and $h$ one has the decomposition $h = f +(P^d
-P^d_T)f\ .$
 Hence,
$$\|h(0)\|_{L^1_w} \leq \|f(0)\|_{L^1_w} +\|(P^d -P^d_T)f(0)\|_{L^1_w} \leq
c\epsilon^{3/2} +\mathcal{R}_1(\omega) |\omega_T -\omega|
\|f(0)\|_{L^{\infty}_{w^{-1}}},$$

\noindent for some constant $c > 0$.

\noindent Thanks to the former estimates, one can prove the
following lemma.

\begin{lemma}
Let us assume conditions \eqref{initial-data} on the initial data.
Then
$$\|P^c_T k_1\|_{L^{\infty}_{w^{-1}}} \leq c \frac{|z(0)|^2}{(1+t)^{3/2}} \leq
\frac{c\epsilon}{(1+t)^{3/2}},$$

\noindent for all $t \geq 0$.
\end{lemma}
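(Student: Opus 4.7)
The plan is to write $P^c_T k_1(t)$ as a finite sum of semigroup-applied resolvent vectors and then bound each piece by a direct application of Lemma \ref{stime-evoluz}.

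First I would combine $k = a_{20}z^2 + a_{11}z\bar z + a_{02}\bar z^2$ with the explicit solutions of the system \eqref{system-a},
$$a_{20} = -(L_T - 2i\xi_T - 0)^{-1}H_{20},\qquad a_{11} = -L_T^{-1}H_{11},\qquad a_{02} = -(L_T + 2i\xi_T - 0)^{-1}H_{02},$$
together with the formula $k_1(t) = -\exp\!\big(\int_0^t L_M(s)\,ds\big)\,k(0)$. Applying $P^c_T$ would give $P^c_T k_1(t)$ as a linear combination, with coefficients $z(0)^2,\ |z(0)|^2,\ \bar z(0)^2$, of three terms of the shape $\exp\!\big(\int_0^t L_M(s)\,ds\big)\,(L_T - \mu - 0)^{-1}P^c_T H_{ij}$ with $\mu\in\{2i\xi_T,\ 0,\ -2i\xi_T\}$.

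Next, to pass from the $L_M$-evolution to the $L_T$-evolution required by Lemma \ref{stime-evoluz}, I would use that on $X^c_T$ the projection difference $\Pi^+_T - \Pi^-_T$ is a bounded spectral function of $L_T$ and therefore commutes with $L_T$. Since $L_M(s) = L_T + i\rho(s)(\Pi^+_T - \Pi^-_T)$, this yields the factorization
$$\exp\!\Big(\int_0^t L_M(s)\,ds\Big)\Big|_{X^c_T} \;=\; e^{-tL_T}\cdot\exp\!\big(i\Theta(t)(\Pi^+_T - \Pi^-_T)\big),\qquad \Theta(t) := \int_0^t \rho(s)\,ds$$
(with the authors' sign convention for $L_T$). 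The right factor is multiplication by $e^{\pm i\Theta(t)}$ on each branch of the continuous spectrum and is thus uniformly bounded in $t$, so it does not affect the $L^\infty_{w^{-1}}$ decay.

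Then I would invoke Lemma \ref{stime-evoluz} for the term involving $(L_T + 2i\xi_T - 0)^{-1}H_{02}$, together with its analogues for the shifts $-2i\xi_T$ and $0$ (whose proofs are identical, being the same limiting-absorption construction at different edges of the continuous spectrum of $L_T$). Since the $H_{ij}$ are explicit, essentially rank-one vectors constructed from $\Psi(\omega_T)$ and $G_\lambda$, their $L^1_w$-norms are uniformly bounded by an $\mathcal{R}(\omega_T)$, so each of the three contributions decays like $C(1+t)^{-3/2}$ in $L^\infty_{w^{-1}}$. Summing them gives
$$\|P^c_T k_1(t)\|_{L^\infty_{w^{-1}}} \leq C(1+t)^{-3/2}\,|z(0)|^2 \leq C\epsilon\,(1+t)^{-3/2},$$
where the last step uses $|z(0)|\leq\epsilon^{1/2}$.

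The main obstacle I anticipate is the careful justification that the limiting-absorption resolvents $(L_T \pm 2i\xi_T - 0)^{-1}$ on $X^c_T$ commute, on the range of $P^c_T$, with the time-dependent phase $\exp(i\Theta(t)(\Pi^+_T - \Pi^-_T))$; this is really a spectral-calculus statement for $L_T$ restricted to the continuous spectrum, but it is the one place where the nonautonomous perturbation $i\rho(s)(\Pi^+_T-\Pi^-_T)$ must be handled with care before Lemma \ref{stime-evoluz} can be invoked as a black box. Once this is settled, the estimate reduces to three direct applications of the already-proved dispersive bound.
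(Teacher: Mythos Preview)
Your proposal is correct and follows essentially the same route as the paper: factor $\exp\!\big(\int_0^t L_M\big)$ on $X^c_T$ into $e^{L_T t}$ times a branchwise unitary phase, then apply Lemma~\ref{stime-evoluz} (and its analogues at $-2i\xi_T$ and $0$) to each of the three $a_{ij}$-terms. The obstacle you flag is exactly what the paper handles by writing, from idempotency of the projections, $e^{i\zeta(\Pi^+_T-\Pi^-_T)}=\Pi^+_T e^{i\zeta}+\Pi^-_T e^{-i\zeta}+P^d_T$ and then using that $L_T$ commutes with $\Pi^\pm_T$; since $\Pi^\pm_T$ are spectral projections for $L_T$, they commute with the limiting-absorption resolvents on the range of $P^c_T$, which settles your commutation concern without further work.
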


\begin{proof}
Let us denote $\zeta = \int_0^t \rho(\tau) d\tau$ (the quantity $\rho$
was defined in \eqref{ro}).

\noindent From the definition of the exponential and the idempotency
of the projections one gets
$$e^{i\zeta \Pi^{\pm}_T} = \Pi^{\pm}_T e^{i\zeta} +\Pi^{\mp}_T +P^d_T.$$

\noindent Then it follows
$$e^{i\zeta(\Pi^+_T -\Pi^-_T)} = (\Pi^+_T e^{i\zeta} +\Pi^-_T +P^d_T)
(\Pi^-_T e^{-i\zeta} +\Pi^+_T +P^d_T) = \Pi^+_T e^{i\zeta} +\Pi^-_T
e^{-i\zeta} +P^d_T.$$

\noindent The lemma follows from the fact that $L_T$ commutes with
the projectors $\Pi^{\pm}_T$, the definition \eqref{L_M} of the
operator $L_M$ and the decay of the evolution of the functions
$P_T^c a_{ij}$, $i$, $j = 0$, $1$, $2$, stated in Lemma
\ref{stime-evoluz}, namely
$$\|P^c_T k_1\|_{L^{\infty}_{w^{-1}}} = \left\| e^{\int_0^t L_M(\tau) d\tau}
P^c_T k(0) \right\|_{L^{\infty}_{w^{-1}}} =$$
$$= \| e^{L_T t} P^c_T (e^{i\zeta} \Pi^+_T +e^{-i\zeta}
\Pi^-_T+P^d_T) (a_{20} z^2(0) +a_{11} z(0) \overline{z(0)} +a_{02}
\overline{z(0)}^2) \|_{L^{\infty}_{w^{-1}}} \leq$$
$$\leq c \frac{|z(0)|^2}{(1+t)^{3/2}} \leq \frac{c \epsilon}{(1+t)^{3/2}}.$$

\end{proof}

\subsection{Definition of the majorants}
We are now in the position to define the majorants:
\begin{eqnarray}
M_0(T) = \max_{0 \leq t \leq T} |\omega_T -\omega| \left(
\frac{\epsilon}{1
+\epsilon t} \right)^{-1}\\
M_1(T) = \max_{0 \leq t \leq T} |z(t)| \left( \frac{\epsilon}{1
+\epsilon t} \right)^{-1/2}\\
M_2(T) = \max_{0 \leq t \leq T} \|P^c_T
h_1(t)\|_{L^{\infty}_{w^{-1}}} \left( \frac{\epsilon}{1 +\epsilon t}
\right)^{-3/2}
\end{eqnarray}

\noindent We shall use the following vector notation
\begin{equation}
M = (M_0, M_1, M_2).
\end{equation}

\begin{rmk}
From the estimates on $g$, $k_1$ and the definitions of the
majorants follows
$$\|f\|_{L^{\infty}_{w^{-1}}} = \|g +P^c_Th_1 +P^c_Tk
+P^c_Tk_1\|_{L^{\infty}_{w^{-1}}} \leq$$ $$ \leq
\mathcal{R}_1(\omega) \left( |\omega_T -\omega| +|z|^2
+\frac{\epsilon}{(1+t)^{3/2}} \|P^c_T h_1\|_{L^{\infty}_{w^{-1}}}
\right) \leq$$
$$\leq \frac{\epsilon}{1 +\epsilon t} \mathcal{R}_1(\omega) (M_1^2
+\epsilon^{1/2} M_2).$$

\end{rmk}

\noindent From the assumptions \eqref{initial-data} on the initial
data one obtains
$$y(0) \leq \epsilon +\mathcal{R}(\epsilon^{1/2} M) \epsilon^{3/2} \leq \epsilon
(1 +\mathcal{R}(\epsilon^{1/2} M) \epsilon^{1/2}),$$
$$\|h(0)\|_{L^1_w} \leq c \epsilon^{3/2} \mathcal{R} (\epsilon^{1/2} M)
\epsilon^2 M_0 (1 +M_1^2 +\epsilon^{1/2} M_2).$$

\subsection{The equation for $y$}
We aim at studying the asymptotic behavior of the solution of equation
\eqref{eq-y} for the variable $y$ introduced in Remark 2.18. To do
that we need the following lemma which is the analogous of Lemma 4.1
in \cite{KKS}.

\begin{lemma}
The remainder $Y_R$ in equation \eqref{eq-y} satisfies the estimate
$$|Y_R| \leq \mathcal{R}(\epsilon^{1/2} M) \frac{\epsilon^{5/2}}{(1
+\epsilon t)^2 \sqrt{\epsilon t}} (1 +|M|)^5.$$

\end{lemma}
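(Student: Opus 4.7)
The starting point is the bound on $|Y_R|$ already produced in Remark \ref{resto-z}, which expresses $|Y_R|/|z|$ as a sum of four monomials in $|z|$, $\|f\|_{L^\infty_{w^{-1}}}$, $\|h\|_{L^\infty_{w^{-1}}}$, $|\omega_T-\omega|$, $\|P^c_T k_1\|_{L^\infty_{w^{-1}}}$ and $\|P^c_T h_1\|_{L^\infty_{w^{-1}}}$, times an $\mathcal R_1$-factor. Since the lemma asks only for a sharper time-weighted form of this same inequality, the plan is purely to substitute the majorant bounds and the auxiliary estimates collected earlier in Section \ref{Majo}, and then to convert the resulting mixed weights into the single profile $\epsilon^{5/2}(1+\epsilon t)^{-2}(\epsilon t)^{-1/2}$.

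First I would replace each elementary quantity by the bound furnished by its majorant: $|z(t)| \le M_1(\epsilon/(1+\epsilon t))^{1/2}$, $|\omega_T-\omega|\le M_0\,\epsilon/(1+\epsilon t)$, and $\|P^c_T h_1\|_{L^\infty_{w^{-1}}}\le M_2(\epsilon/(1+\epsilon t))^{3/2}$. For $\|f\|_{L^\infty_{w^{-1}}}$ I would use the estimate $\|f\|_{L^\infty_{w^{-1}}}\le \mathcal R_1(\omega)\,\epsilon/(1+\epsilon t)(M_1^2+\epsilon^{1/2}M_2)$ recorded in the remark just after the definition of the majorants; for $\|h\|_{L^\infty_{w^{-1}}}$ I would invoke $h=f+(P^d-P^d_T)f$ together with Lemma \ref{lemma-cont} to reduce it to the same bound; and for $\|P^c_T k_1\|_{L^\infty_{w^{-1}}}$ I would use the bound $c\,\epsilon/(1+t)^{3/2}$ just proved.

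Tallying the majorant prefactors term by term, the heaviest contribution in $M=(M_0,M_1,M_2)$ comes from the cross term $|z|^2\,|\omega_T-\omega|(|z|^2+\|h\|_{L^\infty_{w^{-1}}})$, which produces $M_1\cdot M_0\cdot(M_1^2+\epsilon^{1/2}M_2)\cdot M_1\le (1+|M|)^5$; this is the origin of the quintic factor in the statement. All four monomials then pick up a weight at least as favourable as $\epsilon^{5/2}(1+\epsilon t)^{-5/2}$, with the single exception of the $\|P^c_T k_1\|_{L^\infty_{w^{-1}}}$-piece, which carries the mixed weight $\epsilon^{5/2}(1+\epsilon t)^{-1/2}(1+t)^{-3/2}$.

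The only (minor) obstacle is converting these weights into the single profile in the statement. The elementary inequality $\sqrt{\epsilon t}\le \sqrt{1+\epsilon t}$ gives immediately
\[
(1+\epsilon t)^{-5/2}\;\le\;(1+\epsilon t)^{-2}(\epsilon t)^{-1/2},
\]
which handles the four "good" terms. For the $\|P^c_T k_1\|$-contribution I would split into $\epsilon t\le 1$, where $(1+t)^{-3/2}\le 1\le (\epsilon t)^{-1/2}$ and $(1+\epsilon t)^{-1/2}\sim 1$ trivially fits inside $(1+\epsilon t)^{-2}$ up to a constant, and $\epsilon t\ge 1$, where $(1+t)^{-3/2}\le \epsilon^{3/2}(\epsilon t)^{-3/2}$ supplies two additional powers of $(1+\epsilon t)^{-1}$, more than enough for the target. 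Combining the four pieces, pulling the $\mathcal R_1$-factor outside (absorbing $\mathcal R(\omega, |z|+\|f\|_{L^\infty_{w^{-1}}})$ into $\mathcal R(\epsilon^{1/2}M)$ via the majorant bounds) and majorizing the $M$-dependence by $(1+|M|)^5$ yields the claimed inequality.
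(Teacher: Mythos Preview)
Your approach is correct and is exactly what the paper intends: the authors do not write out a proof but simply note that the lemma ``is the analogous of Lemma~4.1 in \cite{KKS}'', and that argument consists precisely of inserting the majorant bounds into the estimate of Remark~\ref{resto-z} and simplifying, as you do.

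One small correction: the weight you quote for the $\|P^c_T k_1\|$--piece is off. Since $|z|^2\le M_1^2\,\epsilon/(1+\epsilon t)$ and $\|P^c_T k_1\|_{L^\infty_{w^{-1}}}\le c\,\epsilon/(1+t)^{3/2}$, the actual weight is $\epsilon^{2}(1+\epsilon t)^{-1}(1+t)^{-3/2}$, not $\epsilon^{5/2}(1+\epsilon t)^{-1/2}(1+t)^{-3/2}$. With the correct weight your case-splitting is unnecessary: the single elementary inequality $(1+\epsilon t)\sqrt{t}\le (1+t)^{3/2}$ (valid for $\epsilon\le 1$ since $1+\epsilon t\le 1+t$ and $\sqrt t\le\sqrt{1+t}$) gives directly
\[
\frac{\epsilon^{2}}{(1+\epsilon t)(1+t)^{3/2}}\ \le\ \frac{\epsilon^{5/2}}{(1+\epsilon t)^{2}\sqrt{\epsilon t}},
\]
which is what you need. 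Everything else in your argument stands as written.
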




\noindent Hence, equation \eqref{eq-y} is of the form

\begin{equation}
\dot{y} = 2 \Re(i K_T) y^2 + Y_R,
\end{equation}

\noindent with
$$\begin{array}{lll}
\Re(i K_T) < 0,\\
y(0) \leq \epsilon y_0,\\
|Y_R| \leq \overline{Y} \frac{\epsilon^{5/2}}{(1 +\epsilon t)^2
\sqrt{\epsilon t}},
\end{array}$$

\noindent where $y_0$ and $\overline{Y} > 0$ are some constants.
Then we can apply Proposition 5.6 in \cite{BS} and get the next
lemma.

\begin{lemma}   \label{stima-y}
Assuming the initial condition and the source term of equation
\eqref{eq-y} as above, the solution $y(t)$ is bonded as follows for
any $t > 0$
$$\left| y(t) -\frac{y(0)}{1 +2 \Im(K_T) y_0 t} \right| \leq c \overline{Y}
\left( \frac{\epsilon}{1 +\epsilon t} \right)^{3/2},$$

\noindent where $c = c(y_0, \Im(K_T))$.
\end{lemma}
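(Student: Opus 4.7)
The plan is to reduce the perturbed Bernoulli equation to an integral equation via the reciprocal substitution $u = 1/y$, and then use a bootstrap/continuity argument to control the error.

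First I would set $u(t) = 1/y(t)$, noting that since $y(0) > 0$ and $\Re(iK_T) = -\Im(K_T) < 0$ the solution $y$ is positive and decreasing for small $Y_R$. A direct computation gives
\begin{equation*}
\dot u = -\dot y / y^{2} = 2 \Im(K_T) - u^{2}\, Y_R,
\end{equation*}
and the analogous substitution $\tilde u = 1/\tilde y$ for $\tilde y(t) = y(0)/(1+2\Im(K_T) y(0) t)$ gives the linear trajectory $\tilde u(t) = 1/y(0) + 2\Im(K_T) t$. Subtracting and integrating from $0$ (where $u(0)=\tilde u(0)$) yields
\begin{equation*}
u(t) - \tilde u(t) = -\int_0^{t} u(s)^{2}\, Y_R(s)\, ds.
\end{equation*}

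The second step is a bootstrap on the size of $u$. Working on a maximal interval $[0,T^*)$ where, say, $u(t) \leq 2 \tilde u(t)$, one has $y(t) \geq \tfrac12 \tilde y(t) \geq c\,\epsilon/(1+\epsilon t)$ and hence $u(t)^2 \leq C (1+\epsilon t)^2/\epsilon^2$. Inserting this and the hypothesis $|Y_R| \leq \overline Y\,\epsilon^{5/2}/[(1+\epsilon t)^2 \sqrt{\epsilon t}]$ into the integral gives
\begin{equation*}
\left| u(t)-\tilde u(t)\right| \leq C\overline Y \int_0^{t} \frac{\epsilon^{1/2}}{\sqrt{\epsilon s}}\, ds = 2 C\overline Y \sqrt{t/\epsilon}\cdot\epsilon^{1/2} = 2C\overline Y\sqrt{t}\,,
\end{equation*}
which is much smaller than $\tilde u(t) \sim 2\Im(K_T) t$ once $\epsilon$ is small and $t$ is moderately large; a direct short-time estimate handles the region $t = O(1/\epsilon)$. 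This closes the bootstrap, so $T^* = +\infty$ and the a priori bound on $u$ is justified.

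Finally I would translate back to $y$ via
\begin{equation*}
y(t) - \tilde y(t) = \frac{1}{u(t)} - \frac{1}{\tilde u(t)} = y(t)\,\tilde y(t)\,[\tilde u(t) - u(t)],
\end{equation*}
so that
\begin{equation*}
|y(t)-\tilde y(t)| \leq y(t)\,\tilde y(t) \int_0^{t} u(s)^{2} |Y_R(s)|\, ds \leq C\overline Y \left(\frac{\epsilon}{1+\epsilon t}\right)^{2}\sqrt{t}.
\end{equation*}
Since $\sqrt{t} = \sqrt{\epsilon t}/\sqrt{\epsilon} \leq \sqrt{1+\epsilon t}/\sqrt{\epsilon}$, the right-hand side is bounded by $C\overline Y\,(\epsilon/(1+\epsilon t))^{3/2}$, which is the claimed estimate with $c = c(y_0,\Im(K_T))$.

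The main obstacle is closing the bootstrap near $t=0$, because the source bound contains the integrable but singular factor $1/\sqrt{\epsilon t}$; one must verify that $\int_0^{t} u^2|Y_R|\,ds$ remains finite and small enough for the comparison $u \leq 2\tilde u$ to persist, which is the reason the majorant proposition in \cite{BS} is invoked. Once this short-time control is in place, the long-time behavior follows from the explicit decay of the unperturbed Bernoulli solution.
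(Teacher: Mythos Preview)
The paper does not give its own proof of this lemma; it simply invokes Proposition~5.6 of Buslaev--Sulem~\cite{BS}. Your argument via the reciprocal substitution $u=1/y$, which linearizes the unperturbed Bernoulli equation and reduces the perturbed one to an integral inequality closed by a bootstrap, is exactly the standard method and is essentially what~\cite{BS} does, so your proposal is correct and in line with the cited source.

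One small point: the intermediate inequality $\tilde y(t)\geq c\,\epsilon/(1+\epsilon t)$ that you use to bound $u^2$ tacitly assumes a lower bound $y(0)\gtrsim\epsilon$, which is not part of the stated hypotheses (only $y(0)\le\epsilon y_0$ is assumed). This is harmless here because in the application $y(0)=|z_1(0)|^2$ with $|z(0)|\sim\epsilon^{1/2}$, but if you want the argument to be self-contained it is cleaner to keep the bootstrap in the form $\tfrac12\tilde u\le u\le 2\tilde u$ and use directly that $y\,\tilde y\,\tilde u(s)^2=\big(\tilde y(t)/\tilde y(s)\big)^2\cdot y/\tilde y\le 2$ for $s\le t$, together with a split of the integral at $t/2$, to recover the $(1+\epsilon t)^{-3/2}$ decay without ever invoking a lower bound on $y(0)$.
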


\subsection{The equation for $P^c_T h_1$}
As a first step let us estimate the remianders in the equation
\eqref{eq-h1} for $h_1$. This is done in the next two lemmas.

\begin{lemma}
The remainders $\widetilde{H}_R$ and $H''_R$ can be estimated as
$$\|P^c_T\widetilde{H}_R\|_{L^1_w} \leq \mathcal{R}(\epsilon^{1/2} M) \left(
\frac{\epsilon}{1 +\epsilon t} \right)^{3/2} ((1 +M_1)^3
+\epsilon^{1/2} (1 +|M|)^4),$$

\noindent and
$$\|P^c_TH''_R\|_{L^1_w} \leq \mathcal{R}(\epsilon^{1/2} M) \left(
\frac{\epsilon}{1 +\epsilon t} \right)^{3/2} ((1 +M_1)^3
+\epsilon^{1/2} (1 +|M|)^4).$$

\end{lemma}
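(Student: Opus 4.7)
The plan is to reduce the two desired estimates to algebraic manipulations by substituting the majorant bounds into the estimates \eqref{stime-H1}--\eqref{stime-H2} already established for $\|\widetilde{H}_R\|_{L^1_w}$ and $\|H''_R\|_{L^1_w}$. A preliminary ingredient is the uniform boundedness of the frozen symplectic projector $P^c_T$ on $L^1_w$, which is known from the spectral analysis of $L_T$ recalled in the appendix; this allows us to write
\[
\|P^c_T\widetilde{H}_R\|_{L^1_w}\le C\,\|\widetilde{H}_R\|_{L^1_w},\qquad \|P^c_T H''_R\|_{L^1_w}\le C\,\|H''_R\|_{L^1_w},
\]
with $C$ independent of $T$, so the task is reduced to bounding the right hand sides of \eqref{stime-H1} and \eqref{stime-H2} in terms of $M_0, M_1, M_2$.

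I would then insert the majorant estimates $|z(t)|\le \epsilon^{1/2}(1+\epsilon t)^{-1/2}M_1$, $|\omega_T-\omega(t)|\le \epsilon(1+\epsilon t)^{-1}M_0$, together with the consequence of the Remark preceding the subsection,
\[
\|f(t)\|_{L^\infty_{w^{-1}}}\le \frac{\epsilon}{1+\epsilon t}\,\mathcal{R}_1(\omega)\bigl(M_1^2+\epsilon^{1/2}M_2\bigr),
\]
into each of the four (respectively five) monomials appearing in \eqref{stime-H1}--\eqref{stime-H2}. The cubic term $|z|^3$ produces exactly $\epsilon^{3/2}(1+\epsilon t)^{-3/2}M_1^3$, accounting for the $(1+M_1)^3$ contribution. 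Each of the remaining monomials $|z|\|f\|_{L^\infty_{w^{-1}}}$, $\|f\|_{L^\infty_{w^{-1}}}^2$, $|\omega-\omega_T|\|f\|_{L^\infty_{w^{-1}}}$, and (for $H''_R$) $|\omega-\omega_T||z|^2$ is of combined weight at least $\epsilon^2(1+\epsilon t)^{-2}$, which factors as
\[
\frac{\epsilon^2}{(1+\epsilon t)^2}=\epsilon^{1/2}(1+\epsilon t)^{-1/2}\cdot\left(\frac{\epsilon}{1+\epsilon t}\right)^{3/2}\le \epsilon^{1/2}\left(\frac{\epsilon}{1+\epsilon t}\right)^{3/2},
\]
and after factoring out $\mathcal{R}_1(\omega)$ (absorbed into $\mathcal{R}(\epsilon^{1/2}M)$) produces polynomials in $M_0,M_1,M_2$ of total degree at most $4$, hence controlled by $\epsilon^{1/2}(1+|M|)^4$.

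Collecting the cubic leading piece with the quartic-and-higher remainder yields exactly
\[
\|P^c_T\widetilde{H}_R\|_{L^1_w}\le \mathcal{R}(\epsilon^{1/2}M)\left(\frac{\epsilon}{1+\epsilon t}\right)^{3/2}\Bigl((1+M_1)^3+\epsilon^{1/2}(1+|M|)^4\Bigr),
\]
and the bound for $\|P^c_T H''_R\|_{L^1_w}$ is obtained identically, since the only new monomial $|\omega-\omega_T||z|^2$ carries the same $\epsilon^2(1+\epsilon t)^{-2}$ weight and contributes a term $\epsilon^{1/2}M_0M_1^2$ to the quartic remainder.

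The bookkeeping is routine; the only subtle point is ensuring that the $L^1_w$-boundedness of $P^c_T$ and of the projectors appearing implicitly in the $\mathcal{R}_1$-type constants is uniform in $T$ on the compact range of $\omega_T$ dictated by the smallness of $M_0\epsilon$. This uniformity follows from the continuity in $\omega$ of the integral kernels of $\Pi^\pm_T$ made explicit in the appendix, together with the smallness of $|\omega_T-\omega_0|$ guaranteed by the smallness of the initial data.
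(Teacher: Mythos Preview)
Your approach is essentially the paper's: invoke the $L^1_w$-boundedness of $P^c_T$, start from \eqref{stime-H1}--\eqref{stime-H2}, and substitute the majorant definitions together with the decomposition of $\|f\|_{L^\infty_{w^{-1}}}$ recorded in the Remark. One small bookkeeping slip: the monomial $|z|\,\|f\|_{L^\infty_{w^{-1}}}$ carries weight $\epsilon^{3/2}(1+\epsilon t)^{-3/2}$, not $\epsilon^{2}(1+\epsilon t)^{-2}$, so its coefficient $M_1(M_1^2+\epsilon^{1/2}M_2)$ contributes an extra $M_1^3$ to the $(1+M_1)^3$ bucket rather than being absorbed wholly into $\epsilon^{1/2}(1+|M|)^4$; this does not affect the final estimate.
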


\begin{proof}
From the estimate \eqref{stime-H1} on $\widetilde{H}_R$ one has
$$\|P^c_T\widetilde{H}_R\|_{L^1_w} \leq \mathcal{R}_2 (\omega, |z|
+\|f\|_{L^{\infty}_{w^{-1}}}) [|z|^3 +(|z| +|\omega_T -\omega|)
(|z|^2 +\|P^c_T k_1\|_{L^{\infty}_{w^{-1}}} +$$
$$+\|P^c_T h_1\|_{L^{\infty}_{w^{-1}}})
+(|z|^2 +\|P^c_T k_1\|_{L^{\infty}_{w^{-1}}} +\|P^c_T
h_1\|_{L^{\infty}_{w^{-1}}})^2] \leq$$
$$\leq \mathcal{R}(\epsilon^{1/2} M) \left[ \left( \frac{\epsilon}{1 +\epsilon
t} \right)^{3/2} M_1^3 +\right.$$
$$\left. +\left( \left( \frac{\epsilon}{1 +\epsilon t}
\right)^{1/2} M_1 +\frac{\epsilon}{1 +\epsilon t} M_0 \right) \left(
\frac{\epsilon}{1 +\epsilon t} M_1^2 +\frac{\epsilon}{(1 +t)^{3/2}}
+\left( \frac{\epsilon}{1 +\epsilon t} \right)^{3/2} M_2 \right)
+\right.$$
$$\left. +\left( \frac{\epsilon}{1 +\epsilon t} M_1^2 +\frac{\epsilon}{(1
+t)^{3/2}} +\left( \frac{\epsilon}{1 +\epsilon t} \right)^{3/2} M_2
\right) \right] \leq$$
$$\leq \mathcal{R}(\epsilon^{1/2} M) \left(
\frac{\epsilon}{1 +\epsilon t} \right)^{3/2} ((1 +M_1)^3
+\epsilon^{1/2} (1 +|M|)^4).$$

\noindent The bound for $H''_R$ follows in the same way from the
estimate \eqref{stime-H2}.
\end{proof}

\noindent In the next lemma we get a estimate the evolution under
the linear operator $L_T$ of the remainder $P^c_T \overline{H}_R$.

\begin{lemma}
For any $t$, $s \geq 0$ the following estimate holds
$$\| e^{L_T t} P^c_T \overline{H}_R(s) \|_{L^{\infty}_{w^{-1}}} (1+t)^{3/2} \leq
\mathcal{R}(\epsilon^{1/2} M) \left( \frac{\epsilon}{1 +\epsilon s}
\right)^{3/2} (M_1^3 +\epsilon^{1/2} (1 +|M|)^3).$$

\end{lemma}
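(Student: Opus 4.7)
The plan is to exploit the explicit structure of $\overline{H}_R(s)$ from \eqref{stime-H3} by decomposing it into a finite linear combination of terms of the form $c(s)\,b$, where $c(s)$ is a scalar built from $z(s),\bar z(s),\dot\omega(s),\dot\gamma(s),\omega(s)-\omega_T,\rho(s)$ and the ``residuals'' $\dot z-i\xi_T z$, $\dot{\bar z}+i\xi_T\bar z$, while $b$ belongs to the finite list $\{a_{20},a_{11},a_{02},\frac{da_{ij}}{d\omega},\Pi^\pm_T a_{ij}\}$. Since each $a_{ij}$ is (by construction in the proof of Proposition on the canonical form of $h$) a $P_T^c$-projected resolvent applied to $H_{ij}$, the action of $e^{L_T t}$ on $P_T^c b$ fits exactly into the hypothesis of Lemma~\ref{stime-evoluz}.

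First, I would apply that lemma to obtain, for each such $b$,
\[
\|e^{L_T t}P_T^c b\|_{L^\infty_{w^{-1}}}\ \le\ C(1+t)^{-3/2}\,\|H_{ij}\|_{L^1_w},
\]
observing that $\|H_{ij}\|_{L^1_w}$ is uniformly bounded in $\omega$ over the compact neighborhood of $\omega_0$ under consideration, since $H_{ij}$ is an explicit combination of $G_\lambda$ times the smooth coefficients $N_2(q_\Psi,q_{\Psi^{(*)}})$. The $\omega$-derivatives $\frac{da_{ij}}{d\omega}$ and the projections $\Pi_T^\pm a_{ij}$ (which commute with $L_T$ and preserve $X^c_T$) are handled in the same way; the $a_{11}$ term uses the analogous dispersive estimate for $e^{L_T t}L_T^{-1}P_T^c$, proven by the same low/high frequency resolvent analysis as Lemma~\ref{stime-evoluz} and valid on $X^c_T$ since $0$ is not in the continuous spectrum.

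Next I would estimate each scalar factor at time $s$ using the majorants. From the definitions and from equations \eqref{sv-omega}, \eqref{sv-gamma}, \eqref{sv-z} together with the bound on $\|f\|_{L^\infty_{w^{-1}}}$ displayed in the majorants subsection, one has
\[
|z(s)|\ \le\ M_1\!\left(\tfrac{\epsilon}{1+\epsilon s}\right)^{\!1/2},\qquad |\dot z - i\xi_T z|,\,|\dot\omega|,\,|\rho|\ \le\ \mathcal{R}(\epsilon^{1/2}M)\,\tfrac{\epsilon}{1+\epsilon s}\bigl(M_1^2+\epsilon^{1/2}(1+|M|)^2\bigr),
\]
where $|\dot z-i\xi_T z|$ picks up the contribution $(\xi-\xi_T)z$ controlled by $|\omega-\omega_T|\cdot|z|\le \epsilon^{3/2}M_0M_1/(1+\epsilon s)^{3/2}$. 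Plugging these into the term-by-term decomposition, the leading piece $(2a_{20}z+a_{11}\bar z)(\dot z-i\xi_T z)$ generates the $M_1^3$ contribution, while the mixed terms involving $f$, $\rho$ and $\omega-\omega_T$ come with an extra factor $\epsilon^{1/2}$, producing the $\epsilon^{1/2}(1+|M|)^3$ remainder. Collecting the factors $(\epsilon/(1+\epsilon s))^{1/2}\cdot(\epsilon/(1+\epsilon s))=(\epsilon/(1+\epsilon s))^{3/2}$ yields the claimed bound.

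The main obstacle is twofold. First, confirming that each $b\in\{a_{ij},\frac{da_{ij}}{d\omega},\Pi_T^\pm a_{ij}\}$ really can be written as a $P_T^c$-projected resolvent to which Lemma~\ref{stime-evoluz} (or its zero-edge analogue for $a_{11}$) applies; this requires checking that differentiation in $\omega$ commutes with the resolvent up to a lower-order term (handled by $\frac{d}{d\omega}(L_T\mp 2i\xi_T)^{-1}=-(L_T\mp2i\xi_T)^{-1}\frac{dL_T}{d\omega}(L_T\mp2i\xi_T)^{-1}$, where $\frac{dL_T}{d\omega}$ is a finite-rank form perturbation bounded in $\omega$). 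Second, the residual bookkeeping of the scalar factors: one must make sure every occurrence of a $(1+\epsilon s)^{-1/2}$ is accounted for and that the power of $\epsilon^{1/2}$ attached to lower-order terms in $|M|$ is sharp; this is the computation where the $(M_1^3+\epsilon^{1/2}(1+|M|)^3)$ dichotomy emerges, and it mirrors the bookkeeping already carried out for $\widetilde H_R$ and $H''_R$ in the previous lemma.
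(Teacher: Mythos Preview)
Your proposal is correct and follows essentially the same approach as the paper: decompose $\overline{H}_R(s)$ via its explicit expression \eqref{stime-H3} into scalar coefficients times the functions $a_{ij}$ (and their $\omega$-derivatives and $\Pi_T^\pm$-projections), apply Lemma~\ref{stime-evoluz} to get the $(1+t)^{-3/2}$ decay from the spatial factors, and then bound the scalar coefficients via the majorants. The paper's proof compresses the scalar bookkeeping into the single intermediate bound $|z|\bigl[|z|\,|\omega_T-\omega|+(|z|+\|k_1\|_{L^\infty_{w^{-1}}}+\|h_1\|_{L^\infty_{w^{-1}}})^2\bigr]$ and then substitutes the majorant definitions, whereas you spell out the individual pieces; but the substance is identical.
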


\begin{proof}
From the analytic expression \eqref{stime-H3} of $\overline{H}_R$
and the estimates of the evolution of the functions $a_{20}$,
$a_{11}$, and $a_{02}$ stated in Lemma \ref{stime-evoluz}, one has
$$\| e^{L_T t} P^c_T \overline{H}_R(s) \|_{L^{\infty}_{w^{-1}}} (1+t)^{3/2}
\leq$$
$$\leq \mathcal{R}_2(\omega, |z| +\|f\|_{L^{\infty}_{w^{-1}}})
|z| [|z| |\omega_T -\omega| +(|z| +\|k_1\|_{L^{\infty}_{w^{-1}}}
+\|h_1\|_{L^{\infty}_{w^{-1}}})^2] \leq$$
$$\leq \mathcal{R}(\epsilon^{1/2} M) \left( \frac{\epsilon}{1 +\epsilon s}
\right)^{1/2} M_1 \left[ \left( \frac{\epsilon}{1 +\epsilon s}
\right)^{3/2} M_0 M_1 +\right.$$
$$\left. +\left( \left( \frac{\epsilon}{1 +\epsilon s} \right)^{1/2} M_1
+\frac{\epsilon}{(1 +s)^{3/2}} +\left( \frac{\epsilon}{1 +\epsilon
s} \right)^{3/2} M_2 \right)^2 \right] \leq$$
$$\leq \mathcal{R}(\epsilon^{1/2} M) \left( \frac{\epsilon}{1 +\epsilon s}
\right)^{3/2} (M_1^3 +\epsilon^{1/2} (1 +|M|)^3).$$

\end{proof}

\noindent From the two previous lemmas we can get the following
result.

\begin{lemma}   \label{stima-h1}
Let us consider the equation for $P^c_T h_1$
$$\left(\frac{d P^c_T h_1}{dt}, v \right)_{L^2} = Q_{L_M}(P^c_T h_1, v)
+(P^c_T \widehat{H}_R, v)_{L^2} +(P^c_T H''_R, q_v
G_{\lambda})_{L^2},$$

\noindent with initial condition and source terms satisfying
$$\| h_1(0) \|_{L^1_w} \leq \epsilon^{3/2} h_0,$$
$$\widehat{H}_R = \widetilde{H}_R +H''_R,$$
\noindent such that
$$\|P^c_T\widetilde{H}_R\|_{L^1_w} \leq \overline{H}_1 \left( \frac{\epsilon}{1
+\epsilon t} \right)^{3/2},$$
$$\|P^c_TH''_R\|_{L^1_w} \leq \overline{H}_2 \left( \frac{\epsilon}{1 +\epsilon
t} \right)^{3/2},$$
$$\| e^{L_T t} P^c_T \overline{H}_R(s) \|_{L^{\infty}_{w^{-1}}} (1+t)^{3/2} \leq
\overline{H}_3 \left( \frac{\epsilon}{1 +\epsilon s} \right)^{3/2}
(M_1^3 +\epsilon^{1/2} (1 +|M|)^3).$$

\noindent for some positive constant $h_0$, $\overline{H}_1$,
$\overline{H}_2$ and $\overline{H}_3$. Then its solution is bounded
as follows
$$\| P^c_T h_1 \|_{L^{\infty}_{w^{-1}}} \leq c \left( \frac{\epsilon}{1
+\epsilon t} \right)^{3/2} (h_0 +\overline{H}_1 +\overline{H}_2
+\overline{H}_3),$$

\noindent where $c = c(\omega_T) > 0$.
\end{lemma}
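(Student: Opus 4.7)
The plan is to integrate \eqref{eq-h1} via Duhamel's formula and estimate each resulting contribution with the dispersive bound of Lemma \ref{stime-evoluz}, so that the whole lemma collapses to a scalar convolution of power weights.

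The first step is a propagator factorization. Since $L_T$ commutes with the branch projectors $\Pi^{\pm}_T$, the non-autonomous generator $L_M(\tau)=L_T+i\rho(\tau)(\Pi^+_T-\Pi^-_T)$ from \eqref{L_M} produces a propagator $U(t,s)$ on $X^c_T$ that factorizes as
$$U(t,s)=e^{(t-s)L_T}\,\exp\!\left(i\int_s^t\rho(\tau)\,d\tau\,(\Pi^+_T-\Pi^-_T)\right).$$
The right-hand factor is a combination of branch projectors multiplied by unit-modulus phases, hence bounded in $L^1_w$ and in $L^{\infty}_{w^{-1}}$ uniformly in $s,t$. Composing this with Lemma \ref{stime-evoluz} transfers the dispersive estimate to the full propagator: for $f\in X^c_T\cap L^1_w$,
$$\|U(t,s) f\|_{L^{\infty}_{w^{-1}}}\leq C(1+t-s)^{-3/2}\|f\|_{L^1_w}.$$

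Next, using the decomposition $\widehat{H}_R=\widetilde{H}_R+\overline{H}_R$ and representing the form source $(H''_R,q_v G_\lambda)_{L^2}$ in strong form by $8\pi\sqrt{\lambda}(P^c_T H''_R,G_\lambda)_{L^2}\,G_\lambda$ (still controlled in $L^1_w$ by $\|P^c_T H''_R\|_{L^1_w}$), Duhamel gives four contributions to $P^c_T h_1(t)$: the initial datum $U(t,0)P^c_T h_1(0)$; the integrals against $\widetilde{H}_R$ and $H''_R$, which enjoy the assumed $L^1_w$-bounds and thus are handled by the displayed dispersive inequality; and the integral against $\overline{H}_R$. The initial-data piece is $\leq Ch_0\epsilon^{3/2}(1+t)^{-3/2}\leq Ch_0(\epsilon/(1+\epsilon t))^{3/2}$ for $\epsilon\leq 1$. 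For $\widetilde{H}_R$ and $H''_R$ everything reduces to the scalar integral
$$I(t):=\int_0^t(1+t-s)^{-3/2}\left(\frac{\epsilon}{1+\epsilon s}\right)^{3/2}\,ds.$$
The $\overline{H}_R$ term is treated differently because $\overline{H}_R$ contains the possibly singular functions $a_{ij}$ and need not lie in $L^1_w$: one bypasses that norm and directly uses the given bound on $\|e^{(t-s)L_T}P^c_T\overline{H}_R(s)\|_{L^{\infty}_{w^{-1}}}$, composed with the bounded phase factor, which again collapses to $I(t)$ multiplied by $\overline{H}_3(M_1^3+\epsilon^{1/2}(1+|M|)^3)$.

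The proof thus reduces to the scalar bound $I(t)\leq C(\epsilon/(1+\epsilon t))^{3/2}$. This is done by a two-region split at $s=t/2$: on $[t/2,t]$ one uses $(1+\epsilon s)^{-3/2}\leq 2^{3/2}(1+\epsilon t)^{-3/2}$ together with the integrability of $(1+t-s)^{-3/2}$, and on $[0,t/2]$ one uses $(1+t-s)^{-3/2}\leq 2^{3/2}(1+t)^{-3/2}$ together with $\int_0^{t/2}\epsilon^{3/2}(1+\epsilon s)^{-3/2}\,ds\leq 2\epsilon^{1/2}$, then compares $\epsilon^{1/2}(1+t)^{-3/2}$ with $(\epsilon/(1+\epsilon t))^{3/2}$ separately in the regimes $\epsilon t\leq 1$ and $\epsilon t\geq 1$. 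I expect the main obstacle to be precisely this reconciliation of the dispersive decay $(1+t-s)^{-3/2}$ with the self-similar source decay $(\epsilon/(1+\epsilon s))^{3/2}$: the bookkeeping in $\epsilon$ must be tight, so that no spurious inverse power of $\epsilon$ appears, which is what forces the two-region decomposition and the separate case analysis in $\epsilon t$.
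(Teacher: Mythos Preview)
Your approach is essentially the same as the paper's: Duhamel representation, factorization of the $L_M$-propagator as $e^{(t-s)L_T}$ times a bounded phase operator on $X^c_T$, dispersive estimate on each piece, and reduction to the scalar convolution $\int_0^t(1+t-s)^{-3/2}(\epsilon/(1+\epsilon s))^{3/2}\,ds$. One small correction: the dispersive bound $\|e^{(t-s)L_T}P^c_T f\|_{L^\infty_{w^{-1}}}\le C(1+t-s)^{-3/2}\|f\|_{L^1_w}$ should be cited from Theorem~\ref{stima-cont}, not Lemma~\ref{stime-evoluz} (the latter is the resolvent-smoothed version used for the $a_{ij}$, and it is indeed the right tool for the $\overline{H}_R$ piece, as you note); otherwise your argument and the paper's coincide, with your write-up supplying the explicit two-region split for the scalar integral that the paper leaves unproved.
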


\begin{proof}
By the Duhamel representation 
one has
$$(P^c_T h_1, v)_{L^2} = \left( e^{\int_0^t L_M(\tau) d\tau} h_1(0) +\int_0^t
e^{\int_s^t L_M(\tau) d\tau} P^c_T \widehat{H}_R(s) ds, v
\right)_{L^2} +$$
$$+\left( \int_0^t e^{\int_s^t L_M(\tau) d\tau} P^c_T H''_R(s) ds, q_v
G_{\lambda} \right)_{L^2},$$

\noindent for all $v \in V$.

\noindent Then from the dispersive estimate in Theorem
\ref{stima-cont} and the estimates on the remainders proved above in
the duality paring defined by the inner product $L^2$, one has
$$\|P^c_T h_1\|_{L^{\infty}_{w^{-1}}} = \sup_{0 \neq v \in L^1_w}
\frac{(P^c_T h_1, v)_{L^2}}{\|v\|_{V \cap L^1_w}} \leq$$
$$\leq c(\omega_T) \left( \frac{1}{(1+t)^{3/2}} \|h_1(0)\|_{L^1_w} +\int_0^t
\frac{1}{(1+t-s)^{3/2}} (\|P^c_T \widetilde{H}_R(s)\|_{L^1_w}
+\|P^c_T H''_R(s)\|_{L^1_w}) ds +\right.$$
$$\left. +\int_0^t \| e^{L_T (t-s)} P^c_T \overline{H}_R(s)
\|_{L^{\infty}_{w^{-1}}} ds \right) \leq$$
$$\leq c(\omega_T) \left( \left( \frac{\epsilon}{1+\epsilon t} \right)^{3/2} h_0
+\int_0^t \frac{1}{(1+t-s)^{3/2}} \left( \frac{\epsilon}{1 +\epsilon
s} \right)^{3/2} ds (\overline{H}_1 +\overline{H}_2 +\overline{H}_3)
\right).$$

\noindent The lemma follows from the fact that
$$\int_0^t \frac{1}{(1+t-s)^{3/2}} \left( \frac{\epsilon}{1 +\epsilon s}
\right)^{3/2} ds \leq c \left( \frac{\epsilon}{1 +\epsilon t}
\right)^{3/2},$$

\noindent for some constant $c > 0$.
\end{proof}

\subsection{Uniform bounds for the majorants}
To prove that the majorants are uniformly bounded, the following
lemma will be useful.

\begin{lemma}   \label{maj-ineq}
For any $T > 0$ the majorants $M_0$, $M_1$, and $M_2$ satisfy the
following inequalities
$$\begin{array}{lll}
M_0(T) \leq \mathcal{R}(\epsilon^{1/2} M) [(1 +M_1)^4 +\epsilon (1 +|M|)^2],\\

(M_1(T))^2 \leq \mathcal{R}(\epsilon^{1/2} M) [1 +\epsilon^{1/2} (1 +|M|)^5],\\

M_2(T) \leq \mathcal{R}(\epsilon^{1/2} M) [(1 +M_1)^3 +\epsilon^{1/2} (1 +|M|)^4].\\
\end{array}$$

\end{lemma}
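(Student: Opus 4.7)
The three inequalities are obtained by integrating the three canonical-form evolution equations derived in Section 3 (for $\omega_1$, for $y=|z_1|^2$, and for $P_T^c h_1$), and then re-expressing the resulting bounds in terms of $M_0$, $M_1$, $M_2$ using the majorant definitions together with the estimates on the remainders proven earlier. Two scaling identities will be used repeatedly:
\begin{equation*}
\int_t^T\!\Bigl(\tfrac{\epsilon}{1+\epsilon s}\Bigr)^{2}\!ds\;\leq\;\tfrac{\epsilon}{1+\epsilon t},\qquad \int_0^t\!(1+t-s)^{-3/2}\Bigl(\tfrac{\epsilon}{1+\epsilon s}\Bigr)^{3/2}\!ds\;\leq\;c\Bigl(\tfrac{\epsilon}{1+\epsilon t}\Bigr)^{3/2}.
\end{equation*}
In addition, I will use the representation $f=g+P_T^c h_1+P_T^c k+P_T^c k_1$ together with Lemma \ref{lemma-cont} and the bound on $P_T^c k_1$ to write
$\|f(t)\|_{L^{\infty}_{w^{-1}}}\leq \mathcal{R}_1(\omega)(\epsilon/(1+\epsilon t))(M_0+M_1^2+\epsilon^{1/2}M_2)$, which is the translation of the estimates proven in Section 4.2 into majorant language.

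For $M_0$: from $\dot{\omega}_1=\widehat{\Omega}_R$ with $|\widehat{\Omega}_R|\leq \mathcal{R}(|z|^{2}+\|f\|_{L^{\infty}_{w^{-1}}})^{2}$, inserting the majorants gives $|\widehat{\Omega}_R|\leq \mathcal{R}(\epsilon^{1/2}M)(\epsilon/(1+\epsilon t))^{2}[M_1^{4}+(M_0+M_1^{2}+\epsilon^{1/2}M_2)^{2}]$. Integrating from $t$ to $T$ (using the first scaling identity) controls $|\omega_1(T)-\omega_1(t)|$, while the polynomial change of variable $\omega_1-\omega=O(|z|^{2}+|z|\|f\|_{L^{\infty}_{w^{-1}}})$ evaluated at $t$ and at $T$ (noting that $\epsilon/(1+\epsilon T)\leq \epsilon/(1+\epsilon t)$) absorbs into the same scaling. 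Regrouping produces $|\omega_T-\omega(t)|(\epsilon/(1+\epsilon t))^{-1}\leq \mathcal{R}(\epsilon^{1/2}M)[(1+M_1)^{4}+\epsilon(1+|M|)^{2}]$.

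For $M_1$: the reformulation in Remark 2.18 reads $\dot{y}=2\Re(iK_T)y^{2}+Y_R$ with $\Re(iK_T)<0$ by Lemma \ref{reale}, while Remark \ref{resto-z} gives $|Y_R|\leq \mathcal{R}(\epsilon^{1/2}M)(1+|M|)^{5}\epsilon^{5/2}/((1+\epsilon t)^{2}\sqrt{\epsilon t})$. Apply Lemma \ref{stima-y} with $\overline{Y}=\mathcal{R}(\epsilon^{1/2}M)(1+|M|)^{5}$ to obtain $|z_1(t)|^{2}\leq y(0)/(1+2\Im(K_T)y_0 t)+c\overline{Y}(\epsilon/(1+\epsilon t))^{3/2}$; the initial bound $y(0)\leq \epsilon(1+\mathcal{R}(\epsilon^{1/2}M)\epsilon^{1/2})$ together with the cubic correction $z=z_1+O(|z|^{2})$ (so $|z|^{2}\leq|z_1|^{2}+\mathcal{R}|z|^{3}$) then yields $(M_1(T))^{2}\leq \mathcal{R}(\epsilon^{1/2}M)[1+\epsilon^{1/2}(1+|M|)^{5}]$ after multiplying by the weight $(\epsilon/(1+\epsilon t))^{-1}$.

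For $M_2$: this is a direct application of Lemma \ref{stima-h1}. The required hypotheses on the initial datum and on the three remainders $\widetilde{H}_R$, $H''_R$, $\overline{H}_R$ have all been verified in the three lemmas immediately preceding \ref{stima-h1}: the two $L^1_w$-bounds give $\overline{H}_1,\overline{H}_2\leq \mathcal{R}[(1+M_1)^{3}+\epsilon^{1/2}(1+|M|)^{4}]$, the evolution-level bound gives $\overline{H}_3\leq \mathcal{R}[M_1^{3}+\epsilon^{1/2}(1+|M|)^{3}]$, and the initial condition section provides $h_0\leq \mathcal{R}(1+\epsilon^{1/2}M_0(1+M_1^{2}+\epsilon^{1/2}M_2))$. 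Summing the four contributions and multiplying by $(\epsilon/(1+\epsilon t))^{-3/2}$ gives $M_2(T)\leq \mathcal{R}(\epsilon^{1/2}M)[(1+M_1)^{3}+\epsilon^{1/2}(1+|M|)^{4}]$.

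The main obstacle is purely bookkeeping: every remainder mixes $M_0$, $M_1$, $M_2$, and the three bounds must have precisely the structure ``leading term polynomial in $M_1$ alone, plus $\epsilon^{1/2}\cdot(\text{polynomial in }|M|)$'', so that a subsequent continuity argument can conclude uniform-in-$T$ boundedness of the $M_i$'s for small $\epsilon$. The delicate step is the estimate for $\|f\|_{L^\infty_{w^{-1}}}$: one cannot afford to lose the $\epsilon/(1+\epsilon t)$ gain when splitting $f=g+P_T^c h_1+P_T^c k+P_T^c k_1$, so Lemma \ref{lemma-cont} must be applied carefully to trade $g$ against $(P_T^c k+P_T^c k_1+P_T^c h_1)$ via the factor $|\omega_T-\omega|$, which is itself controlled by $M_0$.
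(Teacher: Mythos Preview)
Your proposal is correct and follows essentially the same route as the paper: Step~1 uses the integrated bound on $|\omega_T-\omega|$ obtained from $\dot\omega_1=\widehat\Omega_R$ together with the polynomial correction $\omega_1-\omega$, Step~2 applies Lemma~\ref{stima-y} to the $y$ equation with $\overline{Y}=\mathcal{R}(\epsilon^{1/2}M)(1+|M|)^5$ and then passes from $z_1$ back to $z$, and Step~3 is a direct invocation of Lemma~\ref{stima-h1} with the constants $h_0,\overline{H}_1,\overline{H}_2,\overline{H}_3$ supplied by the preceding lemmas. The only cosmetic difference is that the paper packages your $\|f\|_{L^\infty_{w^{-1}}}$ bound slightly differently (the $M_0$ contribution from $g$ via Lemma~\ref{lemma-cont} is absorbed into $\mathcal{R}_1(\omega)$ rather than displayed explicitly), but this does not affect the argument.
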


\begin{proof}
It follows form Lemma \ref{stima-y} and \ref{stima-h1} as Lemma 4.6
in \cite{KKS}, but we give the proof for sake of completeness.

\noindent \texttt{Step 1.} Let us begin noting that
$$|z|^2 +\|f\|_{L^{\infty}_{w^{-1}}} \leq \mathcal{R}_2(\omega, |z|
+\|f\|_{L^{\infty}_{w^{-1}}}) (|z|^2 +\|P^c_T
k_1\|_{L^{\infty}_{w^{-1}}} +\|P^c_T h_1\|_{L^{\infty}_{w^{-1}}})
\leq$$
$$\leq \mathcal{R}(\epsilon^{1/2} M) \left( \frac{\epsilon}{(1+t)^{3/2}}
+\frac{\epsilon}{1 +\epsilon t} M_1^2 +\left( \frac{\epsilon}{1
+\epsilon t} \right)^{3/2} M_2 \right) \leq$$
$$\leq \mathcal{R}(\epsilon^{1/2} M) \frac{\epsilon}{1 +\epsilon t} (1 +M_1^2
+\epsilon^{1/2} M_2).$$

\noindent Then by the definition of $M_0$ and the bound on
$|\omega_T -\omega|$:
$$M_0(T) \leq \max_{0 \leq t \leq T} \left[ \left( \frac{\epsilon}{1 +\epsilon
t} \right)^{-1} \mathcal{R}(\epsilon^{1/2} M) \left( \int_t^T \left(
\frac{\epsilon}{1 +\epsilon \tau} \right)^2 (1 +M_1(\tau)^2 +\right.
\right.$$
$$\left. \left. +\epsilon^{1/2} M_2(\tau))^2 d\tau +\left( \frac{\epsilon}{1
+\epsilon t} \right)^2 (1 +M_1^2 +\epsilon^{1/2} M_2)^2 \right)
\right] \leq$$
$$\leq \mathcal{R}(\epsilon^{1/2} M) [(1 +M_1)^4 +\epsilon (1 +|M|)^2].$$

\noindent \texttt{Step 2.} Since $y = |z_1|^2$, we can exploit the
inequality proved in Lemma \ref{stima-y}, the fact that
$\overline{Y} = \mathcal{R}(\epsilon^{1/2} M) (1 +|M|)^5$ and $y(0)
\leq \epsilon y_0$, one gets
$$y \leq \mathcal{R}(\epsilon^{1/2} M) \left[ \frac{\epsilon}{1 +\epsilon t}
+\left( \frac{\epsilon}{1 +\epsilon t} \right)^{3/2} (1 +|M|)^5
\right].$$

\noindent From which follows
$$|z|^2 \leq y +\mathcal{R}(\omega) |z|^3 \leq$$
$$\leq \mathcal{R}(\epsilon^{1/2} M) \left[ \frac{\epsilon}{1 +\epsilon
t} +\left( \frac{\epsilon}{1 +\epsilon t} \right)^{3/2} (1 +|M|)^5
+\left( \frac{\epsilon}{1 +\epsilon t} \right)^{3/2} M_1^3 \right]
\leq \mathcal{R}(\epsilon^{1/2} M) [1 +\epsilon^{1/2} (1 +|M|)^5].$$

\noindent \texttt{Step 3.} Recall that
$$\|h(0)\|_{L^1_w} \leq c \epsilon^{3/2} \mathcal{R} (\epsilon^{1/2} M)
\epsilon^2 M_0 (1 +M_1^2 +\epsilon^{1/2} M_2),$$
$$\overline{H}_1 = \mathcal{R}(\epsilon^{1/2} M) ((1 +M_1)^3 +\epsilon^{1/2} (1
+|M|)^4),$$
$$\overline{H}_2 = \mathcal{R}(\epsilon^{1/2} M) ((1 +M_1)^3 +\epsilon^{1/2} (1
+|M|)^4),$$
$$\overline{H}_3 = \mathcal{R}(\epsilon^{1/2} M) (M_1^3 +\epsilon^{1/2} (1
+|M|)^3).$$

\noindent Hence from Lemma \ref{stima-h1} follows
$$\|P^c_T h_1\|_{L^{\infty}_{w^{-1}}} \leq \mathcal{R}(\epsilon^{1/2} M) \left(
\frac{\epsilon}{1 +\epsilon t} \right)^{3/2} ((1 +M_1)^3
+\epsilon^{1/2} (1 +|M|)^4),$$

\noindent which implies the inequality for $M_2$.
\end{proof}

\noindent We are now in the position to prove the uniform
boundedness of the majorants.

\begin{prop}
If $\epsilon > 0$ is sufficiently small, there exist a positive
constant $\overline{M}$ independent of $T$ and $\epsilon$ such that
$$|M(T)| \leq \overline{M},$$

\noindent for all $T > 0$.
\end{prop}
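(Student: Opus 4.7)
The plan is to run a continuity/bootstrap argument built on Lemma \ref{maj-ineq}. The three majorants $M_0, M_1, M_2$ are continuous and non-decreasing in $T$ (continuity follows from the continuous dependence on $T$ of the flow, of the frozen eigenvalue $\omega_T$, and of the projector $P^c_T$), and at $T=0$ they are controlled by the initial data: $M_0(0)=0$ tautologically, $M_1(0)\leq 1$ from $|z_0|\leq\epsilon^{1/2}$, and $M_2(0)\leq c_0$ for an absolute constant coming from $\|h(0)\|_{L^1_w}\leq c\epsilon^{3/2}$ together with the dispersive estimate underlying Lemma \ref{stima-h1}. In particular there exists $\overline{M}_0$ depending only on $c$ such that $|M(0)|\leq \overline{M}_0$.

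Next I would pick the candidate bound $\overline{M}$ by examining the $\epsilon\to 0$ limit of the inequalities in Lemma \ref{maj-ineq}. Since the function $\mathcal{R}(\epsilon^{1/2}M)$ is bounded by some uniform constant $C_0$ whenever $\epsilon^{1/2}|M|\leq 1$, the second inequality gives, under the standing assumption $|M(T)|\leq\overline{M}$,
\begin{equation*}
M_1(T)^2 \leq C_0\bigl[1+\epsilon^{1/2}(1+\overline{M})^5\bigr],
\end{equation*}
which for $\epsilon$ so small that $\epsilon^{1/2}(1+\overline{M})^5\leq 1$ yields $M_1(T)\leq \sqrt{2C_0}=:c_1$. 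Plugging this back into the first and third inequalities and taking $\epsilon$ even smaller, one obtains $M_0(T)\leq C_0(1+c_1)^4+1=:c_0'$ and $M_2(T)\leq C_0(1+c_1)^3+1=:c_2'$. Set $\overline{M}:=2\sqrt{(c_0')^2+c_1^2+(c_2')^2}$, so that the above estimates actually read $|M(T)|\leq \overline{M}/2$, a strict improvement over the standing assumption $|M(T)|\leq\overline{M}$.

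Finally, define $T^\ast=\sup\{T>0:\ |M(T')|\leq\overline{M}\text{ for every }T'\in[0,T]\}$. By the initial bound and the choice of $\overline{M}$ we have $|M(0)|<\overline{M}$, so $T^\ast>0$. Suppose toward a contradiction that $T^\ast<\infty$. By continuity of $T\mapsto M(T)$ we would have $|M(T^\ast)|=\overline{M}$, but applying the bootstrap step at $T^\ast$ gives $|M(T^\ast)|\leq \overline{M}/2<\overline{M}$, a contradiction. Hence $T^\ast=+\infty$ and $|M(T)|\leq\overline{M}$ for every $T>0$, uniformly in $T$ and (for $\epsilon$ below the threshold fixed above) in $\epsilon$.

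The most delicate ingredient is the continuity in $T$ of the majorant $M_2(T)$, because both the spectral projection $P^c_T$ and the frozen generator $L_T$ move with $T$, and so does the decomposition $f=g+h$ used to define $h_1$. This requires the smooth dependence of $L_T$, $\Pi^\pm_T$, $P^d_T$ on $\omega_T$ (appendix material, together with Lemma \ref{lemma-cont}), and uniform-in-$T$ control of the constants in Lemmas \ref{stime-evoluz} and \ref{stima-h1} on a small neighborhood of $\omega_0$; once this is secured, the remainder of the argument is the straightforward bootstrap described above.
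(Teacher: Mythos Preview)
Your argument is correct and follows essentially the same route as the paper: combine the three inequalities of Lemma \ref{maj-ineq}, use the crucial fact that the bound for $M_1^2$ is of the form $\mathcal{R}(\epsilon^{1/2}M)[1+\epsilon^{1/2}(\dots)]$ to control $M_1$ first, feed this back into the bounds for $M_0$ and $M_2$, and close by continuity of $T\mapsto M(T)$ together with the smallness of $M(0)$. The paper compresses all of this into two lines (writing $|M|^2\leq \mathcal{R}(\epsilon^{1/2}M)(1+\epsilon^{1/2}F(M))$ after substituting the $M_1^2$ estimate), whereas you spell out the bootstrap explicitly and flag the continuity of $M_2$ in $T$ through the moving projector $P^c_T$---a point the paper takes for granted.
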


\begin{proof}
From the previous lemma follows
$$|M|^2 \leq \mathcal{R}(\epsilon^{1/2} M) [(1 +M_1)^8 +\epsilon^{1/2} (1
+|M|)^8] \leq \mathcal{R}(\epsilon^{1/2} M) (1 +\epsilon^{1/2}
F(M)),$$

\noindent where in the last inequality we have replaced the estimate
for $M_1^2$, and $F(M)$ is a suitable polynomial function.

\noindent Furthermore, $M(0)$ is small and $M(T)$ is a continuous
function. Hence it follows that  $|M|$ is bounded independent of
$\epsilon \ll 1$.
\end{proof}

\noindent The last proposition gives a summary of the behavior of
the functions $\omega(t)$, $z(t)$, $P^c_T h_1(t)$, and $f(t)$.

\begin{cor} \label{limit-existence}
There exists a finite limit $\omega_{\infty}$ for the function
$\omega(t)$ as $t \rightarrow +\infty$. Moreover the following holds
for all $t > 0$
$$\begin{array}{llll}
|\omega_{\infty} -\omega(t)| \leq \overline{M} \frac{\epsilon}{1
+\epsilon t},\\
|z(t)| \leq \overline{M} \left( \frac{\epsilon}{1 +\epsilon t}
\right)^{1/2},\\
\|P^c_T h_1(t)\|_{L^{\infty}_{w^{-1}}} \leq \overline{M} \left(
\frac{\epsilon}{1 +\epsilon t} \right)^{3/2},\\
\|f(t)\|_{L^{\infty}_{w^{-1}}} \leq \overline{M} \frac{\epsilon}{1
+\epsilon
t}.\\
\end{array}$$

\end{cor}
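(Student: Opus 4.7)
The corollary is essentially the payoff of the uniform majorant bound $|M(T)|\leq \overline{M}$, independent of $T$, proved in the preceding proposition. The first three inequalities are simply the definitions of $M_0$, $M_1$ and $M_2$ unpacked: fix $t>0$ and choose any $T\geq t$; then
$$|\omega_T-\omega(t)|\leq M_0(T)\,\frac{\epsilon}{1+\epsilon t}\leq \overline{M}\,\frac{\epsilon}{1+\epsilon t},$$
and analogous inequalities for $|z(t)|$ and $\|P^c_T h_1(t)\|_{L^\infty_{w^{-1}}}$. So no new work is needed there beyond checking that the estimates are uniform in the auxiliary freezing time $T$.

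The slightly substantial point is the existence of the limit $\omega_\infty$ and the passage from $|\omega_T-\omega(t)|$ to $|\omega_\infty-\omega(t)|$. For this I use the identification $\omega_T=\omega(T)$: for $0<t<T_1<T_2$ the majorant estimate, applied with $T=T_2$ at the time $t=T_1$, gives
$$|\omega(T_2)-\omega(T_1)|=|\omega_{T_2}-\omega(T_1)|\leq \overline{M}\,\frac{\epsilon}{1+\epsilon T_1}\longrightarrow 0\quad\text{as }T_1\to\infty.$$
Thus $\omega(t)$ is Cauchy, hence converges to some $\omega_\infty$, and letting $T_2\to\infty$ in the same inequality yields the first displayed bound $|\omega_\infty-\omega(t)|\leq \overline{M}\,\epsilon/(1+\epsilon t)$.

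The bound on $f$ is obtained by exploiting the decomposition $f=g+P^c_T h_1+P^c_T k+P^c_T k_1$ introduced in the definition of the majorants. Lemma \ref{lemma-cont} controls $\|g\|_{L^\infty_{w^{-1}}}$ by $|\omega_T-\omega|\|h\|_{L^\infty_{w^{-1}}}$, the auxiliary lemma of Section \ref{Majo} controls $\|P^c_T k_1\|_{L^\infty_{w^{-1}}}$ by $c\epsilon/(1+t)^{3/2}$, the quadratic form of $k$ is controlled by $|z|^2$, and $P^c_T h_1$ is controlled directly by $M_2$. Assembling these inputs exactly as in the remark preceding the definition of the majorants,
$$\|f(t)\|_{L^\infty_{w^{-1}}}\leq \frac{\epsilon}{1+\epsilon t}\,\mathcal{R}_1(\omega)\bigl(M_1^2+\epsilon^{1/2}M_2\bigr),$$
and then $|M(T)|\leq \overline{M}$ (together with $\epsilon\ll 1$) collapses the right-hand side to $\overline{M}\,\epsilon/(1+\epsilon t)$, at the price of enlarging $\overline{M}$ by a universal constant.

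In short, no genuinely new estimate is needed: the main conceptual step is the Cauchy-sequence argument that turns the frozen bound $|\omega_T-\omega(t)|\lesssim \epsilon/(1+\epsilon t)$ into convergence of $\omega(t)$, the rest being algebraic bookkeeping. The only point where one must be careful is that the constant $\overline{M}$ appearing in the four displayed inequalities is not literally the same as the one in the majorant proposition; it must be redefined to absorb the bounded quantities $\mathcal{R}_1(\omega_\infty)$ and the constant $c(\omega_T)$ from Lemma \ref{stima-h1}, which are uniformly controlled since $\omega(t)$ stays in a compact neighbourhood of $\omega_0$.
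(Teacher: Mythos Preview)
Your proof is correct and matches the paper's approach: the paper states this corollary without proof, as an immediate consequence of the uniform majorant bound $|M(T)|\le\overline{M}$, and your Cauchy-sequence argument for the existence of $\omega_\infty$ together with the unpacking of the majorant definitions and the Remark on $\|f\|_{L^\infty_{w^{-1}}}$ is exactly the intended reasoning. Your care about redefining $\overline{M}$ to absorb the bounded factors $\mathcal{R}_1(\omega)$ is a valid refinement of what the paper leaves implicit.
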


\section{Large time behavior of the solution and scattering asymptotics}
\subsection{Large time behavior of the solution of equation \eqref{eq1}}

The results of the previous section lead to the following
theorem.

\begin{theo}    \label{teo-infinito}
Let $u(t)$ be a solution of equation \eqref{eq1} with initial datum
$u_0 \in V \cap L^1_w$ of the form
$$u_0(x) = e^{i \theta_0} \Phi_{\omega_0}(x) +z_0 \Psi(x) +\overline{z}_0
\Psi^*(x) +f_0(x),$$

\noindent where $\theta_0 \in \R$, $\omega_0 > 0$, $z_0 \in \C$ with
$$|z(0)| \leq \epsilon^{1/2}, \qquad \|f_0\|_{L^1_w} \leq c \epsilon^{3/2},$$

\noindent for some $\epsilon$, $c > 0$. Then, provided $\epsilon$ is
small enough, there exist $\omega(t)$, $\gamma(t)$, $z(t) \in
C^1([0, +\infty))$ solutions of the modulation equations
\eqref{mod-eq1}-\eqref{mod-eq3}, and two constants
$\omega_{\infty}$, $\overline{M} > 0$ such that $\displaystyle
\omega_{\infty} = \lim_{t \rightarrow +\infty} \omega(t)$ and for
all $t \geq 0$
$$u(t,x) = e^{i(\int_0^t \omega(s) ds +\gamma(t))} \left( \Phi_{\omega(t)}(x)
+z(t) \Psi(t,x) +\overline{z(t)} \Psi^*(t,x) +f(t,x) \right),$$

\noindent where
$$|\omega_{\infty} -\omega(t)| \leq \overline{M} \frac{\epsilon}{1 +\epsilon
t}, \quad |z(t)| \leq \overline{M} \left( \frac{\epsilon}{1
+\epsilon t} \right)^{1/2}, \quad \|f(t)\|_{L^{\infty}_{w^{-1}}}
\leq \overline{M} \frac{\epsilon}{1 +\epsilon t}.$$

\end{theo}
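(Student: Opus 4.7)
The plan is to assemble this theorem as a consequence of the modulation framework (Theorem \ref{lemma-mod}), the majorant bounds obtained in Section \ref{Majo}, and Corollary \ref{limit-existence}. Essentially, Theorem \ref{teo-infinito} is a packaging of what has already been established, together with the extraction of an asymptotic frequency $\omega_\infty$; the substantive work lies in globalizing the modulation ansatz and in passing from the $T$-frozen majorant estimates to estimates valid on $[0,+\infty)$.

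First, I would invoke the global well-posedness of \eqref{eq1} in the range $\sigma \in (0,1)$ (which has been recalled in the introduction) to get a solution $u \in C(\R^+, V)$. Then I would set up the modulation decomposition \eqref{mod-ansatz}-\eqref{zete} at the initial time by reading off $\omega(0)=\omega_0$, $\gamma(0)=\theta_0$, $z(0)=z_0$ and $f(0)=f_0$ from the hypothesis on $u_0$; the orthogonality condition $P^0\chi(t)=0$ together with the implicit function theorem in a neighbourhood of the solitary manifold $\mathcal{M}$ provides, for a short time interval, a $C^1$ choice of modulation parameters satisfying the system \eqref{mod-eq1}-\eqref{mod-eq4}, hence the frozen form \eqref{eq-h}. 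This local step is the same as in the case $\sigma \in (0,1/\sqrt 2)$ treated in \cite{ADO} and requires only non-degeneracy of the quantity $\left(\varphi_\omega - \tfrac{dP_0}{d\omega}\chi,\Phi_\omega+\chi\right)_{L^2}$ near the orbit, which is guaranteed provided $\chi$ is kept small in $V$.

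Next, on any existence interval $[0,T]$ the majorants $M_0(T),M_1(T),M_2(T)$ are defined and, by the Proposition at the end of Section \ref{Majo}, are bounded by a constant $\overline M$ independent of $T$ once $\epsilon$ is sufficiently small. Combined with Corollary \ref{limit-existence}, this yields the three pointwise estimates for $|z(t)|$, $\|f(t)\|_{L^\infty_{w^{-1}}}$, and $|\omega_\infty-\omega(t)|$ on $[0,T]$, with constants independent of $T$. The uniform control on $\chi$ in $V$ (obtained from the $L^\infty_{w^{-1}}$ bound on $f$ together with the smallness of $z$) prevents the modulation parameters from leaving the regime where the implicit function theorem applies, which allows a continuation argument to extend the decomposition to all $t\ge 0$. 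This is the key place where the uniformity in $T$ of the majorant estimates is used, and I expect it to be the main technical point to verify carefully, since one must check that the bootstrap does not break down as $T$ is increased.

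Finally, the existence of $\omega_\infty := \lim_{t\to\infty}\omega(t)$ follows from the estimate $|\dot\omega(t)| \lesssim (|z|^2+\|f\|_{L^\infty_{w^{-1}}})^2 \lesssim \epsilon^2(1+\epsilon t)^{-2}$ derived in Section 2.3, which makes $\omega(t)$ Cauchy as $t\to+\infty$; the rate $|\omega_\infty-\omega(t)|\le \overline M\,\epsilon/(1+\epsilon t)$ is obtained by integrating this bound from $t$ to $+\infty$. With $\omega_\infty$ in hand, the stated representation of $u(t)$ follows by merely rewriting $u(t,x)$ through the ansatz \eqref{mod-ansatz} with phase \eqref{phase}, and the three decay estimates are exactly those of Corollary \ref{limit-existence}. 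The hardest part, as noted, is the global-in-time continuation of the modulation parameters; everything else is a transcription of previously proven estimates.
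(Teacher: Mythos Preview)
Your proposal is correct and follows essentially the same route as the paper: assemble the theorem from the modulation framework, the uniform majorant bounds, and Corollary \ref{limit-existence}. The paper's own proof is even terser than yours---it simply observes that since Corollary \ref{limit-existence} already furnishes $\omega_\infty=\lim_{t\to\infty}\omega(t)$, one can redo the frozen spectral decomposition $f=g+h_1+k+k_1$ with the choice $T=+\infty$ (i.e.\ freeze at $\omega_\infty$ rather than at some finite $\omega_T$), and that all the finite-$T$ estimates carry over unchanged; the continuation argument and the derivation of $\omega_\infty$ from the integrability of $\dot\omega$ that you spell out are left implicit there.
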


\begin{proof}
Let us recall that the decomposition of the function $f$ as
$$f = g +h_1 +k +k_1$$

\noindent depends on the quantity $\omega(T)$. On the other hand
Corollary \ref{limit-existence} claims that the function $\omega(t)$
converges to some $\omega_{\infty} > 0$ as $t \rightarrow +\infty$.

\noindent As a consequence, one can reformulate the decomposition by
choosing $T = +\infty$. Moreover, all the estimates obtained before
for finite $T$ can be extended to $T = +\infty$ without
modification. Hence the theorem.
\end{proof}

\noindent The next goal is to construct precise asymptotic
expressions for $\omega(t)$, $\gamma(t)$, and $z(t)$. For later
convenience let us define (recall that $\xi$ depends explicitly on
$\omega$, see \eqref{mu}; and similarly for $K$, see \eqref{zeta1}
and subsequent, and $\gamma$)
$$\xi_{\infty} = \xi(\omega_{\infty}),$$
$$\gamma_{\infty} = \gamma(\omega_{\infty}),$$
$$K_{\infty} = K(\omega_{\infty}).$$

\begin{lemma}   \label{large-time}
Under the assumption of Theorem \ref{teo-infinito} the functions
$\omega(t)$, $\gamma(t)$, and $z(t)$ have the following asymptotic
behavior as $t \rightarrow +\infty$:
$$\omega(t) = \omega_{\infty} + \frac{q_1}{1 +\epsilon k_{\infty} t}
+\frac{q_2}{1 +\epsilon k_{\infty} t} \cos(2 \xi_{\infty} t +a_1
\log(1 +\epsilon k_{\infty} t) +a_2) +O(t^{-3/2}),$$
$$\gamma(t) = \gamma_{\infty} +b_1 \log(1 +\epsilon k_{\infty} t) +O(t^{-1}),$$
$$z(t) = z_{\infty} \frac{e^{i\int_0^t \xi(\tau) d\tau}}{(1 +\epsilon k_{\infty}
t)^{\frac{1 -i\delta}{2}}} +O(t^{-1}),$$

\noindent where
$$z_{\infty} = z_1(0) +\int_0^{+\infty} e^{-i\int_0^s \xi(\tau) d\tau} (1
+\epsilon k_{\infty} s)^{\frac{1 -i\delta}{2}} Z_1(s) ds,$$

$\epsilon k_{\infty} = 2\Im(K_{\infty})y_0$, $\delta =
\frac{\Re(K_{\infty})}{\Im(K_{\infty})}$, and $q_1$, $q_2$, $a_1$,
$a_2$, $b_1$ are constants.
\end{lemma}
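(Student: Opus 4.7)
The plan is to bootstrap from the canonical-form system of Section III (equations for $\omega_1$, $\gamma_1$, $z_1$, $P^c_T h_1$) together with the decay rates already supplied by Corollary \ref{limit-existence}: $|z(t)|\leq \overline M (\epsilon/(1+\epsilon t))^{1/2}$, $|\omega(t)-\omega_\infty|\leq \overline M \epsilon/(1+\epsilon t)$, and $\|f(t)\|_{L^\infty_{w^{-1}}}\leq \overline M \epsilon/(1+\epsilon t)$. Throughout I would freeze $T=+\infty$, which is legitimate by the existence of $\omega_\infty$. All the remainders $\widehat Z_R$, $\widehat\Omega_R$, $\widehat\Gamma_R$ decay like $t^{-2}$ and are therefore integrable in time; this is what supplies the $O(t^{-1})$ error terms in the asymptotic formulas for $z$ and $\gamma$.

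The central step is the analysis of $z_1$ in polar form $z_1=\sqrt y\, e^{i\phi}$. For the amplitude, Lemma \ref{stima-y} applied with $T=\infty$ gives directly
$$y(t)=\frac{y_0}{1+\epsilon k_\infty t}+O(t^{-3/2}),\qquad \epsilon k_\infty =2\Im(K_\infty)y_0.$$
For the phase, taking imaginary parts of $\dot z_1/z_1$ from \eqref{zeta1} yields $\dot\phi=\xi(\omega)+\Re(K)\,y+O(|\widehat Z_R|/|z_1|)$; substituting the expansion of $y$, using $\Re(K(\omega(t)))=\Re(K_\infty)+O(t^{-1})$ and the identity $\Re(K_\infty)y_0/(\epsilon k_\infty)=\delta/2$, integration yields
$$\phi(t)=\int_0^t \xi(\tau)\,d\tau+\frac{\delta}{2}\log(1+\epsilon k_\infty t)+\phi_\infty+O(t^{-1}).$$
Combining amplitude and phase gives the announced formula for $z_1(t)$ with $z_\infty=\sqrt{y_0}\,e^{i\phi_\infty}$; the integral representation for $z_\infty$ in the statement is just the Duhamel formula for \eqref{zeta1} after multiplication by the integrating factor $e^{-i\int_0^t\xi\,d\tau}(1+\epsilon k_\infty t)^{(1-i\delta)/2}$, evaluated as $t\to+\infty$. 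Since $z=z_1-c_{20}z^2-c_{11}|z|^2-\ldots = z_1+O(t^{-1})$, the asymptotic transports to $z$.

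For $\omega$: the identity $\dot\omega_1=\widehat\Omega_R=O(t^{-2})$ gives $\omega_1(t)=\omega_1(\infty)-\int_t^{+\infty}\widehat\Omega_R\,d\tau$; expanding the integrand with the leading asymptotic of $z$ just obtained isolates a non-oscillating piece $\sim 1/(1+\epsilon k_\infty\tau)^2$ and an oscillating piece $\sim e^{2i\int\xi\,d\tau}/(1+\epsilon k_\infty\tau)^2$, whose integrals from $t$ to $+\infty$ are, respectively, $\sim 1/(1+\epsilon k_\infty t)$ and, by one integration by parts against the oscillating factor, $\sim e^{2i\int_0^t \xi\,d\tau}/(1+\epsilon k_\infty t)$ modulo $O(t^{-3/2})$. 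Inverting Proposition \ref{zetuno}, and using $\omega_1(\infty)=\omega_\infty$ (since $z\to 0$), the non-oscillating pieces collect into $q_1/(1+\epsilon k_\infty t)$ and the oscillating ones into the real combination $q_2\cos(2\xi_\infty t+a_1\log(1+\epsilon k_\infty t)+a_2)/(1+\epsilon k_\infty t)$, with remainder $O(t^{-3/2})$. For $\gamma$: $\dot\gamma_1=\Gamma_{11}(\omega)|z|^2+\widehat\Gamma_R$ with $\widehat\Gamma_R=O(t^{-2})$; inserting $|z|^2=y_0/(1+\epsilon k_\infty t)+O(t^{-3/2})$ and $\Gamma_{11}(\omega(t))=\Gamma_{11}(\omega_\infty)+O(t^{-1})$, direct integration produces $\gamma_1(t)=b_1\log(1+\epsilon k_\infty t)+\gamma_\infty+O(t^{-1})$ with $b_1=\Gamma_{11}(\omega_\infty)y_0/(\epsilon k_\infty)$; the inversion $\gamma=\gamma_1+O(t^{-1})$ concludes the proof.

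The main technical obstacle is the self-consistent bookkeeping of the phase $\int_0^t \xi(\tau)\,d\tau$: because $\omega(t)$ carries both a non-integrable $1/(1+\epsilon k_\infty t)$ drift and an oscillating $\cos/(1+\epsilon k_\infty t)$ piece, $\xi(\omega(t))=\xi_\infty+\xi'(\omega_\infty)(\omega-\omega_\infty)+O(t^{-2})$ inherits both. The drift contributes an extra logarithmic phase $\xi'(\omega_\infty)q_1\log(1+\epsilon k_\infty t)/(\epsilon k_\infty)$ to $\int_0^t \xi\,d\tau$, which must be absorbed into the coefficient $a_1$ inside the cosine of the $\omega$-formula; the oscillating piece, by one integration by parts, contributes only $O(t^{-1})$. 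Closing all error terms at $O(t^{-3/2})$ for $\omega$ and $O(t^{-1})$ for $z,\gamma$ consistently across the three coupled equations is the point where the analysis requires genuine care.
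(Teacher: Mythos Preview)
Your approach is correct and parallels the paper closely for $\omega$ and $\gamma$ (the paper in fact proves only the $z$ asymptotic and refers to Sections 6.1--6.2 of Buslaev--Sulem for the other two, so your sketch there is actually more detailed than the paper's). The genuine methodological difference is in the treatment of $z_1$.

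The paper does not decompose $z_1$ into amplitude and phase. Instead it substitutes the leading behavior $|z_1|^2 = y(0)/(1+2\Im(K_\infty)y(0)t) + O(t^{-3/2})$ directly into $\dot z_1 = i\xi z_1 + iK_\infty |z_1|^2 z_1 + \widehat{\widehat Z}_R$, obtaining a \emph{linear} ODE $\dot z_1 = \bigl(i\xi - \tfrac{1}{2}\epsilon k_\infty(1-i\delta)/(1+\epsilon k_\infty t)\bigr)z_1 + Z_1$ with $Z_1=O(t^{-2})$. Multiplying by the integrating factor $e^{-i\int_0^t\xi}(1+\epsilon k_\infty t)^{(1-i\delta)/2}$ and integrating produces the Duhamel formula for $z_\infty$ in the statement directly, with remainder $z_R=-\int_t^\infty(\ldots)Z_1(s)\,ds=O(t^{-1})$.

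Your polar route works too, but with one correction: the error you accumulate in $\dot\phi$ is $O(|\widehat Z_R|/|z_1|)+O(t^{-3/2})=O(t^{-3/2})$, so after integration the phase error is $O(t^{-1/2})$, not $O(t^{-1})$. The final $O(t^{-1})$ for $z_1$ is still correct because this relative phase error is multiplied by the $O(t^{-1/2})$ amplitude. Also, your identification $z_\infty=\sqrt{y_0}\,e^{i\phi_\infty}$ is not literally the integral formula in the statement; the two constants agree (both are the unique coefficient making the asymptotic valid), but the integrating-factor method delivers the stated integral representation without a separate argument.
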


\begin{proof}
We will prove the asymptotics for $z(t)$ only; the formulas for
$\omega(t)$ and $\gamma(t)$ can be deduced as in Sections 6.1 and
6.2 of \cite{BS}.

\noindent In order to do that let us recall that the equation for
$z_1(t)$ can be written as
$$\dot{z_1} = i\xi z_1 +i K_{\infty} |z_1|^2 z_1 +\widehat{\widehat{Z}}_R,$$

\noindent moreover Remark \ref{resto-z} and the inequalities
satisfied by the majorants in Lemma \ref{maj-ineq} justify the
following estimates on $\widehat{\widehat{Z}}_R$
$$|\widehat{\widehat{Z}}_R| \leq \mathcal{R}_1(\omega, |z|
+\|f\|_{L^{\infty}_{w^{-1}}}) [(|z|^2
+\|f\|_{L^{\infty}_{w^{-1}}})^2 +|z| |\omega_T -\omega| (|z|^2
+\|h\|_{L^{\infty}_{w^{-1}}}) +$$
$$+|z|\|P^c_Tk_1\|_{L^{\infty}_{w^{-1}}} +|z|
\|P^c_Th_1\|_{L^{\infty}_{w^{-1}}}] \leq$$
$$\leq \mathcal{R}(\epsilon^{1/2} M) \frac{\epsilon^2}{(1 +\epsilon t)^{3/2}
\sqrt{\epsilon t}} (1 +\overline{M}^4) = O(t^{-2}),$$

\noindent as $t \rightarrow +\infty$. On the other hand, Lemma
\ref{stima-y} implies
$$y(t) = \frac{y(0)}{1 +2\Im(K_{\infty}) y(0) t} +O(t^{-3/2}), \qquad
\textrm{as} \; t \rightarrow +\infty.$$

\noindent Let us note that $|z_1|$ satisfies the same bound of
$|z|$, namely
$$|z_1| \leq \overline{M} \left( \frac{\epsilon}{1 +\epsilon t} \right)^{1/2},$$

\noindent then the equation for $z_1(t)$ can be rewritten in the
formulas
$$\dot{z_1} = i\xi z_1 +i K_{\infty} \frac{y(0)}{1 +2\Im(K_{\infty}) y(0) t} z_1
+Z_1,$$

\noindent where $Z_1 = O(t^{-2})$ as $t \rightarrow +\infty$.

\noindent Since $y(0) = \epsilon y_0$, one has $\epsilon K_{\infty}
y_0 = \frac{i}{2} \epsilon k_{\infty} (1 -i\delta)$ and the equation
for $z_1(t)$ becomes
$$\dot{z_1} = \left( i\xi -\frac{i}{2} \epsilon k_{\infty} (1 -i\delta)
\frac{1}{1 +\epsilon k_{\infty} t} \right) z_1 +Z_1.$$

\noindent Hence, one gets
$$z_1(t) = \frac{e^{i \int_0^t \xi(\tau) d\tau}}{(1 +\epsilon k_{\infty}
t)^{\frac{1 -i\delta}{2}}} \left( z_1(0) +\int_0^s e^{-i \int_0^t
\xi(\tau) d\tau} (1 +\epsilon k_{\infty} s)^{\frac{1 -i\delta}{2}}
ds \right) = z_{\infty} \frac{e^{i \int_0^t \xi(\tau) d\tau}}{(1
+\epsilon k_{\infty} t)^{\frac{1 -i\delta}{2}}} +z_R,$$

\noindent where $z_{\infty}$ is as in the statement of the lemma and
$$z_R = -\int_t^{+\infty} e^{i \int_s^t \xi(\tau) d\tau} \left( \frac{1
+\epsilon k_{\infty} s}{1 +\epsilon k_{\infty} t} \right)^{\frac{1
-i\delta}{2}} Z_1(s) ds.$$

\noindent The bound on $Z_1$ implies $z_R = O(t^{-1})$. Therefore
$z(t)$ has the asymptotic behavior as $t \rightarrow +\infty$ stated
in the lemma because
$$z(t) = z_1(t) +O(t^{-1}) =  z_{\infty} \frac{e^{i\int_0^t \xi(\tau) d\tau}}{(1
+\epsilon k_{\infty} t)^{\frac{1 -i\delta}{2}}} +O(t^{-1}).$$

\end{proof}

\subsection{Scattering asymptotics}
Let us make the following ansatz
$$u(t,x) = s(t,x) +\zeta(t,x) +f(t,x),$$

\noindent where
$$s(t,x) = e^{i\Theta(t)} \Phi_{\omega(t)}(x),$$

\noindent is the modulated soliton and
$$\zeta(t,x) = e^{i\Theta(t)} [(z(t) +\overline{z}(t)) \Psi_1(x) +i(z(t)
-\overline{z}(t)) \Psi_2(x)]$$

\noindent is the fluctuating component. Recall that the functions
$\Phi_{\omega}$, $\Psi_1$ and $\Psi_2$ satisfy
$$\omega \Phi_{\omega} = -H_{\alpha} \Phi_{\omega},$$
$$\omega \Psi_1 = -i\xi \Psi_2 -H_{\alpha_1}\Psi_1,$$
$$\omega \Psi_2 = i\xi \Psi_1 -H_{\alpha_2}\Psi_2.$$

\noindent Therefore from equation \eqref{eq1} one gets
$$\left( i \frac{df}{dt}, v \right)_{L^2} = Q_0(f, v) -\nu
(|q_u|^{2\sigma}q_u -|q_s|^{2\sigma}q_s -\alpha_1 q_{(z
+\overline{z})\Psi_1} -\alpha_2 q_{(z -\overline{z})\Psi_2})
\overline{q_v} +$$
$$+(\dot{\gamma} (s +\zeta) -i\dot{\omega} \frac{d}{d\omega}(s +\zeta) -i
e^{i\Theta} [(\dot{z} -i\xi z) (\Psi_1 +i\Psi_2)
+(\dot{\overline{z}} -i\xi \overline{z}) (\Psi_1 -i\Psi_2)],
v)_{L^2},$$

\noindent for all $v \in V$, where $Q_0$ is the quadratic form of
the free Laplacian. Hence, as in \cite{ADFT}, the solution $f(t)$
can be formally expressed as
$$f(t,x) = U_t *f_0(x) +i \int_0^t U_{t -\tau}(x) q_f(\tau) d\tau -i \int_0^t
U_{t -\tau} *G(\tau) d\tau,$$

\noindent where we have denoted
$$G(t) = \dot{\gamma}(t) (s(t) +\zeta(t)) -i\dot{\omega}(t)
\frac{d}{d\omega}(s(t) +\zeta(t)) +$$
$$-i e^{i\Theta(t)} [(\dot{z}(t) -i\xi z(t)) (\Psi_1(t) +i\Psi_2(t))
+(\dot{\overline{z}}(t) -i\xi \overline{z}(t)) (\Psi_1(t)
-i\Psi_2(t))]$$

\noindent and $U_t(x) = \frac{e^{i \frac{|x|^2}{4t}}}{(4\pi
it)^{3/2}}$ is the propagator of the free Laplacian in $\R^3$.

\noindent In order to prove the asymptotic stability result we need
the two following lemmas.

\begin{lemma}
If the assumptions of Theorem \ref{teo-infinito} hold true, then
$$\int_0^t U_{t -\tau}(x) q_f(\tau) d\tau = U_t * \int_0^{+\infty} U_{-\tau}(x)
q_f(\tau) d\tau -\int_t^{+\infty} U_{t -\tau}(x) q_f(\tau) d\tau =
U_t * \phi_0 +r_0,$$

\noindent where $\phi_0 \in L^2(\R^3)$ and $r_0 = O(t^{-1/4})$ as $t
\rightarrow +\infty$ in $L^2(\R^3)$.
\end{lemma}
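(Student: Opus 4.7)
The first equality in the statement is purely algebraic: splitting $\int_0^t = \int_0^{+\infty} - \int_t^{+\infty}$ and invoking the semigroup identity $U_t * U_{-\tau} = U_{t-\tau}$ (verified on the Fourier side via $e^{-it|k|^2}e^{i\tau|k|^2} = e^{-i(t-\tau)|k|^2}$) immediately produces
$$\phi_0 := \int_0^{+\infty} U_{-\tau}(\cdot)\, q_f(\tau)\, d\tau, \qquad r_0(t,\cdot) := -\int_t^{+\infty} U_{t-\tau}(\cdot)\, q_f(\tau)\, d\tau.$$
It remains to prove $\phi_0 \in L^2(\R^3)$ and $\|r_0(t)\|_{L^2} = O(t^{-1/4})$.

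Passing to the Fourier transform and using spherical coordinates with the change of variables $s = |k|^2$ (so that $dk = 2\pi\sqrt{s}\,ds$), both claims reduce to weighted $L^2$ integrals of the one-sided Fourier transforms
$$F(s) = \int_0^{+\infty} e^{is\tau}\, q_f(\tau)\, d\tau, \qquad F_t(s) = \int_t^{+\infty} e^{is\tau}\, q_f(\tau)\, d\tau,$$
namely $\|\phi_0\|_{L^2(\R^3)}^2 = 2\pi \int_0^{+\infty} |F(s)|^2\sqrt{s}\, ds$ and the analogous formula for $r_0$. The two inputs I would extract beforehand are the pointwise charge decay $|q_f(\tau)| \lesssim \|f(\tau)\|_{L^{\infty}_{w^{-1}}} \lesssim \epsilon/(1+\epsilon\tau)$, obtained from Theorem \ref{teo-infinito} by testing the weighted norm against the singularity of $G_0$ at the origin, and a corresponding bound on $\dot q_f$ extracted by differentiating the modulation equation \eqref{exp-f} for $f$ and re-inserting the majorant estimates of Section \ref{Majo}.

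To establish $\phi_0 \in L^2$, I split the $s$-integral at $s = 1$. On $\{s \geq 1\}$, integration by parts in $\tau$ yields $|F(s)| \lesssim (|q_f(0)| + \|\dot q_f\|_{L^1})/s$, so the contribution is dominated by $\int_1^{\infty} s^{-3/2}\,ds < \infty$. On $\{s \leq 1\}$, the Plancherel identity on the half-line, combined with $q_f \in L^2(\R^+)$ (immediate from $|q_f(\tau)| \lesssim (1+\tau)^{-1}$), bounds $\int_0^1 |F(s)|^2\,ds \leq \|q_f\|_{L^2(\R^+)}^2 < \infty$. For $r_0$ the same integration by parts, with the boundary term now appearing at $\tau = t$, gives the stronger bound $|F_t(s)| \lesssim 1/(st)$, while Plancherel applied to the tail $q_f\chi_{[t,\infty)}$ yields $\|F_t\|_{L^2(\R^+)}^2 \lesssim \|q_f\chi_{[t,\infty)}\|_{L^2}^2 = O(1/t)$. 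Splitting the weighted integral at $s = 1/t$ and using each estimate on the appropriate side produces $\|r_0\|_{L^2}^2 \lesssim t^{-3/2}$, which is more than enough for the announced $O(t^{-1/4})$ decay.

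The main obstacle I expect is the low-frequency regime $s \to 0$ in the estimate of $\|\phi_0\|_{L^2}$: the pointwise bound $|F(s)| \lesssim 1/s$ produced by integration by parts fails to be $\sqrt{s}$-integrable at the origin, and a naive triangle inequality $|F(s)| \leq \|q_f\|_{L^1}$ is logarithmically divergent since $q_f$ decays only as $1/\tau$. The Plancherel identity on the half-line circumvents this, but its use requires genuine temporal regularity of $q_f$, namely $\dot q_f \in L^1(\R^+)$ for the high-frequency IBP branch; this regularity must be harvested from the modulation equation for $f$ and the majorant bounds, rather than from the dispersive estimate of Theorem \ref{teo-infinito} alone.
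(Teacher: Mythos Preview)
Your approach is correct and takes a route that is dual to the paper's. The paper works on the physical side: writing $\phi_0(x)=C\,\widetilde\phi_0(|x|^2/4)$, it converts $\|\phi_0\|_{L^2(\R^3)}$ into $\|\widetilde\phi_0\|_{L^2(\R^+,\sqrt{y}\,dy)}$, then performs the change of variable $u=1/\tau$ inside the time integral so that $\widetilde\phi_0$ appears as a Fourier-type transform of $u^{-1}q_f(1/u)$, and closes by a weighted ``Plancherel'' step using only the pointwise decay $|q_f(\tau)|\lesssim (1+\epsilon\tau)^{-1}$; the remainder $r_0$ is handled the same way with $q_f(t+1/u)$ in place of $q_f(1/u)$. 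You instead pass to the spatial Fourier side, reduce to the weighted integral $\int_0^\infty |F(s)|^2\sqrt{s}\,ds$ with $F$ the temporal Fourier transform of $q_f$, and split into high/low frequency.

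The tradeoffs are as follows. The paper's argument is shorter and nominally uses only the zeroth-order decay of $q_f$, but the weighted Plancherel it invokes---an $L^2(\R^+,\sqrt{u}\,du)\to L^2(\R^+,\sqrt{y}\,dy)$ isometry for the transform $h\mapsto\int_0^\infty e^{-iyu}h(u)\sqrt{u}\,du$---is neither ordinary Plancherel nor Pitt's inequality, and is not justified in the text. Your argument is more transparent and self-contained, but as you correctly flag, the integration-by-parts branch requires temporal regularity of $q_f$. One refinement: to obtain the sharper rate $\|r_0\|_{L^2}^2\lesssim t^{-3/2}$ you announce, you need not merely $\dot q_f\in L^1$ but a quantitative tail bound $\int_t^\infty|\dot q_f|\,d\tau\lesssim 1/t$. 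With only $\dot q_f\in L^1$ (so that $|F_t(s)|\lesssim C/s$ uniformly in $t$), you should take the split point at $s_0\sim t$ rather than $s_0=1/t$; balancing $\sqrt{s_0}\,\|q_f\chi_{[t,\infty)}\|_{L^2}^2\sim\sqrt{s_0}/t$ against $s_0^{-1/2}$ then yields exactly the $O(t^{-1/4})$ stated in the lemma.
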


\begin{proof}
The strategy is similar to the one exploited in the case without eigenvalues (see the proof of Theorem 7.1 in \cite{ADO}): since
$\phi_0(x) = \frac{1}{(4\pi i)^{3/2}} \widetilde{\phi}_0 \left(
\frac{|x|^2}{4} \right)$, for some function $\widetilde{\phi}_0: \R^+
\rightarrow \C$, one gets
$$\| \phi_0 \|^2_{L^2} = \frac{1}{(4\pi)^2} \int_0^{+\infty} \left|
\widetilde{\phi}_0 \left( \frac{r^2}{4} \right) \right|^2 r^2 dr =
\frac{1}{(2\pi)^2} \int_0^{+\infty} | \widetilde{\phi}_0 (y) |^2
\sqrt{y} dy.$$

\noindent Hence $\phi_0 \in L^2(\R^3)$ if and only if
$\widetilde{\phi}_0 \in L^2(\R^+, \sqrt{y} dy)$. On the other hand,
one can make the change of variables $u = \frac{1}{\tau}$ in the
integral that defines the function $\widetilde{\phi}_0$ and get
$$\widetilde{\phi}_0(y) = \int_0^{+\infty} e^{-iyu} \frac{1}{u} q_f \left(
\frac{1}{u} \right) \sqrt{u} du,$$

\noindent then $\widehat{\widetilde{\phi}_0} = \frac{1}{u} q_f
\left( \frac{1}{u} \right)$. Moreover, by Corollary
\ref{limit-existence} one has
$$\| \widehat{\widetilde{\phi}_0} \|_{L^2}^2 = \int_0^{+\infty} \frac{1}{u^2}
\left| q_f \left( \frac{1}{u} \right) \right|^2 \sqrt{u} du \leq C
\int_0^{+\infty} \frac{\sqrt{u}}{(u +\epsilon)^2} du \leq C,$$

\noindent for some constant $C > 0$, hence the Plancherel identity
implies
$$\widetilde{\phi}_0 \in L^2(\R^+, \sqrt{y} dy).$$

\noindent In the same way, for any $t > 0$ the following holds
$$\| r_0 \|_{L^2}^2 = \frac{1}{(2\pi)^2} \left\| \frac{1}{u} q_f \left( t
+\frac{1}{u} \right) \right\|_{L^2(\R^+, \sqrt{u} du)}^2 \leq C
\frac{1}{\sqrt{1 +\epsilon t}},$$

\noindent for some constant $C > 0$ independent of $t$. Which
concludes the proof.
\end{proof}

\noindent The analogous result for the integral function $\int_0^t
U_{t -\tau} * G(\tau) d\tau$ requires different tools.

\begin{lemma}
Assume that the assumptions of Theorem \ref{teo-infinito} hold true,
then
$$\int_0^t U_{t -\tau} * G(\tau) d\tau = U_t * \int_0^{+\infty} U_{-\tau} *
G(\tau) d\tau -U_t * \int_t^{+\infty} U_{-\tau} * G(\tau) d\tau =
U_t * \phi_1 +r_1,$$

\noindent where $\phi_1 \in L^2(\R^3)$ and $r_1 = O(t^{-1/2})$ as $t
\rightarrow +\infty$ in $L^2(\R^3)$.
\end{lemma}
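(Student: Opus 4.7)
The plan is to split
$$\int_0^t U_{t-\tau}*G(\tau)\,d\tau = U_t*\phi_1 + r_1,\qquad
\phi_1 := \int_0^{+\infty} U_{-\tau}*G(\tau)\,d\tau,\qquad
r_1 := -U_t*\int_t^{+\infty} U_{-\tau}*G(\tau)\,d\tau,$$
and to reduce both claims $\phi_1 \in L^2(\R^3)$ and $\|r_1\|_{L^2} = O(t^{-1/2})$ to a single oscillatory-integral estimate, since $U_t$ is unitary on $L^2(\R^3)$ and $\|r_1\|_{L^2}=\|\int_t^{+\infty}U_{-\tau}*G(\tau)\,d\tau\|_{L^2}$. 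A direct bound gives only $\|G(\tau)\|_{L^2}\sim \tau^{-1}$ from the $\dot\gamma(s+\zeta)$ piece, which is not time-integrable; the saving mechanism is that every summand of $G(\tau,x)$ carries the soliton phase $e^{i\Theta(\tau)}$.

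Writing $G(\tau,x) = e^{i\Theta(\tau)}\tilde G(\tau,x)$ with $\tilde G$ slowly varying in $\tau$, Plancherel and the Fourier multiplier representation of $U_{-\tau}$ yield
$$\widehat{U_{-\tau}*G(\tau)}(\xi) \;=\; e^{i\Theta(\tau)+i\tau|\xi|^2}\,\widehat{\tilde G}(\tau,\xi).$$
The combined phase is nonstationary in $\tau$, since $\frac{d}{d\tau}\bigl[\Theta(\tau)+\tau|\xi|^2\bigr] = \omega(\tau)+|\xi|^2 \geq \omega_{\min} > 0$ uniformly in $\xi\in\R^3$ by Corollary \ref{limit-existence}. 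One integration by parts in $\tau$ produces the multiplier $(\omega(\tau)+|\xi|^2)^{-1}$, which lies in $L^2(\R^3_\xi)$ in three dimensions, and the explicit Yukawa form of $\hat\Phi_\omega$, $\hat\Psi_1$, $\hat\Psi_2$ (each proportional to $(|\xi|^2+\omega)^{-1}$) ensures that the product $(\omega+|\xi|^2)^{-1}\widehat{\tilde G}(\tau,\xi)$ remains in $L^2_\xi$ with time-uniform bound.

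After this integration by parts the integrand acquires a derivative $\partial_\tau\bigl[\dot\gamma(\tau)\widehat{\tilde G}(\tau,\xi)\cdots\bigr]$ whose slowest component is $\dot\gamma(\tau)\,\partial_\tau z(\tau) \sim \tau^{-1}\cdot \tau^{-1/2}$ by Lemma \ref{large-time} (the variations of $\Phi_{\omega(\tau)}$ and $\Psi_i(\omega(\tau))$ enter only through $\dot\omega \sim \tau^{-2}$, while the factor $\dot z - i\xi z$ already decays like $\tau^{-3/2}$). Hence the integrand decays like $\tau^{-3/2}$ in $L^2_\xi$, so $\int_t^{+\infty}\tau^{-3/2}\,d\tau = O(t^{-1/2})$, while the boundary term at $\tau=t$ contributes $O(t^{-1})$ by the dispersive multiplier bound. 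This yields the claimed $\|r_1\|_{L^2} = O(t^{-1/2})$, and running the same estimate on $[0,+\infty)$ (with the boundary term at $\tau=0$ controlled by the initial datum $f_0 \in V\cap L^1_w$) gives $\phi_1 \in L^2$.

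The main obstacle is precisely the slow $\tau^{-1/2}$ decay of $\dot z$, which forbids absolute convergence even after the single integration by parts; it is this term that fixes the final rate at $t^{-1/2}$ rather than the faster $t^{-1}$ one would get otherwise, and that requires the most care in bookkeeping the derivatives of $\widehat{\tilde G}(\tau,\xi)$.
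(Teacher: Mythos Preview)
Your integration-by-parts scheme in Fourier space has a genuine gap: the claim that $\tilde G$ is ``slowly varying'' after factoring out $e^{i\Theta(\tau)}$ is false, and this is not a bookkeeping issue but the heart of the matter. The leading $\tau^{-1}$ part of $G$ comes from $\dot\gamma$ and $\dot\omega$, which by \eqref{sv-omega}--\eqref{sv-gamma} are quadratic in $z,\bar z$; in particular they contain pieces proportional to $z^2\sim e^{2i\xi\tau}/\tau$ and $\bar z^2\sim e^{-2i\xi\tau}/\tau$. After your single integration by parts these oscillatory pieces are differentiated and produce factors $2z\dot z\approx 2i\xi z^2\sim \tau^{-1}$, not $\tau^{-3/2}$, so the integrand is still not absolutely integrable. (Relatedly, your assertion $\dot\omega\sim\tau^{-2}$ is incorrect: since $\Omega_{11}=0$ but $\Omega_{20},\Omega_{02}\neq 0$, one only has $\dot\omega\sim\tau^{-1}$ with oscillatory coefficient.)

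The decisive obstruction is that for the $\bar z^2$ piece the \emph{total} phase in Fourier is $\Theta(\tau)-2\xi\tau+\tau|\xi|^2$, whose $\tau$-derivative $\omega-2\xi+|\xi|^2$ vanishes on the sphere $|\xi|^2=2\xi-\omega>0$. This is exactly the resonance $2i\xi\in\sigma_{\mathrm{ess}}(L)$ that defines the regime $\sigma\in(\tfrac{1}{\sqrt2},\tfrac{\sqrt3+1}{2\sqrt2})$, so your nonstationary-phase hypothesis $\partial_\tau[\Theta+\tau|\xi|^2]\geq\omega_{\min}>0$ cannot cover it. The paper's proof confronts this head-on: it first isolates the three quadratic pieces $e^{i\Theta}z_\infty^2$, $e^{i\Theta}|z_\infty|^2$, $e^{i\Theta}\bar z_\infty^2$ (Step~1), then for the resonant $\bar z_\infty^2$ term carries out an explicit stationary-phase analysis in physical space, splitting the $\tau$-integral around the critical point $\tau=t^*=|x|/(2\sqrt{2\xi_\infty-\omega_\infty})$ (Step~3). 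Your argument would need an analogous treatment of this stationary set before it can close.
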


\begin{proof}
We exploit the idea used in \cite{KKS} to prove Lemma 5.5.

\noindent \texttt{Step 1: restriction to the leading terms.}

\noindent From the expansions \eqref{sv-omega}, \eqref{sv-gamma} and
\eqref{sv-z} for $\dot{\omega}(t)$, $\dot{\gamma}(t)$ and
$\dot{z}(t) -i\xi z(t)$ follow that the function $G(t)$ is made by a
quadratic part consisting in the terms multiplied $e^{i\Theta (t)}
z_{\infty}^2$, $e^{i\Theta (t)} \overline{z_{\infty}}^2$ or
$e^{i\Theta (t)} |z_{\infty}|^2$, with
$$z_{\infty} = \frac{e^{i \xi_{\infty} t}}{\sqrt{1 +\epsilon k_{\infty} t}}, $$

\noindent which are of order $t^{-1}$ and a remainder of order
$t^{-3/2}$. The convergence and the decay of the remainder is
trivial from the unitarity of $U_t$. Furthermore, from the analytic
definition of $G$ it follows that it is a complex linear combination
of functions of the form
$$Q(x) = e^{-\sqrt{\alpha} |x|^2}, \qquad \alpha = \omega_{\infty},
\omega_{\infty} +\nu_{\infty}, \omega_{\infty} -\nu_{\infty}.$$

\noindent Hence it suffices to prove the lemma for the functions
$\Pi(t) Q(x)$, where $\Pi(t)$ is one between $e^{i\Theta (t)}
z_{\infty}^2$, $e^{i\Theta (t)} \overline{z_{\infty}}^2$ and
$e^{i\Theta (t)} |z_{\infty}|^2$.

\noindent \texttt{Step 2: decomposition of $U_t * Q$.}

\noindent Let us note that we can rewrite the convolution product as
follows
$$U_t * Q = \frac{e^{i\frac{|x|^2}{4t}}}{(4\pi it)^{3/2}} \int_{\R^3}
e^{-i\frac{(x,y)}{2t}} Q(y) dy +\frac{e^{i\frac{|x|^2}{4t}}}{(4\pi
it)^{3/2}} \int_{\R^3} e^{-i\frac{(x,y)}{2t}} (e^{i\frac{|y|^2}{4t}}
-1) Q(y) dy =$$
\begin{equation}    \label{decomp-conv}
= \frac{e^{i\frac{|x|^2}{4t}}}{(2it)^{3/2}} \widehat{Q}
\left(\frac{x}{2t} \right)
+\frac{e^{i\frac{|x|^2}{4t}}}{(2it)^{3/2}} \widehat{Q_t}
\left(\frac{x}{2t} \right),
\end{equation}

\noindent where $Q_t(y) = (e^{i\frac{|y|^2}{4t}} -1) Q(y)$.

\noindent Since $|e^{i\theta} -1| \leq \theta$ and the function
$G(y)$ is exponentially decaying as $|y| \rightarrow +\infty$, the
$L^2$ norm of the second term of \eqref{decomp-conv} can be
estimated in the following way for any $t > 1$
$$\frac{1}{(2t)^{3/2}} \left\| \widehat{Q_t} \left(\frac{\cdot}{2t} \right)
\right\|_{L^2} = \| \widehat{Q_t}(\cdot) \|_{L^2} \leq \frac{1}{4t}
\left( \int_{\R^3} |y|^4 |Q(y)|^2 dy \right)^{1/2} \leq
\frac{C}{t},$$

\noindent for some constant $C > 0$. Hence, recalling that
$\Pi(\tau) \leq (1 +\epsilon k_{\infty} \tau)^{-1}$, we obtain
$$\int_0^{+\infty} \Pi(\tau) U_{\tau} * Q_t d\tau \in L^2(\R^3),$$

\noindent and
$$\int_t^{+\infty} \Pi(\tau) U_{\tau} * Q_t d\tau = O(t^{-1}),$$

\noindent as $t \rightarrow +\infty$ in $L^2(\R^3)$.

\noindent \texttt{Step 3: Analysis of the first term in
\eqref{decomp-conv} in a particular case.}

\noindent Let us first show how to treat the terms with the phase
$\Theta (t)$ replaced by $\omega_{\infty} t$.

\noindent Note that
$$\widehat{Q}(x) = \frac{1}{\alpha +|x|^2},$$

\noindent Hence, in the case of the summands with $|z_{\infty}|^2$
it suffices to prove the integrability of the function
$$I(x) = \int_0^{\infty} e^{i(\omega_{\infty} \tau -\frac{|x|^2}{4\tau})}
\frac{\sqrt{\tau}}{(1 +\epsilon k_{\infty} \tau) (|x|^2 +4\alpha
\tau^2)} d\tau =$$
$$= A(x) \int_0^{\infty} e^{i(\omega_{\infty} \tau -\frac{|x|^2}{4\tau})} \left(
\frac{\sqrt{\tau}}{(1 +\epsilon k_{\infty} \tau)}
-\frac{4\alpha}{\epsilon k_{\infty}} \frac{\tau \sqrt{\tau}}{(|x|^2
+4\alpha \tau^2)} \right) d\tau +$$
$$+\frac{4\alpha}{\epsilon^2 k_{\infty}^2} A(x) \int_0^{\infty}
e^{i(\omega_{\infty} \tau -\frac{|x|^2}{4\tau})}
\frac{\sqrt{\tau}}{(|x|^2 +4\alpha \tau^2)} d\tau = I_1(x)
+I_2(x),$$

\noindent and the decay of
$$I_t(x) = \int_t^{\infty} e^{i(\omega_{\infty} \tau -\frac{|x|^2}{4\tau})}
\frac{\sqrt{\tau}}{(1 +\epsilon k_{\infty} \tau) (|x|^2 +4\alpha
\tau^2)} d\tau =$$
$$= A(x) \int_t^{\infty} e^{i(\omega_{\infty} \tau -\frac{|x|^2}{4\tau})} \left(
\frac{\sqrt{\tau}}{(1 +\epsilon k_{\infty} \tau)}
-\frac{4\alpha}{\epsilon k_{\infty}} \frac{\tau \sqrt{\tau}}{(|x|^2
+4\alpha \tau^2)} \right) d\tau +$$
$$+\frac{4\alpha}{\epsilon^2 k_{\infty}^2} A(x) \int_t^{\infty}
e^{i(\omega_{\infty} \tau -\frac{|x|^2}{4\tau})}
\frac{\sqrt{\tau}}{(|x|^2 +4\alpha \tau^2)} d\tau = I_{1,t}(x)
+I_{2,t}(x),$$

\noindent where $A(x) = \frac{\epsilon^2 k_{\infty}^2}{4\alpha
+\epsilon^2 k_{\infty}^2 |x|^2}.$

\noindent For the function $I_2(x)$ one has

$$|I_2(x)| \leq \frac{4\alpha}{\epsilon^2 k_{\infty}^2} A(x) \int_0^{\infty}
\frac{\sqrt{\tau}}{(|x|^2 +4\alpha \tau^2)} d\tau = C
\frac{A(x)}{\sqrt{|x|}} \in L^2(\R^3).$$

\noindent With the same estimate it is trivial to prove
$$I_{2,t}(x) = O(t^{-1/2})$$

\noindent as $t \rightarrow +\infty$, in $L^2(\R^3)$.

\noindent In order to treat $I_1$ note that
$$\frac{\sqrt{\tau}}{(1 +\epsilon k_{\infty} t)} -\frac{4\alpha}{\epsilon
k_{\infty}} \frac{\tau \sqrt{\tau}}{(|x|^2 +4\alpha \tau^2)} =
-\frac{1}{\epsilon k_{\infty} \sqrt{\tau} (1 +\epsilon k_{\infty}
\tau)} +\frac{|x|^2}{\epsilon k_{\infty} \sqrt{\tau} (|x|^2 +4\alpha
\tau^2)}.$$

\noindent Since $\frac{1}{\epsilon k_{\infty} \sqrt{\tau} (1
+\epsilon k_{\infty} \tau)} = O(t^{-3/2})$ as $t \rightarrow
+\infty$, is integrable on $(0, +\infty)$ and $A(x) \in L^2(\R^3)$
one has to prove
$$\frac{|x|^2 A(x)}{\epsilon k_{\infty}} \int_0^{+\infty} e^{i(\omega_{\infty}
\tau -\frac{|x|^2}{4\tau})} \frac{1}{\sqrt{\tau} (|x|^2 +4\alpha
\tau^2)} d\tau =$$
$$= A(x) \int_0^{+\infty} e^{i\omega_{\infty} (\tau -\frac{|x|^2}{4
\omega_{\infty} \tau})} \frac{1}{\sqrt{\tau}} d\tau -4 \alpha A(x)
\int_0^{+\infty} e^{i(\omega_{\infty} \tau -\frac{|x|^2}{4\tau})}
\frac{\tau^{3/2}}{(|x|^2 +4\alpha \tau^2)} d\tau \in L^2(\R^3).$$

\noindent From formulas 3.871.3 and 3.871.4 in \cite{tavole} one has
$$A(x) \int_0^{+\infty} e^{i\omega_{\infty} (\tau -\frac{|x|^2}{4
\omega_{\infty} \tau})} \frac{1}{\sqrt{\tau}} d\tau =
\frac{e^{i\pi/4}}{\sqrt{\pi \omega_{\infty}}} A(x) |x|^{3/2}
e^{-\sqrt{\omega_{\infty}} |x|} \in L^2(\R^3).$$

\noindent It remains to handle with the second integral in the
former sum which can be done integrating by parts in the following
way
$$\left| A(x) \int_0^{+\infty} e^{i(\omega_{\infty} \tau -\frac{|x|^2}{4\tau})}
\frac{\tau^{3/2}}{(|x|^2 +4\alpha \tau^2)} d\tau \right| =$$
$$= 4 A(x) \left| \int_0^{+\infty} e^{i (\omega_{\infty} \tau
-\frac{|x|^2}{4\tau})} \frac{d}{d\tau} \left[
\frac{\tau^{7/2}}{(|x|^2 +4\alpha \tau^2) (|x|^2 +4\omega_{\infty}
\tau^2)} \right] d\tau \right| \leq$$
$$\leq C A(x) \int_0^{+\infty} \frac{\tau^{5/2}}{(|x|^2 +4\min\{ \alpha,
\omega_{\infty} \} \tau^2)^2} d\tau \leq C \frac{A(x)}{\sqrt{|x|}}
\in L^2(\R^3).$$

\noindent Then we are done.

\noindent In order to estimate the decay of $I_{1,t}$ it suffices to
study the decay of
$$\frac{|x|^2 A(x)}{\epsilon k_{\infty}} \int_t^{+\infty} e^{i(\omega_{\infty}
\tau -\frac{|x|^2}{4\tau})} \frac{1}{\sqrt{\tau} (|x|^2 +4\alpha
\tau^2)} d\tau =$$
$$= A(x) \int_t^{+\infty} e^{i\omega_{\infty} (\tau -\frac{|x|^2}{4
\omega_{\infty} \tau})} \frac{1}{\sqrt{\tau}} d\tau -4 \alpha A(x)
\int_t^{+\infty} e^{i(\omega_{\infty} \tau -\frac{|x|^2}{4\tau})}
\frac{\tau^{3/2}}{(|x|^2 +4\alpha \tau^2)} d\tau,$$

\noindent which can be done integrating by parts as before. Let us
do that for the second term (the computation for the first one are
analogous and simpler):
$$\left|  A(x) \int_t^{+\infty} e^{i(\omega_{\infty} \tau -\frac{|x|^2}{4\tau})}
\frac{\tau^{3/2}}{(|x|^2 +4\alpha \tau^2)} d\tau \right| =$$
$$= 4  A(x) \left| \int_t^{+\infty} e^{i (\omega_{\infty} \tau
-\frac{|x|^2}{4\tau})} \frac{d}{d\tau} \left[
\frac{\tau^{7/2}}{(|x|^2 +4\alpha \tau^2) (|x|^2 +4\omega_{\infty}
\tau^2)} \right] d\tau \right| \leq$$
$$\leq C  A(x) \left[ t^{-1/2} +\int_t^{+\infty} \frac{\tau^{5/2}}{(|x|^2
+4\min\{ \alpha, \omega_{\infty} \} \tau^2)^2} d\tau \right] \leq$$
$$\leq C A(x) \left( 1+ \frac{1}{\sqrt{|x|}} \right) t^{-1/2}.$$

\noindent The case of the summands with $z_{\infty}^2$ is analogous,
while the case of $\overline{z_{\infty}}^2$ is more difficult
because $|x|^2 +4(\omega_{\infty} -2\xi_{\infty}) \tau^2 = 0$ for
$$\tau = t^* = \frac{|x|}{2 \sqrt{2\xi_{\infty} -\omega_{\infty}}}.$$

\noindent Let $g: \R^+ \rightarrow \R^+$ be a continuous function
with the properties:
$$0 < g(t^*) < t^* \quad \forall t^* > 0, \qquad \textrm{and} \qquad
A(x)g(t^*) \in L^2(\R^3).$$

\noindent It follows that $g(t^*) = O(t^*) = O(|x|)$ as $|x|
\rightarrow +\infty$. Hence, one can represent $(0, +\infty) = (0,
t^* -g(t^*)] \cup (t^* -g(t^*), t^* +g(t^*)] \cup (t^* +g(t^*),
+\infty)$. Integrating by parts once more one has
$$\left| A(x) \int_0^{t^* -g(t^*)} e^{i((\omega_{\infty} -2\xi_{\infty}) \tau
-\frac{|x|^2}{4\tau})} \frac{t^{3/2}}{|x|^2 +4\alpha \tau^2} d\tau
\right| \leq$$
$$\leq C A(x) \left( (t^* -g(t^*))^{-1/2} +\int_0^{t^* -g(t^*)}
\frac{t^{5/2}}{(|x|^2 +4\alpha \tau^2) ||x|^2 +4(\omega_{\infty}
-2\xi_{\infty}) \tau^2|} d\tau + \right.$$
$$\left. +\int_0^{t^* -g(t^*)} \frac{t^{9/2}}{(|x|^2 +4\alpha
\tau^2)^2 ||x|^2 +4(\omega_{\infty} -2\xi_{\infty}) \tau^2|} d\tau
+\right.$$
$$\left. +\int_0^{t^* -g(t^*)} \frac{t^{9/2}}{(|x|^2 +4\alpha
\tau^2) ||x|^2 +4(\omega_{\infty} -2\xi_{\infty}) \tau^2|^2} d\tau
\right) \leq$$
$$\leq C A(x) ((t^* -g(t^*))^{-1/2} +(t^* -g(t^*))^{3/8}) \in L^2(\R^3),$$

\noindent where the last inequality follows from formula 3.194.1 in
\cite{tavole}. In the same way (exploiting formula 3.194.2 instead
of 3.194.1 in \cite{tavole}), one has
$$\left| A(x) \int_{t^* +g(t^*)}^{\infty} e^{i((\omega_{\infty} -2\xi_{\infty})
\tau -\frac{|x|^2}{4\tau})} \frac{t^{3/2}}{|x|^2 +4\alpha \tau^2}
d\tau \right| \leq$$
$$\leq C A(x)((t^* +g(t^*))^{-1/8} +(t^* +g(t^*))^{-9/8} +(t^* -g(t^*))^{-1/2})
\in L^2(\R^3).$$

\noindent Finally,
$$\left| A(x) \int_{t^* -g(t^*)}^{t^* +g(t^*)} e^{i((\omega_{\infty}
-2\xi_{\infty}) \tau -\frac{|x|^2}{4\tau})} \frac{t^{3/2}}{|x|^2
+4\alpha \tau^2} d\tau \right| \leq$$
$$\leq C A(x) \int_{t^* -g(t^*)}^{t^* +g(t^*)} \frac{1}{\sqrt{\tau}} d\tau
\leq C \frac{A(x) g(t^*)}{\sqrt{t^* -g(t^*)}} \in L^2(\R^3).$$

\noindent Summing up, the integrability of the integral function
$$\int_0^{+\infty} \Pi(\tau) U_{\tau} * Q d\tau$$

\noindent is achieved. It is left to study the decay of
$$A(x) \int_t^{+\infty} e^{i((\omega_{\infty} -2\xi_{\infty})
\tau -\frac{|x|^2}{4\tau})} \frac{t^{3/2}}{|x|^2 +4\alpha \tau^2}
d\tau.$$

\noindent First of all, let us note that integrating by parts one
obtains
$$\left| A(x) \int_{(0, t^* -g(t^*)] \cap [t, +\infty)} e^{i((\omega_{\infty}
-2\xi_{\infty}) \tau -\frac{|x|^2}{4\tau})} \frac{t^{3/2}}{|x|^2
+4\alpha \tau^2} d\tau \right| \leq$$
$$\leq C A(x) \int_t^{t^* -g(t^*)} \left| \frac{d}{d\tau} \frac{t^{7/2}}{(|x|^2
+4\alpha \tau^2) (|x|^2 +4(\omega_{\infty} -2\xi_{\infty}) \tau^2)}
\right| d\tau \leq$$
$$\leq C A(x) \left( t^{-1/2} +\int_t^{t^* -g(t^*)} \frac{\sqrt{\tau}}{|x|^2
+4\alpha \tau^2} d\tau +\int_t^{t^* -g(t^*)}
\frac{\sqrt{\tau}}{||x|^2 +4(\omega_{\infty} -2\xi_{\infty})
\tau^2|} d\tau + \right.$$
$$\left. +\int_t^{t^* -g(t^*)} \frac{\tau^{9/2}}{(|x|^2 +4\alpha
\tau^2) ||x|^2 +4(\omega_{\infty} -2\xi_{\infty}) \tau^2|^2} d\tau
\right).$$

\noindent The three integrals in the last inequality can be
estimated in the following way:
\begin{itemize}
\item[(i)]  $\int_t^{t^* -g(t^*)} \frac{\sqrt{\tau}}{|x|^2
+4\alpha \tau^2} d\tau \leq C \int_t^{+\infty} \tau^{-3/2} d\tau
\leq C t^{-1/2}$;
\item[(ii)] $\int_t^{t^* -g(t^*)} \frac{\sqrt{\tau}}{|x|^2
+4(2\xi_{\infty} -\omega_{\infty}) \tau^2} d\tau = \int_t^{t^*
-g(t^*)} \frac{\sqrt{\tau}}{(|x| +2\sqrt{2\xi_{\infty}
-\omega_{\infty}} \tau) ||x| -2\sqrt{2\xi_{\infty} -\omega_{\infty}}
\tau|} d\tau$\\ $\leq C t^{-1/2} \int_0^{t^* -g(t^*)} \frac{1}{||x|
-2\sqrt{2\xi_{\infty} -\omega_{\infty}} \tau|} d\tau \leq C
t^{-1/2}$;
\item[(iii)]    $\int_t^{t^* -g(t^*)} \frac{\tau^{9/2}}{(|x|^2 +4\alpha
\tau^2) ||x|^2 +4(\omega_{\infty} -2\xi_{\infty}) \tau^2|^2} d\tau
\leq C t^{-1/2} \int_0^{t^* -g(t^*)} \frac{\tau}{||x|
-2\sqrt{\omega_{\infty} -2\xi_{\infty}} \tau|^2} d\tau$\\ $\leq C
t^{-1/2} (1 +\ln ||x| -2\sqrt{\omega_{\infty} -2\xi_{\infty}} (t^*
-g(t^*))|)$.
\end{itemize}

\noindent Hence, since $A(x) \ln ||x| -2\sqrt{\omega_{\infty}
-2\xi_{\infty}} (t^* -g(t^*))| \in L^2(\R^3)$, one can conclude
$$A(x) \int_{(0, t^* -g(t^*)] \cap [t, +\infty)} e^{i((\omega_{\infty}
-2\xi_{\infty}) \tau -\frac{|x|^2}{4\tau})} \frac{t^{3/2}}{|x|^2
+4\alpha \tau^2} d\tau = O(t^{-1/2})$$

\noindent as $t \rightarrow +\infty$, in $L^2(\R^3)$.

\noindent Let us now observe that
$$\left| A(x) \int_{(t^* -g(t^*), +\infty) \cap [t, +\infty)}
e^{i((\omega_{\infty} -2\xi_{\infty}) \tau -\frac{|x|^2}{4\tau})}
\frac{t^{3/2}}{|x|^2 +4\alpha \tau^2} d\tau \right| \leq$$
$$\leq C A(x) \int_t^{+\infty} \left| \frac{d}{d\tau} \frac{t^{7/2}}{(|x|^2
+4\alpha \tau^2) (|x|^2 +4(\omega_{\infty} -2\xi_{\infty}) \tau^2)}
\right| d\tau \leq $$
$$\leq B(x) A(x) \left( t^{-1/2} +\int_t^{+\infty} \frac{\sqrt{\tau}}{|x|^2
+4\alpha \tau^2} d\tau \right) \leq C B(x) A(x) t^{-1/2} \in
L^2(\R^3),$$

\noindent where $B:\R^3 \rightarrow \R^+$ is a continuous bounded
function.

\noindent Finally,
$$\left| A(x) \int_{(t^* -g(t^*), (t^* +g(t^*)] \cap [t, +\infty)}
e^{i((\omega_{\infty} -2\xi_{\infty}) \tau -\frac{|x|^2}{4\tau})}
\frac{t^{3/2}}{|x|^2 +4\alpha \tau^2} d\tau \right| \leq$$
$$\leq C A(x) \int_t^{t^* +g(t^*)} \frac{1}{\sqrt{\tau}} d\tau\leq C A(x)
g(t^*) t^{-1/2} \in L^2(\R^3).$$

\noindent Summing up, thanks to the unitarity of $U_t$, we proved
$$U_t * \int_t^{+\infty} \Pi(\tau) U_{\tau} * Q d\tau = O(t^{-1/2}),$$

\noindent as $t \rightarrow +\infty$, in $L^2(\R^3)$.

\noindent \texttt{Step 4: conclusion of the proof.}

\noindent The conclusions of the previous step hold true if the
phase $\omega_{\infty} t$ is replaced by $\Theta(t)$. In fact, the
estimates which involve the integral of the absolute value are
totally unaffected by change of phase, then it is only left to
adjust the argument involving integration by parts. This can be done
integrating by parts exactly as before, which leaves a factor
$e^{i(\Theta(t) -\omega_{\infty} t)}$ in the integrand. Then, the
boundary terms can be treated in the same way because
$|e^{i(\Theta(t) -\omega_{\infty} t)}| = 1$. Finally, the extra
contribution to the integrand can be estimated as it is done for the
summand arising from differentiation of $t^{7/2}$ since
$|\dot{\Theta}(t) -\omega_{\infty}| \leq \frac{C}{1 +\epsilon
k_{\infty} t}$ for all $t > 0$, where $C$ is a positive constant.
\end{proof}

\noindent Summing up, we have proved the following asymptotic
stability result.

\begin{theo}
Let $\sigma \in \left( \frac{1}{\sqrt{2}}, \sigma^* \right)$, for a certain
$\sigma^* \in
\left( \frac{1}{\sqrt{2}}, \frac{\sqrt{3} +1}{2 \sqrt{2}} \right]$ and $u(t) \in
C(\R^+, V)$ be a solution of equation \eqref{eq1} with
$$u(0) = u_0 = e^{i\omega_0 t +\gamma_0} \Phi_{\omega_0} +e^{i\omega_0 t
+\gamma_0} [(z_0 +\overline{z_0}) \Psi_1 +i (z_0 -\overline{z_0})
\Psi_2] +f_0 \in V \cap L^1_w(\R^3),$$

\noindent for some $\omega_0 > 0$, $\gamma_0$, $z_0 \in \R$ and $f_0
\in L^2(\R^3) \cap L^1_w(\R^3)$. Furthermore, assume that the
initial datum $u_0$ is close to a solitary wave, i.e.
$$|z_0| \leq \epsilon^{1/2} \qquad \textrm{and} \qquad \|f_0\|_{L^1_w} \leq
c \epsilon^{3/2},$$

\noindent where $c$, $\epsilon > 0$.

\noindent Then, if $\epsilon$ is sufficiently small, the solution
$u(t)$ can be asymptotically decomposed as follows
$$u(t) = e^{i\omega_{\infty} t +i b_1 \log (1 +\epsilon k_{\infty} t)}
\Phi_{\omega_{\infty}} +U_t*\phi_{\infty} +r_{\infty}(t), \quad
\textrm{as} \;\; t \rightarrow +\infty,$$

\noindent where $\omega_{\infty}$, $\epsilon k_{\infty} > 0$, $b_1
\in \R$ and $\phi_{\infty}$, $r_{\infty}(t) \in L^2(\R^3)$ with
$$\| r_{\infty}(t) \|_{L^2} = O(t^{-1/4}) \quad \textrm{as} \;\; t \rightarrow
+\infty,$$

\noindent in $L^2(\R^3)$.
\end{theo}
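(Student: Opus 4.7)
The strategy is to assemble the results already established, namely Theorem \ref{teo-infinito}, the explicit asymptotic expansions of Lemma \ref{large-time}, and the two scattering lemmas just proved. First I would write $u(t,x) = s(t,x) + \zeta(t,x) + f(t,x)$ with soliton part $s(t,x) = e^{i\Theta(t)} \Phi_{\omega(t)}$, discrete-mode part $\zeta(t,x) = e^{i\Theta(t)}[(z+\overline{z})\Psi_1 + i(z-\overline{z})\Psi_2]$, and dispersive part $f(t)$. The variational formulation of equation \eqref{eq1} together with the Duhamel formula for the free Schr\"odinger group yields the identity
$$f(t,x) = U_t * f_0 + i\int_0^t U_{t-\tau}(x) q_f(\tau)\,d\tau - i\int_0^t U_{t-\tau} * G(\tau)\,d\tau,$$
already recorded above, where $G$ collects the modulation source terms.

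Into this identity I would plug the two scattering lemmas. The first gives $\int_0^t U_{t-\tau}(\cdot) q_f(\tau)\,d\tau = U_t * \phi_0 + r_0$ with $\phi_0 \in L^2$ and $\|r_0\|_{L^2} = O(t^{-1/4})$, while the second gives $\int_0^t U_{t-\tau} * G(\tau)\,d\tau = U_t * \phi_1 + r_1$ with $\phi_1 \in L^2$ and $\|r_1\|_{L^2} = O(t^{-1/2})$. Setting $\psi_\infty := f_0 + i\phi_0 - i\phi_1$, one obtains $f(t) = U_t * \psi_\infty + O_{L^2}(t^{-1/4})$. For the discrete-mode part, Corollary \ref{limit-existence} and Lemma \ref{large-time} give $|z(t)| = O(t^{-1/2})$, and since $\Psi_1, \Psi_2$ are exponentially decaying (hence in $L^2$), $\|\zeta(t)\|_{L^2} = O(t^{-1/2})$ as well.

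To handle the soliton term, I would invoke the asymptotics $\omega(t) = \omega_\infty + O(t^{-1})$ (plus an oscillating contribution of the same order at frequency $2\xi_\infty$) and $\gamma(t) = \gamma_\infty + b_1 \log(1+\epsilon k_\infty t) + O(t^{-1})$ from Lemma \ref{large-time}. Writing $\Theta(t) = \int_0^t \omega(s)\,ds + \gamma(t)$ and Taylor-expanding $\Phi_{\omega(t)}$ around $\omega_\infty$ in $L^2$, I would verify that
$$e^{i\Theta(t)}\Phi_{\omega(t)} - e^{i\omega_\infty t + i b_1 \log(1+\epsilon k_\infty t)}\Phi_{\omega_\infty} = O_{L^2}(t^{-1/2}),$$
after absorbing the constant phase $\gamma_\infty$ into $\psi_\infty$ and checking that the $O(t^{-1})$ regular and oscillatory phase deviations are integrable in time and hence produce only an asymptotically constant phase shift (absorbable again in $\psi_\infty$) plus an $L^2$ remainder. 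Collecting every error term into $r_\infty$, the slowest decay is the $O(t^{-1/4})$ from $r_0$, yielding the claimed bound. The main obstacle is the careful identification of the soliton phase: one must show that the non-integrable logarithmic piece $b_1 \log(1+\epsilon k_\infty t)$ of $\gamma(t)$ survives as the explicit factor in front of $\Phi_{\omega_\infty}$, while all faster-decaying corrections (including the $2\xi_\infty$-oscillating piece of $\omega(t)$) are either absorbed as constants into $\psi_\infty$ or contribute only to the $L^2$ remainder; this is the delicate point where the explicit form of Lemma \ref{large-time}, derived along the lines of Section 6 of \cite{BS}, is essential.
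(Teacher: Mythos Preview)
Your proposal is correct and follows essentially the same route as the paper, which in fact presents the theorem as a direct corollary of Theorem~\ref{teo-infinito}, Lemma~\ref{large-time}, and the two scattering lemmas, introduced by the phrase ``Summing up, we have proved\ldots'' with no further argument. One small inaccuracy: the constant phase $e^{i\gamma_\infty}$ multiplying the soliton cannot literally be ``absorbed into $\psi_\infty$'', since it sits in front of $\Phi_{\omega_\infty}$ rather than the free evolution; the honest statement is that the asymptotic phase is $\omega_\infty t + \widetilde b_1\log(1+\epsilon k_\infty t) + c_\infty$ for a constant $c_\infty$ (collecting $\gamma_\infty$ and the convergent parts of $\int_0^t(\omega(s)-\omega_\infty)\,ds$), and the paper's formula simply suppresses this constant, as is common in such statements.
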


\begin{rmk}
Numerical evidences (see Lemma \ref{reale}) suggest $\sigma^* =
\frac{\sqrt{3} +1}{2 \sqrt{2}} \simeq 0,96$.
\end{rmk}

\section{Appendices}
In the following appendices we collect auxiliary material on the model
studied. In Appendices A and B we recall results from \cite{ADO}
regarding resolvent, spectrum and dispersive behavior of linearization
operator $L$ (see \eqref{linearizzato}) to help the independent
reading of the present paper. In the subsequent appendices C,D,E we
state and prove further details regarding spectral properties of the
linearized operators, in particular the structure of eigenfunctions
associated to the discrete spectrum, the structure of the generalized
eignevectors, and the proof of Lemma II.8. 
\subsection{Resolvent and spectrum of linearization} 
We denote
\be \label{gom}
G_{\omega \pm i\lambda}(x) = \frac{e^{i \sqrt{-\omega \mp
i\lambda} |x|}}{4\pi |x|} \qquad \omega > 0, \lambda \in \C,
\ee

\noindent with the prescription $\Im{\sqrt{-\omega \pm i\lambda}} >
0$.

\noindent Furthermore, we make use of the notation $\langle g,h \rangle : =
\int_{\R^3} g(x) h(x) \, dx$.

\noindent
The resolvent of linearized operator $L$
is described in the following 
\begin{theo}    \label{ris}
The resolvent $R(\lambda) = (L -\lambda I)^{-1}$ of the operator $L$
is given by
\begin{equation}    \label{eq:risolvente}
R(\lambda) = \left[
  \begin{array}{cc}
    -\lambda \mathcal{G}_{\lambda^2} * & -\Gamma_{\lambda^2} * \\
    \Gamma_{\lambda^2} * & -\lambda \mathcal{G}_{\lambda^2} * \\
  \end{array}
\right] +\frac{4\pi}{W(\lambda^2)} i \left[
  \begin{array}{cc}
    \Lambda_1 & i\Sigma_2 \\
    -i\Sigma_1 & \Lambda_2\\
  \end{array}
\right],
\end{equation}
where
$$W (\lambda^2) = 32\pi2 \alpha_1 \alpha_2 -4i \pi (\alpha_1 +\alpha_2)
\left(\sqrt{-\omega +i\lambda} +\sqrt{-\omega -i \lambda} \right)
-2\sqrt{-\omega +i\lambda} \sqrt{-\omega -i\lambda},$$
and formula \eqref{eq:risolvente} holds
for all $\lambda \in \C \setminus \{ \lambda \in \C: \;
W(\lambda^2) = 0, \;\; \textrm{or} \;\; \Re(\lambda) = 0 \;
\textrm{and} \; |\Im(\lambda)| \geq \omega \}$. Furthermore, the
symbol $*$ in  \eqref{eq:risolvente}
denotes the convolution and
$$
\mathcal{G}_{\lambda^2}(x) = \frac{1}{2i \lambda} \left(
G_{\omega -i\lambda}(x) -G_{\omega +i\lambda}(x) \right), \quad 
\Gamma_{\lambda^2}(x) = \frac{1}{2} \left( G_{\omega
-i\lambda}(x) +G_{\omega +i\lambda}(x) \right).
$$
Finally, the entries of the second matrix are finite rank operators
whose action on $f \in L^2 (\R^3)$ reads
\begin{equation} \label{lambdasigma}
\Lambda_1 f = [i\lambda (4\pi \alpha_2 -i\sqrt{-\omega +i\lambda})
\langle\mathcal{G}_{\lambda^2}, f \rangle -(4\pi \alpha_1 -i\sqrt{-\omega
+i\lambda}) \langle\Gamma_{\lambda^2}, f \rangle]G_{\omega +i\lambda}
+
\end{equation}
$$+[i\lambda (4\pi \alpha_2 -i\sqrt{-\omega -i\lambda})
\langle \mathcal{G}_{\lambda^2}, f \rangle +(4\pi \alpha_1 -i\sqrt{-\omega
-i\lambda}) \langle \Gamma_{\lambda^2}, f \rangle]G_{\omega -i\lambda},$$
$$\Lambda_2 f = [i\lambda (4\pi \alpha_1 -i\sqrt{-\omega +i\lambda})
\langle \mathcal{G}_{\lambda^2}, f \rangle -(4\pi \alpha_2 -i\sqrt{-\omega
+i\lambda}) \langle \Gamma_{\lambda^2}, f \rangle]G_{\omega +i\lambda} +$$
$$+[i\lambda (4\pi \alpha_1 -i\sqrt{-\omega -i\lambda})
\langle \mathcal{G}_{\lambda^2}, f \rangle +(4\pi \alpha_2 -i\sqrt{-\omega
-i\lambda}) \langle \Gamma_{\lambda^2}, f\rangle]G_{\omega -i\lambda},$$
$$\Sigma_1 f = -[i\lambda (4\pi \alpha_2 -i\sqrt{-\omega +i\lambda})
\langle \mathcal{G}_{\lambda^2}, f \rangle -(4\pi \alpha_1 -i\sqrt{-\omega
+i\lambda}) \langle \Gamma_{\lambda^2}, f \rangle]G_{\omega +i\lambda} +$$
$$+[i\lambda (4\pi \alpha_2 -i\sqrt{-\omega -i\lambda})
\langle \mathcal{G}_{\lambda^2}, f \rangle +(4\pi \alpha_1 -i\sqrt{-\omega
-i\lambda}) \langle \Gamma_{\lambda^2}, f \rangle]G_{\omega -i\lambda},$$
$$\Sigma_2 f = -[i\lambda (4\pi \alpha_1 -i\sqrt{-\omega +i\lambda})
\langle \mathcal{G}_{\lambda^2}, f\rangle -(4\pi \alpha_2 -i\sqrt{-\omega
+i\lambda}) \langle \Gamma_{\lambda^2}, f\rangle ]G_{\omega +i\lambda} +$$
$$+[i\lambda (4\pi \alpha_1 -i\sqrt{-\omega -i\lambda})
\langle \mathcal{G}_{\lambda^2}, f \rangle +(4\pi \alpha_2 -i\sqrt{-\omega
-i\lambda}) \langle \Gamma_{\lambda^2}, f \rangle]G_{\omega -i\lambda}.$$
\end{theo}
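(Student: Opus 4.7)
The plan is to construct $R(\lambda)$ by the classical Krein strategy: start from the "free" matrix resolvent obtained by replacing $H_{\alpha_j}$ with the free Laplacian, and then add a finite‑rank correction supported on the two‑dimensional deficiency subspace associated with the point interaction at the origin in each component.

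First I would isolate the free piece. Writing the resolvent equation $(L-\lambda)u=f$ componentwise and using that the entries of $L$ involve only $-\Delta+\omega$ away from $0$, the standard trick of eliminating one variable gives $[(-\Delta+\omega)^{2}+\lambda^{2}]u_{2}=(-\Delta+\omega)f_{1}-\lambda f_{2}$, and analogously for $u_{1}$. Combined with the partial fraction identity
$$\frac{1}{(-\Delta+\omega)^{2}+\lambda^{2}}=\frac{1}{2i\lambda}\bigl[(-\Delta+\omega-i\lambda)^{-1}-(-\Delta+\omega+i\lambda)^{-1}\bigr],$$
and the fact that $G_{\omega\pm i\lambda}$ is the integral kernel of $(-\Delta+\omega\pm i\lambda)^{-1}$, this produces exactly the first matrix on the right‑hand side of \eqref{eq:risolvente}, with the convolution kernels $\mathcal{G}_{\lambda^{2}}$ and $\Gamma_{\lambda^{2}}$.

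Second, I would pass from the free operator to $L$ by a rank‑four ansatz. Since any element of $D(H_{\alpha_{j}})$ has the form $\phi+qG_{0}$ with a prescribed boundary relation between $\phi(0)$ and $q$, I expect the correction to $R_{0}(\lambda)$ to take values in the span of $G_{\omega+i\lambda}$ and $G_{\omega-i\lambda}$ in each of the two components. Thus I make the ansatz
$$R(\lambda)f=R_{0}(\lambda)f+\begin{pmatrix}a_{+}G_{\omega+i\lambda}+a_{-}G_{\omega-i\lambda}\\b_{+}G_{\omega+i\lambda}+b_{-}G_{\omega-i\lambda}\end{pmatrix},$$
with the four scalars $a_{\pm}$, $b_{\pm}$ to be viewed as linear functionals of $f$. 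Because $G_{\omega\pm i\lambda}$ solves $(-\Delta+\omega\pm i\lambda)G_{\omega\pm i\lambda}=0$ away from the origin, this ansatz automatically produces a candidate that solves $(L-\lambda)u=f$ on $\R^{3}\setminus\{0\}$; the only task remaining is to force the correct behaviour at $x=0$.

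Third, I determine $a_{\pm},b_{\pm}$ by imposing, on each component of $u=R(\lambda)f$, the boundary condition characterizing $D(H_{\alpha_{j}})$: the regular part $\phi_{j}$ of $u_{j}$ at $0$ must equal $(\alpha_{j}+\sqrt{\omega}/(4\pi))\,q_{j}$, with $q_{j}$ the charge of $u_{j}$. Expanding the kernels $G_{\omega\pm i\lambda}$, $\mathcal{G}_{\lambda^{2}}*f$ and $\Gamma_{\lambda^{2}}*f$ near $x=0$ and separating singular and regular parts, these four conditions collapse (thanks to the block‑anti‑diagonal structure of $L$) into two coupled $2\times 2$ linear systems for $(a_{+},b_{+})$ and $(a_{-},b_{-})$; bookkeeping the coefficients produces a scalar determinant which is precisely the announced $W(\lambda^{2})$. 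Solving this system and re‑assembling the coefficients yields the four finite‑rank operators $\Lambda_{1}$, $\Lambda_{2}$, $\Sigma_{1}$, $\Sigma_{2}$ with the explicit expressions \eqref{lambdasigma}.

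The main obstacle is purely the computational bookkeeping of the third step: carefully expanding $G_{\omega\pm i\lambda}$ and the convolutions $\mathcal{G}_{\lambda^{2}}*f$, $\Gamma_{\lambda^{2}}*f$ at the origin (isolating the $1/|x|$ singularity and the finite value of the regular part in terms of $\langle \mathcal{G}_{\lambda^{2}},f\rangle$ and $\langle\Gamma_{\lambda^{2}},f\rangle$), then solving the resulting linear system while keeping the dependence on $\alpha_{1},\alpha_{2},\lambda,\omega$ transparent enough to recognize $W(\lambda^{2})$ as the determinant. The excluded set in the statement is then natural: the condition $W(\lambda^{2})=0$ marks the points where the linear system is singular (eigenvalues of $L$), while $\Re\lambda=0$ with $|\Im\lambda|\ge\omega$ is the branch cut of $\sqrt{-\omega\pm i\lambda}$, corresponding to the essential spectrum.
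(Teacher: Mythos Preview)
The paper does not actually prove this theorem here: it is stated in the Appendix as a result recalled from \cite{ADO}, with no proof given. So there is no ``paper's own proof'' to compare against in this document. That said, your outline is the standard Krein--formula derivation one expects for point interactions, and is almost certainly what \cite{ADO} carries out.

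One point in your sketch deserves tightening. You write that the rank--four ansatz ``automatically produces a candidate that solves $(L-\lambda)u=f$ on $\R^{3}\setminus\{0\}$'', and that the boundary conditions then yield two coupled $2\times2$ systems. This is not quite right as stated: an arbitrary combination $a_{+}G_{\omega+i\lambda}+a_{-}G_{\omega-i\lambda}$ in the first slot and $b_{+}G_{\omega+i\lambda}+b_{-}G_{\omega-i\lambda}$ in the second does \emph{not} lie in the kernel of $L-\lambda$ away from the origin. Since $(-\Delta+\omega)G_{\omega\pm i\lambda}=\mp i\lambda\,G_{\omega\pm i\lambda}$ for $x\neq0$, applying $L-\lambda$ to the correction forces the algebraic relations $b_{+}=ia_{+}$ and $b_{-}=-ia_{-}$ (or, equivalently, the correction must be a combination of the two vectors $(G_{\omega+i\lambda},\,iG_{\omega+i\lambda})^{T}$ and $(G_{\omega-i\lambda},\,-iG_{\omega-i\lambda})^{T}$). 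This reduces the four unknowns to two, and then the two boundary conditions (one from $D(H_{\alpha_{1}})$ on the first component, one from $D(H_{\alpha_{2}})$ on the second) give a single $2\times2$ linear system whose determinant is $W(\lambda^{2})$. With that correction your strategy is complete; the rest is indeed bookkeeping.
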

\vskip10pt
The resolvent determines also the spectrum of the operator $L$, as given in the following

\begin{prop} The spectrum of the linearized operator L has the following structure
\begin{itemize}
 \item [(a)] $\sigma_{ess}(L) =\{ \lambda \in \C: \; \Re(\lambda) = 0 \,
\text{and} \, |\Im(\lambda)| \geq \omega \}$
 \item[(b)] If $\sigma \in (0, 1/\sqrt{2})$, the only eigenvalue of $L$ is $0$ with algebraic multiplicity $2$.
 \item[(c)] If  $\sigma =1/\sqrt{2}$,  L has resonances $\pm i\omega$ at the border of the essential spectrum and the eigenvalue $0$ with algebraic multiplicity $2$.
 \item[(d)] If $\sigma \in (1/\sqrt{2}, 1)$, $L$ has two simple eigenvalues $\pm i\xi=
            =\pm i 2\sigma \sqrt{1 -\sigma^2} \omega$ and the eigenvalue $0$ with algebraic multiplicity $2$.
 \item[(e)] If $\sigma = 1$, the only eigenvalue of $L$ is $0$ with algebraic multiplicity $4$.
 \item[(f)] If $\sigma \in (1, +\infty)$, $L$ has two simple eigenvalues $\pm 2\sigma \sqrt{\sigma^2 -1} \omega$
            and the eigenvalue $0$ with algebraic multiplicity $2$.
\end{itemize}

\end{prop}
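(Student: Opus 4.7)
The entire proof is driven by the explicit resolvent formula of Theorem~\ref{ris}, which represents $R(\lambda)=(L-\lambda I)^{-1}$ as a ``free'' convolution part plus a finite-rank correction with scalar denominator $W(\lambda^2)$. Consequently $L-\lambda I$ fails to be boundedly invertible exactly at the union of (i)~the branch set where the scalar functions $\sqrt{-\omega\pm i\lambda}$ lose analyticity, namely $\{\lambda=i\mu:|\mu|\geq\omega\}$, and (ii)~the zeros of $W(\lambda^2)$ lying off that set. Part~(a) arises from~(i), and parts~(b)--(f) from~(ii).

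For part~(a) I would apply Weyl's theorem to $L-L_0$ with $L_0:=-J\,\mathrm{diag}(-\Delta+\omega,-\Delta+\omega)$, using that the perturbation $H_{\alpha_j}-(-\Delta)$ is relatively compact in the resolvent sense (a standard fact for $\delta$-type singular perturbations in dimension three). A direct Fourier computation then gives $\sigma(L_0)=\{i\mu:|\mu|\geq\omega\}$, which is the claim. Equivalently, one reads this locus off Theorem~\ref{ris} directly as the image of the branch cut of the square roots entering the ``free'' convolution kernels $\mathcal{G}_{\lambda^2}$, $\Gamma_{\lambda^2}$.

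For parts~(b)--(f) I substitute the explicit values $\alpha_1=-(2\sigma+1)\sqrt\omega/(4\pi)$ and $\alpha_2=-\sqrt\omega/(4\pi)$ into $W$, obtaining
\[
W(\lambda^2)=2(2\sigma+1)\omega+i(2\sigma+2)\sqrt\omega\bigl(\sqrt{-\omega+i\lambda}+\sqrt{-\omega-i\lambda}\bigr)-2\sqrt{-\omega+i\lambda}\sqrt{-\omega-i\lambda}.
\]
Writing $s:=\sqrt{-\omega+i\lambda}+\sqrt{-\omega-i\lambda}$ and using $s^2=-2\omega+2\sqrt{-\omega+i\lambda}\sqrt{-\omega-i\lambda}$, the equation $W(\lambda^2)=0$ collapses to the quadratic
\[
s^{2}-i(2\sigma+2)\sqrt{\omega}\,s-4\sigma\omega=0,
\]
with roots $s=2i\sqrt\omega$ and $s=2i\sigma\sqrt\omega$. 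The first root always yields $\lambda=0$; the second is admissible off $\sigma_{ess}(L)$ only in the cases stated. Splitting according to the sign of the real part of $\lambda^2$ and using the branch prescription $\Im\sqrt{\cdots}>0$ gives: no extra eigenvalue for $\sigma<1/\sqrt 2$ (the value $2i\sigma\sqrt\omega$ lies outside the admissible range of $s$); threshold resonances at $\pm i\omega$ for $\sigma=1/\sqrt 2$ (the value hits the boundary of the essential spectrum); the purely imaginary pair $\pm i\cdot 2\sigma\sqrt{1-\sigma^2}\omega$ for $\sigma\in(1/\sqrt 2,1)$; coalescence at $\lambda=0$ for $\sigma=1$; and the real pair $\pm 2\sigma\sqrt{\sigma^2-1}\omega$ for $\sigma>1$.

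Algebraic multiplicities are then read off from the order of the pole of the rank-four correction of $R(\lambda)$, which equals the order of the zero of $W(\lambda^2)$ at the point in question. The zero eigenvalue always carries at least algebraic multiplicity two from the gauge/parameter symmetries: $(0,\Phi_\omega)^T\in\ker L$, and $(\partial_\omega\Phi_\omega,0)^T$ completes a length-two Jordan chain (so $L\,(\partial_\omega\Phi_\omega,0)^T$ is proportional to $(0,\Phi_\omega)^T$). Each nonzero eigenvalue listed in (d) and (f) is simple because $W(\lambda^2)$ has a simple zero there, and the rank-one residue in~\eqref{lambdasigma} carves out a one-dimensional eigenspace. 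The principal obstacle is the borderline case $\sigma=1$, where the two roots of the $s$-quadratic coincide so that $W(\lambda^2)$ vanishes to higher order at $\lambda=0$; one must then expand $\Lambda_j,\Sigma_j$ in~\eqref{lambdasigma} at $\lambda=0$ and verify that the pole order of $R(\lambda)$ jumps from two to four, producing the claimed algebraic multiplicity~$4$. The resonance character at $\sigma=1/\sqrt 2$ (part (c)) is checked similarly by showing that the formal eigenfunction at $\lambda=\pm i\omega$ decays only like $|x|^{-1}$ and hence fails to be in $L^2(\R^3)$, while $W$ still vanishes there.
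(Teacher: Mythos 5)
Your proof is correct and follows the route that the paper itself relies on (this proposition is recalled from the companion paper \cite{ADO} and is stated here without proof): the nonzero discrete spectrum is located as the set of zeros of the denominator $W(\lambda^2)$ of the finite-rank part of the explicit resolvent, and your reduction to the quadratic in $s=\sqrt{-\omega+i\lambda}+\sqrt{-\omega-i\lambda}$ with roots $2i\sqrt{\omega}$ and $2i\sigma\sqrt{\omega}$, combined with the branch condition (for $\lambda=i\xi$ in the gap one has $s=i\bigl(\sqrt{\omega+\xi}+\sqrt{\omega-\xi}\bigr)$, whose modulus is at least $\sqrt{2\omega}$, so $s=2i\sigma\sqrt{\omega}$ is attainable only for $\sigma\geq 1/\sqrt{2}$), reproduces the stated dichotomy between (b), (c) and (d) exactly, consistently with the direct eigenfunction computation of Appendix VI C. The only points you leave compressed --- spelling out that admissibility argument for $\sigma<1/\sqrt{2}$, the verification that the pole order of $R(\lambda)$ at $\lambda=0$ jumps to four when the two roots coalesce at $\sigma=1$, and the failure of square-integrability of the would-be eigenfunction $e^{-\sqrt{\omega-\xi}\,|x|}/(4\pi|x|)$ at $\xi=\omega$ --- are routine given the explicit formulas and do not constitute gaps.
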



\subsection{Dispersive estimates}
We recall that the propagator $e^{-t L}$ is the inverse Laplace transform of the resolvent. So the dispersive behaviour associated to the linearized dynamics projected on the continuous spectrum is controlled by the following result, proved in \cite{ADO}, Theorem 4.8.
\begin{theo}    \label{stima-cont}
Let $\sigma\neq \frac{1}{{\sqrt 2}}$. There exists a constant $C > 0$ such that
$$\left| \frac{1}{2\pi i} \int_{\R^3} \int_{\mathcal{C}_+ \cup \mathcal{C}_-}
(R(\lambda +0) -R(\lambda -0)) (x) e^{-\lambda t} f(y) \, d\lambda  dy
\right| \leq C \left( 1 +\frac{1}{|x|} \right) t^{-\frac{3}{2}}
\int_{\R^3} \left( 1 +\frac{1}{|y|} \right) |f(y)| dy$$

\noindent for any $f \in L^1_{w}(\R^3)$, where
$$
\mathcal{C}_+  = \{ \lambda \in \C: \;
\Re(\lambda) =
0 \, \textrm{and} \, \ \Im(\lambda) \geq \omega \},\ \ \   \mathcal{C}_- =
\{ \lambda \in \C: \; \Re(\lambda) = 0 \, \textrm{and} \, \
\Im(\lambda) \leq -\omega \}\ .
$$
\end{theo}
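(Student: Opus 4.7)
The plan is to insert the explicit resolvent formula from Theorem \ref{ris} into the Stone-type integral and reduce it to a sum of oscillatory integrals of a type already well understood in the theory of point interactions. Parametrize the contours by $\lambda=\pm i\mu$, $\mu\geq\omega$, and set $k=\sqrt{\mu-\omega}$ so that $\mu=\omega+k^{2}$ and $d\mu=2k\,dk$. With this substitution the chosen branches of $\sqrt{-\omega\mp i\lambda}$ give one Green function $G_{\omega\mp i\lambda}(z)=e^{ik|z|}/(4\pi|z|)$ of oscillatory type and the other of pure exponential decay $e^{-\sqrt{2\omega+k^{2}}|z|}/(4\pi|z|)$; only the oscillatory one contributes to the jump $R(\lambda+0)-R(\lambda-0)$ across $\mathcal{C}_{\pm}$.

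\textbf{Splitting.} Write $R=R_{\mathrm{free}}+R_{\mathrm{rk}}$, where $R_{\mathrm{free}}$ is the convolution matrix appearing first in \eqref{eq:risolvente} and $R_{\mathrm{rk}}$ is the $1/W(\lambda^{2})$-prefactored rank-finite correction built from $\Lambda_{j},\Sigma_{j}$. For $R_{\mathrm{free}}$ the kernels $\mathcal{G}_{\lambda^{2}},\Gamma_{\lambda^{2}}$ are by construction the spectral density of $-\Delta+\omega$; after taking jumps the corresponding piece reduces, up to algebraic factors, to the free three–dimensional Schr\"odinger propagator, and the classical bound $\|e^{it\Delta}\|_{L^{1}\to L^{\infty}}\leq C t^{-3/2}$ gives a contribution bounded by $Ct^{-3/2}\|f\|_{L^{1}}$, which is stronger than needed (no weight).

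\textbf{Rank-finite piece.} Each entry of $R_{\mathrm{rk}}$ is a finite sum of rank-one terms of the schematic form $W(\lambda^{2})^{-1}a_{\pm}(\lambda)\,G_{\omega\pm i\lambda}(x)\,\langle G_{\omega\pm i\lambda},f\rangle$, with $a_{\pm}$ polynomial in $\lambda$ and $\sqrt{-\omega\pm i\lambda}$. After taking jumps and inserting the parametrization, the corresponding integral kernel is
$$
K_{\mathrm{rk}}(t,x,y)=\frac{1}{|x|\,|y|}\int_{0}^{+\infty}e^{\mp i(\omega+k^{2})t}\,e^{\pm ik(|x|+|y|)}\,\Phi(k)\,dk,
$$
where $\Phi(k)=k\,a(\pm i(\omega+k^{2}))/W(-(\omega+k^{2})^{2})$ is smooth on $[0,+\infty)$ with at most polynomial growth. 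The prefactors $1/|x|,1/|y|$ come from the Green-function singularities and are absorbed into the weights $1+1/|x|$ and $1+1/|y|$ of the statement. Completing the square in the quadratic phase (equivalently, reinterpreting the $k$-integral as a free Schr\"odinger propagator evaluated at the shifted point $|x|+|y|$) produces the Gaussian factor $(t)^{-1/2}e^{\pm i(|x|+|y|)^{2}/(4t)}$, and a standard integration by parts against the boundary at $k=0$ controls the remaining contribution, giving the required $t^{-3/2}$ rate after restoring the $1/(|x||y|)$ prefactor.

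\textbf{Main obstacle.} The critical technical point is the uniform boundedness of $1/W(\lambda^{2})$ along $\mathcal{C}_{\pm}$, and in particular at the thresholds $\lambda=\pm i\omega$, i.e. at $k=0$. Indeed if $W(-\omega^{2})=0$ the amplitude $\Phi(k)$ picks up a non-integrable $k^{-1}$ singularity near $k=0$ that destroys the $t^{-3/2}$ decay. A direct inspection of
$W(\lambda^{2})=32\pi^{2}\alpha_{1}\alpha_{2}-4i\pi(\alpha_{1}+\alpha_{2})(\sqrt{-\omega+i\lambda}+\sqrt{-\omega-i\lambda})-2\sqrt{-\omega+i\lambda}\sqrt{-\omega-i\lambda}$
at $\lambda=\pm i\omega$ identifies the vanishing condition with the threshold resonance $\sigma=1/\sqrt{2}$ already noted in part (c) of the spectral proposition. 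The hypothesis $\sigma\neq 1/\sqrt{2}$ precisely rules this out: $W$ stays bounded away from zero on $\mathcal{C}_{\pm}$, $\Phi(k)\in C^{1}([0,+\infty))$ with polynomial decay at infinity, and the oscillatory integral estimate closes uniformly in $x,y$. Summing the $O(1)$ number of contributing terms produces the constant $C$ in the statement, depending on $\omega,\sigma$ and $\inf_{\mu\geq\omega}|W(-\mu^{2})|$ but not on $x,y,t,f$.
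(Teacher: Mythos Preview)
The paper does not prove this theorem here: it is stated in Appendix~B as a result ``proved in \cite{ADO}, Theorem~4.8'' and is merely recalled for the reader's convenience. So there is no proof in the present paper to compare against.

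That said, your outline is the natural strategy and is essentially the one carried out in \cite{ADO}: split the resolvent into the convolution part and the finite-rank correction, observe that the jump of the convolution part across $\mathcal{C}_{\pm}$ reproduces the free spectral measure and hence the free $t^{-3/2}$ decay, and reduce the finite-rank piece to one-dimensional oscillatory integrals in $k=\sqrt{\mu-\omega}$ carrying the weights $1/|x|$ and $1/|y|$. Your identification of the role of the hypothesis $\sigma\neq 1/\sqrt{2}$ is exactly right: it is what keeps $W$ bounded away from zero at the thresholds $\lambda=\pm i\omega$, i.e.\ at $k=0$, and without it the amplitude $\Phi(k)$ would be singular there and the $t^{-3/2}$ rate would fail.

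Two points in your sketch would need to be tightened in a full proof. First, the passage from ``completing the square gives $t^{-1/2}$'' to ``integration by parts at $k=0$ yields $t^{-3/2}$'' is the heart of the matter and deserves more than one sentence: one has to exploit that the amplitude carries an explicit factor of $k$ (from the Jacobian $d\mu=2k\,dk$) so that $\Phi(0)=0$, which is what buys the extra $t^{-1}$. Second, your claim that $\Phi$ has ``polynomial decay at infinity'' is not quite accurate for the purely oscillatory terms (those with $e^{ik|x|}e^{ik|y|}$): a count of powers shows the amplitude is merely bounded, not decaying, as $k\to\infty$. This is still enough for the stationary-phase/van der Corput machinery, but it means the large-$k$ region has to be handled by a separate argument (typically one more integration by parts against $e^{-itk^{2}}$) rather than by integrability of $\Phi$. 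Neither point is a genuine obstruction; both are standard refinements that the proof in \cite{ADO} supplies.
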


\subsection{Eigenfunctions associated to $\pm i\xi$ and generalized
eigenfunctions}
\subsubsection{The eigenfunctions associated to $\pm i\xi$  \label{autofunz}}
Here we describe the eigenspaces associated to the simple
purely imaginary eigenvalues $\pm i\xi = \pm i 2\sigma \sqrt{1
-\sigma^2} \omega$.

\noindent Let us start with the eigenvalue $i\xi$. The following
proposition holds true.
\begin{prop}
The eigenspace associated to $i\xi$ is spanned by
$$\Psi(x) = \left( \begin{array}{cc}
                   \Psi_1(x)\\
                   \Psi_2(x)
                  \end{array}
\right) = \frac{e^{-\sqrt{\omega -\xi} |x|}}{4\pi |x|} \left(
\begin{array}{cc}
       1\\
       i
       \end{array}
\right) -\frac{\sqrt{1 -\sigma^2} -1}{\sigma} \,
\frac{e^{-\sqrt{\omega +\xi} |x|}}{4\pi |x|} \left(
\begin{array}{cc}
       1\\
       -i
       \end{array}
\right).$$

\end{prop}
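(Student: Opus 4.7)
The plan is to solve the eigenvalue equation $L\Psi=i\xi\Psi$ directly in $D(L)$. Componentwise it couples $\Psi_1$ and $\Psi_2$ through $L_1=H_{\alpha_1}+\omega$ and $L_2=H_{\alpha_2}+\omega$; on $\R^3\setminus\{0\}$ each $L_j$ reduces to $-\Delta+\omega$ applied to the regular part, while at the origin the linear boundary condition $\phi_{\Psi_j}(0)=\alpha_j q_{\Psi_j}$ built into $D(H_{\alpha_j})$ must be imposed.

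First I would decouple the system by introducing $u_\pm=\Psi_1\mp i\Psi_2$. A short calculation gives $(-\Delta+\omega\mp\xi)u_\pm=0$ on $\R^3\setminus\{0\}$. Since $\xi=2\sigma\sqrt{1-\sigma^2}\,\omega<\omega$ throughout $(1/\sqrt{2},1)$, both $\omega\pm\xi$ are strictly positive, so the only $L^2$ radial solutions are $u_\pm\propto G_{\omega\mp\xi}$. This pins down the ansatz
\[
\Psi=a\,G_{\omega-\xi}\begin{pmatrix}1\\ i\end{pmatrix}+b\,G_{\omega+\xi}\begin{pmatrix}1\\ -i\end{pmatrix},
\]
and it remains to fix the ratio $(a:b)$ from the boundary data.

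Next I would expand $G_{\omega\pm\xi}(x)=G_0(x)-\sqrt{\omega\pm\xi}/(4\pi)+O(|x|)$ near the origin, read off the charges $q_{\Psi_1}=a+b$, $q_{\Psi_2}=i(a-b)$ together with the values of the regular parts at zero, and substitute into the two boundary conditions $\phi_{\Psi_j}(0)=\alpha_j q_{\Psi_j}$. The key simplification is the identity $\sqrt{\omega\pm\xi}=\sqrt{\omega}\bigl(\sigma\pm\sqrt{1-\sigma^2}\bigr)$, which follows from $(\sigma\pm\sqrt{1-\sigma^2})^2=1\pm 2\sigma\sqrt{1-\sigma^2}$ together with $\sigma>\sqrt{1-\sigma^2}$ in the range considered. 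Each boundary condition then becomes a single homogeneous linear relation between $a$ and $b$.

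The main remaining work is algebraic: verify that both linear relations produce the same ratio $b/a$ and that after rationalisation this ratio equals $-(\sqrt{1-\sigma^2}-1)/\sigma$. Both reductions hinge on $(\sigma+\sqrt{1-\sigma^2})^2-1=2\sigma\sqrt{1-\sigma^2}$, which is precisely the identity responsible for $\pm i\xi$ being eigenvalues in the first place; this is why compatibility of the two boundary conditions is automatic for the chosen value of $\xi$. Normalising $a=1$ then yields the displayed $\Psi$. Uniqueness of the span is not a separate issue, since Proposition VI.2(d) already records that $\pm i\xi$ are simple eigenvalues of $L$. The main obstacle is purely bookkeeping: the choice of $J$, the structure of $L$ in \eqref{linearizzato}, and the branch of $\sqrt{-\omega\mp i\lambda}$ in \eqref{gom} interact delicately, and a misplaced sign swaps the roles of $G_{\omega-\xi}$ and $G_{\omega+\xi}$ or flips $i\xi\mapsto -i\xi$.
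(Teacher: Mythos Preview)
Your proposal is correct and follows essentially the same route as the paper: solve $L\Psi=i\xi\Psi$ on $\R^3\setminus\{0\}$, identify the two-parameter family of admissible $L^2$ solutions built from $G_{\omega\pm\xi}$, and then fix the ratio of coefficients through the boundary conditions $\phi_{\Psi_j}(0)=\alpha_j q_{\Psi_j}$. Your diagonalisation via $u_\pm=\Psi_1\mp i\Psi_2$ is a cosmetic variant of the paper's elimination to the factored fourth-order equation $(-\Delta+\omega-\xi)(-\Delta+\omega+\xi)\Psi_1=0$, and your explicit use of $\sqrt{\omega\pm\xi}=\sqrt{\omega}(\sigma\pm\sqrt{1-\sigma^2})$ is the identity implicitly driving the paper's verification that the two boundary equations are linearly dependent.
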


\begin{proof}
In order to prove the proposition we need to solve the equation
$$L \Psi = i\xi \Psi$$

\noindent in $D(L)$. For $x \neq 0$, the previous equation is
equivalent to the system
$$\left\{ \begin{array}{ll}
     (-\triangle +\omega)^2 \Psi_1 -\xi^2 \Psi_1 = 0\\
     \Psi_2 = \frac{i}{\xi} (-\triangle +\omega) \Psi_1
         \end{array}
\right.,$$

\noindent from which follows that $\Psi_1$ must belong to
$L^2(\R^3)$ and solve the equation
$$(-\triangle +\omega -\xi) (-\triangle +\omega +\xi) \Psi_1 = 0.$$

\noindent Hence, the solutions in $L^2(\R^3)$ are of the form
$$\left\{ \begin{array}{ll}
     \Psi_1(x) = A \frac{e^{-\sqrt{\omega -\xi} |x|}}{4\pi |x|} +B
\frac{e^{-\sqrt{\omega +\xi} |x|}}{4\pi |x|}\\
     \Psi_2(x) = iA \frac{e^{-\sqrt{\omega -\xi} |x|}}{4\pi |x|} -iB
\frac{e^{-\sqrt{\omega +\xi} |x|}}{4\pi |x|}
         \end{array}
\right.,$$

\noindent for any $A$, $B \in \C$.

\noindent It is left to look for $A$, $B \in \C$ such that $\Psi_i
\in D(L_i)$ for $i = 1$, $2$, i.e.
$$\left\{ \begin{array}{ll}
     -\frac{\sqrt{\omega -\xi}}{4\pi} A -\frac{\sqrt{\omega +\xi}}{4\pi} B
= -(2\sigma +1) \frac{\sqrt{\omega}}{4\pi} (A +B)\\
     -i\frac{\sqrt{\omega -\xi}}{4\pi} A +i\frac{\sqrt{\omega +\xi}}{4\pi} B
= -\frac{\sqrt{\omega}}{4\pi} (iA -iB)
         \end{array}
\right..$$

\noindent Exploiting the fact that $\xi = 2\sigma \sqrt{1 -\sigma^2}
\omega$ one can show that the two equations of the previous system
are linearly dependent and
$$B = -\frac{\sqrt{1 -\sigma^2} +1}{\sigma} A.$$

\noindent The thesis follows by setting $A = 1$.
\end{proof}

\noindent Let us note that in the previous proof we have chosen the
constant in such a way that $\Psi_1(x) \in \R$ and $\Psi_2(x) \in
i\R$ for any $x \in \R^3 \setminus \{0\}$. This fact will be used to
prove the next proposition.
\begin{prop}
The eigenspace associated to $-i\xi$ is spanned by
$$\Psi^* = \left( \begin{array}{cc}
                   \Psi_1\\
                   -\Psi_2
                  \end{array}
\right).$$
\end{prop}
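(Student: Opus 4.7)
The plan is to obtain $\Psi^*$ from the previously constructed eigenfunction $\Psi$ of $i\xi$ by a complex-conjugation symmetry argument, rather than repeating the explicit ODE calculation. The key observation is that the scalar operators $L_j = H_{\alpha_j}+\omega$ have real coupling constants $\alpha_j\in\R$ and real Laplacian coefficients, so $L_j$ commutes with pointwise complex conjugation of scalar functions on $\R^3\setminus\{0\}$ (including the boundary condition at the origin). Consequently, the $2\times 2$ operator $L = \begin{pmatrix} 0 & L_2\\ -L_1 & 0\end{pmatrix}$ satisfies $L\overline{u}=\overline{Lu}$ for every $u\in D(L)$, and $\overline{u}\in D(L)$ whenever $u\in D(L)$.

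First, I would apply this symmetry to the eigenvalue equation $L\Psi = i\xi\Psi$ proved in the preceding proposition: taking complex conjugates on both sides gives
\begin{equation*}
L\,\overline{\Psi} \ = \ \overline{L\Psi} \ = \ \overline{i\xi\,\Psi} \ = \ -i\xi\,\overline{\Psi}.
\end{equation*}
Hence $\overline{\Psi}$ lies in $D(L)$ and is an eigenvector with eigenvalue $-i\xi$. Next I would identify $\overline{\Psi}$ with $\Psi^*$ using the explicit form from the preceding proposition: the two summands forming $\Psi$ were chosen so that $\Psi_1(x)\in\R$ and $\Psi_2(x)\in i\R$ for every $x\in\R^3\setminus\{0\}$, therefore $\overline{\Psi_1}=\Psi_1$ and $\overline{\Psi_2}=-\Psi_2$, i.e.\ $\overline{\Psi}=(\Psi_1,-\Psi_2)^T=\Psi^*$, giving the claimed expression.

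Finally, I would argue that $\Psi^*$ spans the whole eigenspace. The conjugation map $u\mapsto\overline{u}$ is an antilinear involution on $D(L)$ that sends the eigenspace $\ker(L-i\xi)$ bijectively onto $\ker(L+i\xi)$, so the two eigenspaces have the same complex dimension. Since $\pm i\xi$ are simple eigenvalues of $L$ (as recorded in the spectral description in Appendix A, item (d)), the one-dimensional span of $\Psi^*$ exhausts the eigenspace of $-i\xi$. No step here is delicate; the only thing to be careful about is checking that the domain condition for $D(L)$ is preserved under conjugation, which reduces to the fact that the boundary conditions defining $D(H_{\alpha_j})$ involve only the real strengths $\alpha_j$.
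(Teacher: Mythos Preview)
Your argument is correct and is essentially the paper's own proof: both exploit that the scalar operators $L_j$ commute with complex conjugation (the paper phrases this as ``$L_i$ act on the real and imaginary parts separately''), apply this to the eigenvalue equation, and use that $\Psi_1$ is real and $\Psi_2$ is purely imaginary to identify $\overline{\Psi}$ with $(\Psi_1,-\Psi_2)^T$. Your extra remark that conjugation gives a bijection between the two eigenspaces, together with the simplicity of $\pm i\xi$ from Appendix~A, is a welcome addition that the paper leaves implicit.
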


\begin{proof}
In the previous proposition we proved that
$$\left\{ \begin{array}{ll}
     L_2 \Psi_2 = i\xi \Psi_1\\
     -L_1 \Psi_1 = i\xi \Psi_2
         \end{array}
\right.,$$

\noindent with $\Psi_1$ real and $\Psi_2$ purely imaginary.

\noindent Taking the conjugate of both equations and recalling that
the operators $L_i$, $i = 1$, $2$ act on the real and imaginary
parts separately, one has
$$\left\{ \begin{array}{ll}
     L_2 (-\Psi_2) = -i\xi \Psi_1\\
     -L_1 \Psi_1 = -i\xi (-\Psi_2)
         \end{array}
\right.,$$

\noindent which is equivalent to
$$L \Psi^* = -i\xi \Psi^*,$$

\noindent because the operators $L_i$, $i = 1$, $2$ are linear. The
proof is complete.
\end{proof}

\subsubsection{The generalized eigenfunctions}    \label{gen-autofunz}
Our goal is to compute the generalized eigenfunctions associated to
the continuous spectrum. In order to do that, we treat the two
branches $\mathcal{C}_+$ and $\mathcal{C}_-$ of the continuous
spectrum separately.

\begin{prop}
The generalized eigenfunctions associated to $\mathcal{C}_+$ are
$$\Psi_+ (x) = A \frac{e^{-\sqrt{\omega +\eta} |x|}}{4\pi |x|} \left(
\begin{array}{cc}
       1\\
       -i
       \end{array}
\right) +C \frac{e^{-i \sqrt{\eta -\omega} |x|}}{4\pi |x|} \left(
\begin{array}{cc}
       1\\
       i
       \end{array}
\right) +D \frac{e^{i \sqrt{\eta -\omega} |x|}}{4\pi |x|} \left(
\begin{array}{cc}
       1\\
       i
       \end{array}
\right),$$

\noindent  for any $\eta \in [\omega, +\infty)$ and $D \in \C$, with
$$\begin{array}{ll}
A = \frac{\sigma \sqrt{\omega}}{\sqrt{\omega +\eta} -(\sigma +1)
\sqrt{\omega}} (C +D),\\
C = \frac{(2\sigma +1) \omega +(\sigma +1) \sqrt{\omega}
(i\sqrt{\eta -\omega} -\sqrt{\eta +\omega}) -i\sqrt{\eta^2
-\omega^2}}{-(2\sigma +1) \omega +(\sigma +1) \sqrt{\omega}
(i\sqrt{\eta -\omega} +\sqrt{\eta +\omega}) -i\sqrt{\eta^2
-\omega^2}} D.\\
\end{array}$$

\end{prop}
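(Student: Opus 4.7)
The plan is to proceed in close analogy with the previous proposition on the point eigenfunctions, now for spectral values $\lambda = i\eta$ with $\eta \in [\omega, +\infty)$. For such $\lambda$, the equation $L\Psi_+ = i\eta \Psi_+$ decouples, after eliminating $\Psi_2$, into
$$(-\Delta + \omega - \eta)(-\Delta + \omega + \eta)\Psi_1 = 0, \qquad \Psi_2 = \frac{i}{\eta}(-\Delta + \omega)\Psi_1.$$
Since $\eta \geq \omega$, the factor $-\Delta + \omega + \eta$ is elliptic and contributes a radial decaying mode $e^{-\sqrt{\omega+\eta}|x|}/(4\pi|x|)$, whereas $-\Delta + \omega - \eta = -\Delta - (\eta - \omega)$ is a Helmholtz operator whose radial fundamental solutions are the free outgoing/incoming waves $e^{\pm i\sqrt{\eta-\omega}|x|}/(4\pi|x|)$. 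The most general bounded radial $\Psi_1$ is therefore a linear combination of these three modes with three free coefficients.

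First I would write down this three-parameter ansatz for $\Psi_1$ and compute $\Psi_2$ mode by mode by applying $(-\Delta+\omega)$: this operator acts as multiplication by $\omega+\eta$ on the exponentially decaying mode and by $\omega-\eta = -(\eta-\omega)$ on each oscillating mode. Division by $i\eta$ then reproduces exactly the column structure displayed in the statement: the decaying mode is dressed by the vector $(1,-i)^T$ and each oscillating mode by $(1,i)^T$. Renaming the coefficient of the decaying mode as $A$ and those of the two oscillating modes as $C$ and $D$ matches the announced form.

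Next I would impose the two boundary conditions inherited from the domain of $L$: for $j = 1, 2$, the regular part at $x=0$ of the $j$-th component of $\Psi_+$ must equal $\alpha_j$ times its charge, with $\alpha_1 = -(2\sigma+1)\sqrt{\omega}/(4\pi)$ and $\alpha_2 = -\sqrt{\omega}/(4\pi)$. Using $e^{-\mu|x|}/(4\pi|x|) = 1/(4\pi|x|) - \mu/(4\pi) + O(|x|)$ as $x \to 0$ (applied in turn with $\mu = \sqrt{\omega+\eta}$, $\mu = \pm i\sqrt{\eta-\omega}$) one reads off the charge and the constant regular term from each mode and obtains a $2 \times 3$ complex linear system in $A$, $C$, $D$. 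Having three unknowns against two constraints produces the expected one-parameter family of generalized eigenfunctions.

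The main obstacle is purely algebraic: one must solve this $2 \times 3$ system and massage the solution into the symmetric form stated. Concretely, adding the two boundary equations eliminates the oscillatory $\sqrt{\eta-\omega}$ contributions from the $C$ and $D$ modes and leaves a relation of the form $A\,[\sqrt{\omega+\eta}-(\sigma+1)\sqrt{\omega}] = \sigma\sqrt{\omega}\,(C+D)$, which gives the first displayed formula; substituting back into either boundary equation and collecting the terms proportional to $C$ and $D$ yields the ratio $C/D$ announced in the proposition, once one uses $\sqrt{\eta^2-\omega^2} = \sqrt{\eta-\omega}\sqrt{\eta+\omega}$. No convergence issue arises because generalized eigenfunctions are required only to be bounded, not $L^2$; accordingly, no further selection rule reduces the parameter count, and the statement is exactly sharp.
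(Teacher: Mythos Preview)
Your approach is essentially identical to the paper's: reduce $L\Psi_+ = i\eta\Psi_+$ to the factored fourth-order ODE for the first component, take the three bounded radial modes, recover the column vectors from $\Psi_2 = \tfrac{i}{\eta}(-\Delta+\omega)\Psi_1$, and then impose the two point-interaction boundary conditions to obtain the one-parameter family. One small slip: on the decaying mode $(-\Delta+\omega)$ acts as multiplication by $-\eta$ (not $\omega+\eta$), and on the oscillating modes by $+\eta$ (not $\omega-\eta$); your stated column structure $(1,-i)^T$ and $(1,i)^T$ is nonetheless correct, so the argument goes through unchanged.
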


\begin{proof}
For any $\eta \in [\omega, +\infty)$, we need to solve the system
$$L\Psi_+ = i\eta \Psi_+,$$

\noindent where $\Psi_+ \in L^{\infty}(\R^3)$ does not necessary
belongs to $L^2(\R^3)$. As in the computation for the eigenfunction
at $\pm i \xi$, if $x \neq 0$ the former equation is equivalent to
the system
$$\left\{ \begin{array}{ll}
     (-\triangle +\omega -\xi) (-\triangle +\omega +\xi) (\Psi_{+})_1 = 0\\
     (\Psi_{+})_2 = \frac{i}{\xi} (-\triangle +\omega) (\Psi_{+})_1
         \end{array}
\right.,$$

\noindent which leads to
$$\Psi_+ (x) = A \frac{e^{-\sqrt{\omega +\eta} |x|}}{4\pi |x|} \left(
\begin{array}{cc}
       1\\
       -i
       \end{array}
\right) +B \frac{e^{\sqrt{\omega +\eta} |x|}}{4\pi |x|} \left(
\begin{array}{cc}
       1\\
       i
       \end{array}
\right) +C \frac{e^{-i \sqrt{\eta -\omega} |x|}}{4\pi |x|} \left(
\begin{array}{cc}
       1\\
       i
       \end{array}
\right) +D \frac{e^{i \sqrt{\eta -\omega} |x|}}{4\pi |x|} \left(
\begin{array}{cc}
       1\\
       i
       \end{array}
\right),$$

\noindent for some $A$, $B$, $C$, $D \in \C$. Since we require
$\Psi_+ \in L^{\infty}(\R^3)$, we get $B = 0$. Moreover, the
boundary conditions in the domain of the operators $L_1$ and $L_2$
must be satisfied by $(\Psi_{+})_1$ and $(\Psi_{+})_2$ respectively.
Then $A$, $C$, and $D$ solve the system
$$\left\{ \begin{array}{ll}
       -\frac{\sqrt{\omega +\eta}}{4\pi} A -i\frac{\sqrt{\eta +\omega}}{4\pi} C +i\frac{\sqrt{\eta +\omega}}{4\pi} D
       = -\frac{(2\sigma +1) \sqrt{\omega}}{4\pi} (A +C +D)\\
       i\frac{\sqrt{\omega +\eta}}{4\pi} A +\frac{\sqrt{\eta +\omega}}{4\pi} C -\frac{\sqrt{\eta +\omega}}{4\pi} D
       = -\frac{\sqrt{\omega}}{4\pi} (-i A +i C +i D)
\end{array} \right.,$$

\noindent which concludes the proof.
\end{proof}

\noindent In the same way, one can prove the analogous result about
$\mathcal{C}_-$.
\begin{prop}
The generalized eigenfunctions associated to $\mathcal{C}_-$ are
$$\Psi_- (x) = A \frac{e^{-\sqrt{\omega -\eta} |x|}}{4\pi |x|} \left(
\begin{array}{cc}
       1\\
       i
       \end{array}
\right) +C \frac{e^{-i \sqrt{-(\eta +\omega)} |x|}}{4\pi |x|} \left(
\begin{array}{cc}
       1\\
       -i
       \end{array}
\right) +D \frac{e^{i \sqrt{-(\eta +\omega)} |x|}}{4\pi |x|} \left(
\begin{array}{cc}
       1\\
       -i
       \end{array}
\right),$$

\noindent  for any $\eta \in (-\infty, -\omega]$, where $D \in \C$
and
$$\begin{array}{ll}
A = \frac{\sigma \sqrt{\omega}}{\sqrt{\omega -\eta} -(\sigma +1)
\sqrt{\omega}} (C +D),\\
C = \frac{(2\sigma +1) \omega +(\sigma +1) \sqrt{\omega}
(i\sqrt{-(\eta +\omega)} -\sqrt{\omega -\eta}) -i\sqrt{\eta^2
-\omega^2}}{-(2\sigma +1) \omega +(\sigma +1) \sqrt{\omega}
(i\sqrt{-(\eta +\omega)} +\sqrt{\eta +\omega}) -i\sqrt{\eta^2
-\omega^2}} D.\\
\end{array}$$

\end{prop}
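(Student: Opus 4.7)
The plan is to mirror the argument given for $\mathcal{C}_+$, adapted to the branch $\eta\in(-\infty,-\omega]$ where the roles of the two square roots are swapped: now $\omega-\eta>0$ gives a genuinely decaying exponential while $-(\eta+\omega)\geq0$ gives oscillating plane-wave type contributions. I would seek bounded (not necessarily $L^2$) generalized eigenfunctions $\Psi_-$ solving $L\Psi_-=i\eta\Psi_-$ on $D(L)$ in the distributional sense, i.e.\ solutions of the ODE system on $\R^3\setminus\{0\}$ that additionally satisfy the boundary conditions encoded in the domains of $L_1=H_{\alpha_1}+\omega$ and $L_2=H_{\alpha_2}+\omega$ at $x=0$.

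First, away from the origin, the eigenvalue equation is equivalent (exactly as in the $\mathcal{C}_+$ case) to
$$(-\Delta+\omega-\eta)(-\Delta+\omega+\eta)(\Psi_-)_1=0,\qquad (\Psi_-)_2=\tfrac{i}{\eta}(-\Delta+\omega)(\Psi_-)_1,$$
so the general radial $L^\infty$ solution is a linear combination of $\tfrac{e^{-\sqrt{\omega-\eta}|x|}}{4\pi|x|}$, $\tfrac{e^{\sqrt{\omega-\eta}|x|}}{4\pi|x|}$, $\tfrac{e^{\pm i\sqrt{-(\eta+\omega)}|x|}}{4\pi|x|}$ in the first component, with the second component determined by the action of $-\Delta+\omega$ on each mode, which toggles the sign of the $\pm i$ in the column vector depending on whether the mode corresponds to the factor $(-\Delta+\omega-\eta)$ or $(-\Delta+\omega+\eta)$. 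Boundedness rules out the growing exponential $e^{+\sqrt{\omega-\eta}|x|}$, leaving three free coefficients $A,C,D$ in the displayed form.

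Next I would impose that each component lies in the appropriate singular-perturbation domain. Writing each coefficient $a/(4\pi)$ times $e^{-k|x|}/|x|$ as $\phi+qG_0$ and reading off its regular part at the origin, the condition $\phi_j(0)=\alpha_j q_j$ for $j=1,2$ (with $\alpha_1=-(2\sigma+1)\sqrt{\omega}/(4\pi)$, $\alpha_2=-\sqrt{\omega}/(4\pi)$) becomes the linear system
$$\begin{cases}-\sqrt{\omega-\eta}\,A-i\sqrt{-(\eta+\omega)}\,C+i\sqrt{-(\eta+\omega)}\,D=-(2\sigma+1)\sqrt{\omega}\,(A+C+D),\\[2pt]
i\sqrt{\omega-\eta}\,A-\sqrt{-(\eta+\omega)}\,C+\sqrt{-(\eta+\omega)}\,D=-\sqrt{\omega}\,(A-C-D),\end{cases}$$
using that the first component comes from $L_1$ (charge $A+C+D$) and the second from $L_2$ (charge $i(A-C-D)$). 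This is a $2\times3$ system, so its solution space is one-dimensional; taking $D\in\C$ as the free parameter, I would solve the first equation for $A$ in terms of $C+D$, yielding $A=\tfrac{\sigma\sqrt{\omega}}{\sqrt{\omega-\eta}-(\sigma+1)\sqrt{\omega}}(C+D)$, and then substitute into the second equation to get the stated formula for $C/D$ after clearing denominators and using $\sqrt{(\omega-\eta)(-(\eta+\omega))}=\sqrt{\eta^2-\omega^2}$.

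The step expected to be the most delicate is not conceptual but algebraic: collecting the terms after substitution and verifying that the coefficient of $D$ matches the displayed rational expression for $C$ exactly (including the correct sign conventions for $\sqrt{-(\eta+\omega)}$, which on $\mathcal{C}_-$ must be chosen consistently with the prescription $\Im\sqrt{-\omega\mp i\lambda}>0$ used to define the resolvent in Theorem~\ref{ris}). Once this is verified, the proposition follows, and as a sanity check one can confirm that taking $\eta\to-\omega^-$ the two oscillating modes coalesce and the formula degenerates into the threshold resonance mode, while analytic continuation across the spectral cut reproduces the resolvent jump used in Theorem~\ref{stima-cont}.
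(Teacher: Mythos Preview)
Your approach is exactly what the paper does: it states only that the $\mathcal{C}_-$ case is proved ``in the same way'' as $\mathcal{C}_+$, namely by reducing $L\Psi_-=i\eta\Psi_-$ on $\R^3\setminus\{0\}$ to the factored equation, keeping the bounded radial modes, and imposing the two boundary conditions at the origin to get a $2\times3$ linear system with $D$ free. One small slip: in your second boundary equation the left-hand side should start with $-i\sqrt{\omega-\eta}\,A$ (not $+i$) and the right-hand side should carry the factor $-i\sqrt{\omega}$ coming from the charge $q_2=i(A-C-D)$; with this corrected the elimination yields the stated formulas for $A$ and $C$.
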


It is easy to see that the projection operators  from $L^2(\R^3)$ onto $X^0$,
$X^1$ and $X^c$ are given by
$$\begin{array}{ll}
\displaystyle P^0 f = -\frac{2}{\Delta} \Omega \left( f,
\frac{d\Phi_\omega}{d\omega} \right) J\Phi_{\omega}
+\frac{2}{\Delta} \Omega \left( f, J\Phi_{\omega} \right)
\frac{d\Phi_{\omega}}{d\omega}, & \Delta = \frac{d}{d\omega} \|
\Phi_{\omega}
\|_{L^2},\\
\displaystyle P^1 f = \frac{\Omega(f,\Psi)}{\kappa} \Psi
+\frac{\Omega(f,\Psi^*)}{\kappa} \Psi^*, & \kappa = \Omega(\Psi, \Psi^*),\\
\displaystyle P^c f = f -P^0 f -P^1 f,
\end{array}$$

Moreover, we denote with $\Pi^{\pm}$ the
projections onto the branches $\mathcal{C}_{\pm}$ of the continuous
spectrum separately.

\subsection{Proof of Lemma \ref{lemma-LM}} \label{dim-lemma}
In this appendix we prove Lemma \ref{lemma-LM} whose statement is
recalled for the reader's convenience.
\begin{lemma}
There exists a constant $C > 0$ such that for each $h \in X^c$ holds
$$\left\| [P^c J -i(\Pi^+ -\Pi^-)]h  \right\|_{L^1_w} \leq C \| h
\|_{L^{\infty}_{w^{-1}}}.$$

\end{lemma}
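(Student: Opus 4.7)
My approach is to exploit the explicit spectral representation of elements of $X^c$ via the generalized eigenfunctions $\Psi_\pm(\eta,\cdot)$ computed in Appendix C, together with the observation that the vector structure of these eigenfunctions interacts with the symplectic matrix $J$ in a way that $J$ agrees with the scalar $\pm i$ on the oscillating components and disagrees on the exponentially-decaying ones. Starting from the jump-of-the-resolvent formulas (Theorems VI.1 and VI.2), each $h\in X^c$ admits a decomposition
$$h = \int_\omega^{+\infty} c_+(\eta)\,\Psi_+(\eta,\cdot)\,d\eta + \int_{-\infty}^{-\omega} c_-(\eta)\,\Psi_-(\eta,\cdot)\,d\eta,$$
with $|c_\pm(\eta)|$ bounded in terms of $\|h\|_{L^\infty_{w^{-1}}}$ through the same resolvent kernel bounds already used in proving the dispersive estimate Theorem VI.3.

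Writing $\Psi_\pm = \Psi_\pm^{\mathrm{exp}} + \Psi_\pm^{\mathrm{osc}}$ according to the three-term form in Appendix VI.C.2, the direct linear-algebra identities $J\bigl(\begin{smallmatrix}1\\ i\end{smallmatrix}\bigr) = i\bigl(\begin{smallmatrix}1\\ i\end{smallmatrix}\bigr)$ and $J\bigl(\begin{smallmatrix}1\\ -i\end{smallmatrix}\bigr) = -i\bigl(\begin{smallmatrix}1\\ -i\end{smallmatrix}\bigr)$ imply
$$J\Psi_+(\eta,\cdot) = i\Psi_+(\eta,\cdot) - 2i\Psi_+^{\mathrm{exp}}(\eta,\cdot), \qquad J\Psi_-(\eta,\cdot) = -i\Psi_-(\eta,\cdot) + 2i\Psi_-^{\mathrm{exp}}(\eta,\cdot).$$
Since $\Pi^+ + \Pi^- = P^c$ and $h=P^c h$, subtracting these identities, the oscillating parts cancel and
$$[P^c J - i(\Pi^+ - \Pi^-)]h = -2iP^c\!\!\int_\omega^{+\infty}\!\! c_+(\eta)\,\Psi_+^{\mathrm{exp}}(\eta,\cdot)\,d\eta + 2iP^c\!\!\int_{-\infty}^{-\omega}\!\! c_-(\eta)\,\Psi_-^{\mathrm{exp}}(\eta,\cdot)\,d\eta.$$

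Now each $\Psi_\pm^{\mathrm{exp}}(\eta,x)$ is a scalar multiple of $e^{-\sqrt{\omega+|\eta|}|x|}/(4\pi|x|)$ times a fixed vector, and the decay rate is uniformly bounded below by $\sqrt{2\omega}$. Hence the integrands are pointwise majorized by a kernel of the form $C\,e^{-\sqrt{2\omega}|x|}/|x|$, and such a kernel satisfies $\|w\,e^{-\sqrt{2\omega}|\cdot|}/|\cdot|\|_{L^1(\R^3)}<+\infty$, because $w(x)=1+1/|x|$ is locally integrable in $\R^3$ and the exponential provides the control at infinity. The finite-rank corrections in $P^c=I-P^0-P^1$ only involve pairings of $h$ against fixed exponentially decaying functions times exponentially decaying output vectors, so they are also $L^1_w$-bounded by $\|h\|_{L^\infty_{w^{-1}}}$ via the $L^1_w$--$L^\infty_{w^{-1}}$ duality. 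The main obstacle is not the algebraic cancellation, which is immediate once the eigenvector structure of Appendix C is in hand, but the quantitative control of the spectral weights $c_\pm(\eta)$ in the desired dual norms; this requires the same resolvent analysis used to prove Theorem VI.3 and crucially the absence of threshold resonance for $\sigma\in(1/\sqrt 2,1)$, so that $W(\lambda^2)\neq 0$ throughout $\mathcal{C}_\pm$ and the denominators in the formulas for $A_\pm,C_\pm,D_\pm$ remain uniformly bounded away from zero.
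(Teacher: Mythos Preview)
Your algebraic observation is exactly the one the paper uses: on $\mathcal C_+$ the combination $J-iI$ annihilates the $\binom{1}{i}$ direction, so only the exponentially decaying piece of the generalized eigenfunction survives. The paper, however, does not pass through an eigenfunction expansion of $h$ at all. It writes
\[
P^cJ-i(\Pi^+-\Pi^-)=\frac{1}{2\pi i}\int_{\mathcal C_+}(R(\lambda+0)-R(\lambda-0))(J-iI)\,d\lambda+\{\mathcal C_-\text{ term}\},
\]
and then works directly with the integral kernel from Theorem~\ref{ris}. The convolution block of $R(\lambda)(J-iI)$ is continuous across $\mathcal C_+$ (this is precisely your cancellation, seen at the level of the kernel), and for the rank-one block one computes $(\Lambda_1+\Sigma_2)(x,y)$ explicitly, performs the $\lambda$-integral, and after one integration by parts obtains the factored bound
\[
|K(x,y)|\le C\,\frac{e^{-\sqrt{2\omega}\,|x|}}{|x|}\,\frac{e^{-\sqrt{2\omega}\,|y|}}{|y|},
\]
from which the $L^1_w\!\to\!L^\infty_{w^{-1}}$ estimate is immediate by duality.

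The genuine gap in your write-up is the step you yourself flag as the ``main obstacle''. In your organization you need a pointwise-in-$\eta$ bound $|c_\pm(\eta)|\le C\|h\|_{L^\infty_{w^{-1}}}$ together with $\eta$-integrability of $|c_\pm(\eta)A(\eta)|$. But $c_\pm(\eta)$ is a pairing of $h$ against the \emph{full} dual generalized eigenfunction, whose oscillating part $e^{\pm i\sqrt{\eta-\omega}\,|y|}/|y|$ is not in $L^1_w$; for $h\in L^\infty_{w^{-1}}$ this pairing is not even absolutely convergent, so no pointwise bound on $c_\pm(\eta)$ is available. What actually saves the estimate is that \emph{after} multiplying by the surviving factor $A(\eta)\,G_{\omega+\eta}(x)$ and integrating in $\eta$, the resulting $(x,y)$-kernel is exponentially localized in \emph{both} variables; the $y$-localization comes from the $G_{\omega-i\lambda}(y)$ factor in the rank-one piece of the resolvent, not from any bound on $c_\pm$. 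In other words, one must integrate in $\eta$ first and read off the product kernel, which is exactly the paper's computation. Your outline becomes a complete proof once you replace ``control of $c_\pm(\eta)$'' by the direct kernel bound above.
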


\begin{proof}
From the definitions of the operators $P^c$ and $\Pi^{\pm}$ one gets
$$P^c J -i(\Pi^+ -\Pi^-) = \Pi^+ (J -iI) +\Pi^- (J +iI) =$$
$$= \frac{1}{2\pi i} \left[ \int_{\mathcal{C}^+} (R(\lambda +0) -R(\lambda -0))
(J -iI) d\lambda +\int_{\mathcal{C}^-} (R(\lambda +0) -R(\lambda
-0)) (J +iI) d\lambda \right].$$

\noindent We will estimate just the first integral because the
second one can be handled in the same way. Exploiting the explicit
form of the resolvent \eqref{ris} it follows that
$$R(\lambda) (J -iI) = \left( i\lambda \mathcal{G}_{\lambda^2}*
+\Gamma_{\lambda^2}* \right) \left[
  \begin{array}{cc}
    1 & i \\
    -i & 1 \\
  \end{array}
\right] +\frac{4\pi}{D(\lambda^2)} \left[
  \begin{array}{cc}
    \Lambda_1 +\Sigma_2 & i(\Lambda_1 +\Sigma_2) \\
    -i(\Lambda_2 +\Sigma_1) & \Lambda_2 +\Sigma_1 \\
  \end{array}
\right] =$$
$$= R_*(\lambda) (J -iI) +R_m(\lambda) (J -iI),$$

\noindent where $R_*$ and $R_m$ correspond to the convolution term
of the resolvent and the multiplicative term.

\noindent Note that
$$i\lambda \mathcal{G}_{\lambda^2}(x-y) +\Gamma_{\lambda^2}(x-y) = 2 G_{\omega
-i\lambda}(x-y) = \frac{e^{i\sqrt{-\omega +i\lambda} |x-y|}}{2\pi
|x-y|}$$

\noindent is continuous on $\mathcal{C}_+$. Hence, the integral on
$\mathcal{C}_+$ of the convolution addends vanishes.

\noindent Let us now consider the multiplicative addends in the
integral on $\mathcal{C}_+$. From the explicit formulas for
$\Lambda_1$ and $\Sigma_2$ given in Proposition \ref{ris} one can
compute
$$(\Lambda_1 +\Sigma_2)(x,y) = 8\pi (\alpha_2 -\alpha_1) G_{\omega
-i\lambda}(y) G_{\omega +i\lambda}(x) +[8\pi (\alpha_2 +\alpha_1)
-4i \sqrt{-\omega -i\lambda}] G_{\omega -i\lambda}(y) G_{\omega
-i\lambda}(x) =$$
$$= 4\sigma \sqrt{\omega} \frac{e^{i\sqrt{-\omega +i\lambda} |y|}}{4\pi
|y|} \frac{e^{i\sqrt{-\omega -i\lambda} |x|}}{4\pi |x|} -[4(\sigma
+1) \sqrt{\omega} -4i \sqrt{-\omega -i\lambda}]
\frac{e^{i\sqrt{-\omega +i\lambda} (|x| +|y|)}}{(4\pi)^2 |x||y|}.$$

\noindent Denote
$$D_{\pm}(\lambda^2) = D((\lambda \pm 0)^2).$$

\noindent Then it follows
$$\int_{\mathcal{C}^+} [(R_m(\lambda +0) -R_m(\lambda -0))
(J -iI)]_{1,1} d\lambda =$$
$$=\int_{\mathcal{C}^+} \frac{\sigma \sqrt{\omega} e^{i \sqrt{-\omega +i\lambda}
|y|} e^{-i \sqrt{-\omega -i\lambda} |x|} +((\sigma +1) \sqrt{\omega}
-i\sqrt{-\omega -i\lambda}) e^{i\sqrt{-\omega +i\lambda} (|x|
+|y|)}}{\pi |x| |y| D_+(\lambda^2)} d\lambda +$$
$$-\int_{\mathcal{C}^+} \frac{\sigma \sqrt{\omega} e^{i \sqrt{-\omega +i\lambda}
|y|} e^{i \sqrt{-\omega -i\lambda} |x|} +((\sigma +1) \sqrt{\omega}
+i\sqrt{-\omega -i\lambda}) e^{i\sqrt{-\omega +i\lambda} (|x|
+|y|)}}{\pi |x| |y| D_-(\lambda^2)} d\lambda.$$

\noindent If we compute the change of variable $k = \sqrt{-\omega
-i\lambda}$ in the first integral of the last equality, and $k =
-\sqrt{-\omega -i\lambda}$ in the second one, then one has
$$\left| \int_{\mathcal{C}^+} [(R_m(\lambda +0) -R_m(\lambda -0))
(J -iI)]_{1,1} d\lambda \right| =$$
$$=\left| \frac{4i}{\pi |x| |y|} \left( \int_{-\infty}^{+\infty} \sigma
\sqrt{\omega} k \frac{e^{-\sqrt{k^2 +2\omega} |y|}}{D(k)} e^{-ik|x|}
dk +\int_{-\infty}^{+\infty} 2ik^2 \frac{e^{-\sqrt{k^2 +2\omega}
(|x| +|y|)}}{D(k)} dk \right) \right| \leq$$
$$\leq C \frac{e^{-\sqrt{2\omega}|y|}}{|y| |x|} \min \left\{ \frac{1}{|x|},
e^{-\sqrt{2\omega} |x|} \right\} \leq C
\frac{e^{-\sqrt{2\omega}|y|}}{|y|}
\frac{e^{-\sqrt{2\omega}|x|}}{|x|},$$

\noindent where the first inequality is obtained integrating by
parts both integrals.

\noindent The integral of the other three elements of the matrix 
operator $(R_m(\lambda +0) -R_m(\lambda -0)) (J -iI)$ can be
estimated in the same way and this implies the statement of the
lemma.
\end{proof}



\end{document}